\documentclass[12pt]{article}
\usepackage{amsfonts}
\usepackage{amsmath}
\allowdisplaybreaks[4]
\usepackage{amssymb}
\usepackage{amsthm}
\usepackage{geometry}
\usepackage{indentfirst}
\usepackage{graphicx} 
\usepackage{natbib} \bibpunct{(}{)}{;}{a}{,}{,} 
\usepackage[colorlinks=true,urlcolor=blue,citecolor=blue,linkcolor=blue,bookmarks=true]{hyperref}
\usepackage{appendix}
\usepackage{mathrsfs}
\usepackage{enumerate}
\usepackage{tikz}
\usetikzlibrary{arrows.meta, positioning, bending}
\usepackage{comment}
\usepackage{caption}
\usepackage{subcaption, dsfont} 

\usepackage{algorithm} 
\usepackage{algpseudocode} 
\usepackage{breqn}

\newtheorem{theorem}{Theorem}[section]

\newtheorem{lemma}{Lemma}[section]
\newtheorem{proposition}{Proposition}[section]
\newtheorem{problem}{Problem}[section]
\newtheorem{definition}{Definition}[section]
\newtheorem{example}{Example}[section]

\newtheorem{assumption}{Assumption}[section]

\theoremstyle{remark}

\renewcommand{\cite}{\citet}

\allowdisplaybreaks 

\def\qed{ \hfill \vrule height 6pt width 6pt depth 0pt\newline}

\renewenvironment{proof}{{\noindent\textbf{Proof:}}}{\hfil \qed}
\theoremstyle{definition}

\newcommand{\E}{\mathbb{E}}
\newcommand{\PP}{\mathbb{P}}

\newcommand{\Q}{\mathbb{Q}}

\newcommand{\rmnum}[1]{\romannumeral #1}
\newcommand{\Rmnum}[1]{\uppercase\expandafter{\romannumeral #1\relax}}

\title{\bf Distributionally Robust Insurance under Bregman-Wasserstein Divergence}

\author{Wenjun Jiang\thanks{Department of Mathematics and Statistics, University of Calgary, Calgary, AB, T2N 1N4, Canada. Email: \url{wenjun.jiang@ucalgary.ca}.} \and Qingqing Zhang\thanks{School of Finance, Shanghai University of International Business and Economics, Shanghai 201600, P.R. China. Email: \url{20250036@suibe.edu.cn}.} \and Yiying Zhang\thanks{Department of Mathematics, Southern University of Science and Technology, Shenzhen 518055, Guangdong Province, P.R. China. Email: \url{zhangyy3@sustech.edu.cn.}}}

\date{\today}

\begin{document}

\maketitle

\begin{abstract}
This paper investigates two optimal insurance contracting problems under distributional uncertainty from the perspective of a potential policyholder, utilizing a Bregman-Wasserstein (BW) ball to characterize the ambiguity set of loss distributions. Unlike the $p$-Wasserstein distance, BW divergence enables asymmetric penalization of deviations from the benchmark distribution. The first problem examines an insurance demand model where the policyholder adopts an $\alpha$-maxmin preference with Value-at-Risk (VaR). We derive the optimal indemnity function in closed form and study, both analytically and numerically, how the asymmetry inherent in BW divergence influences the optimal indemnity structure. The second problem employs a robust optimization framework, where the policyholder aims to secure robust insurance indemnity by minimizing the worst-case convex distortion risk measure while adhering to a guaranteed VaR constraint. In this context, we provide explicit characterizations of both the optimal indemnity and the worst-case distribution in closed form through a combined approach using the Lagrangian method and modification arguments. To illustrate the practical implications of our theoretical findings, we include a concrete example based on Tail Value-at-Risk (TVaR).

\noindent
    \\[1mm]
    \noindent \textbf{Keywords:} Distributional uncertainty; optimal insurance; Bregman-Wasserstein divergence; $\alpha$-maxmin model; guaranteed performance; VaR; distortion risk measures. \\[1mm]
    \noindent \textbf{JEL Classification Codes:} C61, G22, G32\\
  \noindent \textbf{MSC Codes:} 62P05, 91B30

\end{abstract}

\section{Introduction}\label{sec:intro}
Insurance serves as an efficient risk‑hedging tool that protects individuals from suffering catastrophic financial losses. An insurance contract is typically characterized by an indemnity function and a premium, where the former specifies the compensation paid to the insurance buyer as a function of the incurred loss, while the latter is the payment made by the insurance buyer to the seller before the contract takes effect. In most practical settings, the premium is determined according to the chosen indemnity function, which simplifies the optimal insurance design problem to the selection of an appropriate indemnity structure. The seminal works \cite{borch1960attempt} and \cite{arrow1974optimal} established the optimality of excess-of-loss indemnity function when insurance buyers aim to minimize the variance of their terminal losses or maximize the expected utility (EU) of their terminal wealth under the expected-value premium principle. Since then, extensive research has been devoted to extending these classical results into more general frameworks. For a comprehensive overview of recent developments on optimal insurance design along various directions, we refer readers to, for example, \cite{albrecher2017reinsurance} and \cite{cai2020optimal}.

A fundamental assumption adopted in most of the literature on optimal insurance contracting is that the insurance buyer possesses perfect knowledge of the loss distribution. However, this assumption is overly idealistic in practice, given the limited or even scarce loss data available to the buyer. The past few years have witnessed growing interest in extending the classical optimal insurance framework, which assumes the loss distribution is known a priori, to settings where the buyer faces ambiguity regarding this distribution, yielding the so-called distributionally robust insurance design problems, which we simply refer to as robust insurance problems in this paper\footnote{Another major strand of the robust insurance literature is concerned with uncertainty on buyer's preferences; see, for example, \cite{wang2025preference} and \cite{boonen2025robust}.}.

The existing literature on robust insurance problems can be categorized by how the uncertainty set for the loss distribution is constructed. Standard statistical approaches, including parametric specification, parameter estimation, and hypothesis testing, typically produce a finite number of plausible loss distributions. In such a setting, \cite{asimit2017robust, asimit2019optimal} studied the robust insurance contract that minimizes the insurance buyer's terminal risk exposure, where the buyer is assumed to be extremely ambiguity averse and only considers the worst-case distribution among the finite choices. \cite{jiang2020optimal} studied a similar problem under the smooth ambiguity model, which shows that ambiguity aversion results in a distorted probability distribution over the set of possible models with a bias in favor of the model that yields a larger risk.

The uncertainty set that contains infinitely many plausible loss distributions is often constructed by using the moment-based or distance-based approaches. In the former line of research, \cite{liu2022distributionally} and \cite{xie2023distributionally} are the two representative works that studied the robust insurance problem for excess-of-loss indemnity functions under complete knowledge of the first two moments of the loss, where the insurance buyer is assumed to hold Value-at-Risk (VaR), Tail Value-at-Risk (TVaR) or expectile preference. In the latter line of research, a substantial body of literature focuses on Wasserstein balls centered at a benchmark distribution. The adoption of the Wasserstein distance originates from optimal transport theory \citep{villani2009optimal} and has been shown to offer several advantages over alternative distance measures \citep{gao2023distributionally}. For the optimal insurance problems, \cite{cai2024worst, cai2026worst} studied the worst-case risk assessment of limited- and stop-loss transform of variables when the uncertainty set is a Wasserstein ball with and without information of the first two moments of the variables. \cite{boonen2024robust, boonen2025distributionally} and \cite{jiang2025optimal} studied the robust insurance problem without imposing a parametric form on the indemnity function. In their work, the buyer minimizes a risk evaluated by a convex distortion risk measure or VaR, with the uncertainty set defined as either an $L^k$ or a $k$-Wasserstein ball. We also refer interested readers to, for example, \cite{birghila2021} for numerical approaches to solving robust insurance problems under max-min EU models, and to \cite{liu2022distributionally} for the study of inf-convolution problems under robust coherent distortion risk measures and VaR.

This paper aims to complement the existing studies of robust insurance problems by considering a new uncertainty set---the Bregman-Wasserstein (BW) ball. The BW divergence possesses distinct features compared to the popular Wasserstein distance, including asymmetry and differing penalties for positive versus negative deviations from the benchmark distribution. Such an uncertainty set has been employed by \cite{pesenti2025optimal} to solve a robust portfolio choice problem. We investigate two problems in this paper. The first problem considers an ambiguity-averse decision maker (DM) whose preference is modeled by a VaR-based $\alpha$-maxmin criterion. We derive explicit expressions for the worst- and best-case VaR of the loss when its distribution belongs to a BW ball, as well as for the optimal indemnity function. We find that for a more ambiguity-averse DM, increased sensitivity to positive deviations from the benchmark quantile function reduces insurance demand. In contrast, greater sensitivity to negative deviations does not consistently influence demand. This result can be reversed when the DM is ambiguity-seeking. 

In the second problem, we consider the robust optimal insurance problem when the DM adopts a general convex distortion risk measure and the distributional  uncertainty set is characterized by a BW ball under a constraint placed on the worst-case VaR of the DM's terminal loss, which is referred to as the guaranteed VaR performance model. This model is conceptually aligned with the robust mean-variance framework of \cite{blanchet2022distributionally}, in which the worst-case portfolio variance is minimized subject to a constraint on the worst-case expected return. We derive the optimal indemnity function and the worst-case survival function in closed form through a combined approach using the Lagrangian method and modification arguments, which contrasts markedly in techniques with existing literature on robust insurance problems. Our results indicate that as the guaranteed VaR constraint becomes more stringent, the DM is increasingly inclined to exaggerate tail losses while adhering to the benchmark distribution for small and moderate losses.


The rest of this paper is structured as follows. Section \ref{sec:preliminaries_and_setup} briefly reviews the distortion risk measure and the BW divergence, and sets up the two main problems that will be addressed in this paper. Sections \ref{sec:sol-prob1} and \ref{sec:sol-prob2} present the solutions to the two main problems with economic insights provided to show how BW ball and guaranteed performance affect the DM's demand for insurance. In particular, Section \ref{sec:concrete} presents a concrete example to show further implications of our main results. Section \ref{sec:conclusion} concludes the paper and gives several directions for future research. All the proofs are delegated to the appendix.

\section{Preliminaries and problem formulation}\label{sec:preliminaries_and_setup}
Let $(\Omega,\mathcal{F})$ be a measurable space, where $\Omega$ is the sample space and $\mathcal{F}$ is a $\sigma$-algebra that contains all the subsets of $\Omega$. Let $\PP$ be a probability measure and $Z$ be a random variable defined on $(\Omega,\mathcal{F})$, we denote by $F$ the cumulative distribution function (CDF) of $F$ under $\PP$, i.e. $F(z)=\PP(Z\le z)$, and by $S$ the survival function of $Z$, i.e. $S(z)=1-F(z)$. When the CDF of $Z$ is a particular $F$, we use the notation $Z\sim F$. Throughout the paper, we also adopt the following notations: $(x)_+=\max\{x,0\}$, $x\wedge y=\min\{x,y\}$ and $x\vee y=\max\{x,y\}$. The indicator function $\mathds{1}_A(x)$ is equal to $1$ if $x\in A$ and $0$ otherwise.

Before presenting the main problems to be studied in this paper, we give a brief review of the distortion risk measure and Bregman-Wasserstein divergence in the following two subsections.

\subsection{Distortion risk measures}
The distortion risk measure stems from Yaari's dual theory of choice \citep{yaari1987dual} and evaluates a risk based on its distorted expectation. For a random variable $Z$ built on $(\Omega,\mathcal{F},\PP)$, the distortion risk measure of $Z$ is defined as 
\begin{equation}
    \rho_g^\PP(Z)=\int_0^\infty g(\PP(Z>z))dz+\int_{-\infty}^0[g(\PP(Z>z))-1]dz,
\end{equation}
where $g$ is called the \emph{distortion function}, which is increasing over its domain $[0,1]$ and satisfies $g(0)=0$ and $g(1)=1$. The popularity of the distortion risk measure is due to its nice mathematical properties, such as monotonicity, positive homogeneity, translation invariance, and comonotonic additivity. The predominant risk measures in finance and insurance, VaR and TVaR, are both special cases of the distortion risk measure. For a random variable $Z$, its VaR at the confidence level $\alpha\in(0,1)$ is defined as its left quantile at $\alpha$: $\mathrm{VaR}_\alpha(Z)=\inf\left\{z:\ \PP(Z\le z)\ge\alpha\right\}$, and its TVaR at the same confidence level is defined as $\mathrm{TVaR}_\alpha(Z)=\frac{1}{1-\alpha}\int_\alpha^1\mathrm{VaR}_s(Z)\,ds$. The distortion functions of $\mathrm{VaR}_\alpha(\cdot)$ and $\mathrm{TVaR}_\alpha(\cdot)$ are $g_{V,\alpha}(t)=\mathds{1}_{(1-\alpha,1]}(t)$ and $g_{T,\alpha}(t)=\frac{t}{1-\alpha}\wedge 1$ respectively.

If $g$ is concave, the distortion risk measure is convex. The convex distortion risk measure is coherent in the sense of \cite{artzner1999coherent} and preserves the increasing convex order between random variables. In the rest of this paper, we will particularly focus on VaR and convex distortion risk measures, which are common choices for modeling the preferences of decision makers. For more examples and applications of distortion risk measures, we refer the interested readers to \cite{denuit2006actuarial}.

\subsection{Bregman--Wasserstein divergence}
\label{sec:BW_uncertainty} 
The BW divergence is the optimal transport cost when the cost function in the Monge-Kantorovich problem is replaced by the Bregman divergence. Although the literature is now rich with introduction of the BW divergence, we provide a brief review in this section to make the paper self-contained. We recall the Bregman divergence in the following.

\begin{definition}[Bregman divergence]
Let $\varphi : \mathbb{R} \to \mathbb{R}$ be a strictly convex and continuously differentiable function, 
called a \emph{Bregman generator}. The Bregman divergence with generator $\varphi$ 
is defined as
\begin{equation*}
    B_{\varphi}(x,y)
    := \varphi(x) - \varphi(y) - \varphi'(y)(x - y), 
    \quad x, y \in \mathbb{R},
\end{equation*}
where $\varphi'(x) := \tfrac{d}{dx}\varphi(x)$ denotes the derivative of $\varphi$.
\end{definition}

The Bregman divergence includes some well-known distances and divergences as special cases. For example, if letting $\varphi(x)=x^2$, the Bregman divergence becomes the squared distance $B_\varphi(x,y)=(x-y)^2$ (or squared Euclidean distance for higher-dimensional space). When focusing on the discrete distributions, e.g., $\boldsymbol{x}=(x,\dots,x_d)\in[0,1]^d$ and $\boldsymbol{y}=(y_1,\dots,y_d)\in[0,1]^d$, if letting $\varphi(\boldsymbol{x})=\sum_{i=1}^dx_i\ln x_i$, the Bregman divergence becomes the Kullback-Leibler divergence. We refer the interested readers to Table 1 of \cite{banerjee2005clustering} for more special cases of the Bregman divergence. In the statistical field, the most often used statistical functional is the mean. It can be easily verified that the mean is elicitable by the Bregman divergence, and \cite{gneiting2011making} shows that the scoring function being a Bregman divergence is not only sufficient but also necessary for doing so. 

We are now ready to give the definition of the BW divergence that is built on the space of CDFs, which is simply the minimum cost resulting from solving the following Monge-Kantorovich optimal transport problem 
\begin{equation*}
    \mathscr{B}_{\varphi}[F_1,F_2] := 
    \inf_{X\sim F_1, Y\sim F_2}\ \E[B_\varphi(X,Y)].
\end{equation*}
It is shown by \cite{pensenti2024optimal} that the infimum is attained by the comonotonic coupling of $X$ and $Y$, i.e., $(X,Y)\overset{d}{=}(F_1^{-1}(U), F_2^{-1}(U))$ with $U$ being a standard uniform random variable. Consequently, the BW divergence from $F_1$ to $F_2$ can be written in terms of their quantile functions
\begin{equation*}
\begin{aligned}
    \mathscr{B}_{\varphi}[F_1,F_2]
    &=\inf_{X\sim F_1,Y\sim F_2}\ \E[B_\varphi(X,Y)] \\
    &=\E[B_\varphi(F_1^{-1}(U),F_2^{-1}(U))] \\
    &=\int_0^1 B_{\varphi}\big(F_1^{-1}(t), F_2^{-1}(t)\big)\,dt \nonumber \\
    &= \int_0^1 \big(
        \varphi(F_1^{-1}(t))
        - \varphi(F_2^{-1}(t))
        - \varphi'(F_2^{-1}(t))
        \big(F_1^{-1}(t) - F_2^{-1}(t)\big)
        \big)\,dt.
    \label{eq:bw_divergence}
\end{aligned}
\end{equation*}
Note that $\mathscr{B}_\varphi$ reduces to the 2-Wasserstein distance when $\varphi(x) = x^2$. For additional geometric properties and more detailed interpretations of the BW divergence, we refer the reader to \cite{kainth2025bregman}. In addition to the classical quantile representation, $\mathscr{B}_\varphi[F_1,F_2]$ admits another survival-function-based representation under mild conditions, which is summarized in the following lemma.
\begin{lemma}\label{lem:survival-rep-BW}
    If the support of both $F_1$ and $F_2$ is $[0,M]$ and $\sup_{x\in [0,M]}|\varphi'(x)| < \infty$, where $0<M<\infty$, then
\begin{equation}\label{eq:bw_divergence2}
\begin{aligned}
    \mathscr{B}_{\varphi}[F_1,F_2]
    =&\int_0^M [\varphi'(x)-\varphi'(0)]S_1(x)\,dx+\int_0^M \varphi''(x)xS_2(x)\,dx\\
    &-\int_0^M\int_0^{M}\varphi''(y)S_2(y)\wedge S_1(x)\,dy\,dx.
\end{aligned}
\end{equation}
\end{lemma}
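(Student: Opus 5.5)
We start from the quantile representation
\[
\mathscr{B}_\varphi[F_1,F_2]=\int_0^1\big(\varphi(F_1^{-1}(t))-\varphi(F_2^{-1}(t))-\varphi'(F_2^{-1}(t))(F_1^{-1}(t)-F_2^{-1}(t))\big)dt
\]
and rewrite the integrand in a form that can be integrated in $x$ instead of $t$. Write $X=F_1^{-1}(U)$, $Y=F_2^{-1}(U)$, both taking values in $[0,M]$. Since $\varphi'$ is bounded and $\varphi$ is $C^1$ (we assume $\varphi''$ exists, as the formula requires), the Bregman divergence can be written as
\[
B_\varphi(X,Y)=\int_Y^X(\varphi'(x)-\varphi'(Y))\,dx .
\]
This can be regrouped as $[\varphi(X)-\varphi(0)]-[\varphi(Y)-\varphi(0)]-\varphi'(Y)X+\varphi'(Y)Y$. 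We then treat the four terms separately, taking expectations.

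\textbf{Step 1: the first term.} We have
\[
\E[\varphi(X)-\varphi(0)]=\E\int_0^M\varphi'(x)\mathds 1_{\{X>x\}}dx=\int_0^M\varphi'(x)S_1(x)\,dx
\]
by Fubini. Fubini applies because $\varphi'$ is bounded and $M<\infty$.

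\textbf{Step 2: the terms involving only $Y$.} Next consider $\E[\varphi'(Y)Y-\varphi(Y)+\varphi(0)]$. Its integrand is $\psi(Y)$ with $\psi(y)=y\varphi'(y)-\varphi(y)+\varphi(0)$, which satisfies $\psi(0)=0$ and $\psi'(y)=y\varphi''(y)$. Hence
\[
\E\psi(Y)=\int_0^M y\varphi''(y)S_2(y)\,dy,
\]
which is the second term of \eqref{eq:bw_divergence2}. The derivative $\varphi''$ is nonnegative by convexity, so Tonelli applies even if $\varphi''$ is not bounded. Integrability follows because $\int_0^M\varphi''=\varphi'(M)-\varphi'(0)<\infty$.

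\textbf{Step 3: the cross term.} The remaining cross term is $-\E[\varphi'(Y)X]$. Write
\[
\varphi'(Y)=\varphi'(0)+\int_0^M\varphi''(y)\mathds 1_{\{Y>y\}}dy
\quad\text{and}\quad
X=\int_0^M\mathds 1_{\{X>x\}}dx .
\]
Then
\[
\E[\varphi'(Y)X]=\varphi'(0)\int_0^M S_1(x)\,dx+\int_0^M\!\!\int_0^M\varphi''(y)\,\PP(Y>y,X>x)\,dy\,dx .
\]
The $\varphi'(0)$ part combines with Step 1 to give $\int_0^M[\varphi'(x)-\varphi'(0)]S_1(x)\,dx$.

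\textbf{Step 4: the joint probability (main point).} The key step is to identify $\PP(Y>y,X>x)$ under the comonotonic coupling. We have $\{F_1^{-1}(U)>x\}=\{U>F_1(x)\}$ and similarly $\{F_2^{-1}(U)>y\}=\{U>F_2(y)\}$, by the standard equivalence $F^{-1}(u)\le x\iff u\le F(x)$ for the left-continuous quantile. Therefore
\[
\PP(Y>y,X>x)=\PP\big(U>F_1(x)\vee F_2(y)\big)=S_1(x)\wedge S_2(y).
\]
Substituting this gives the double integral in \eqref{eq:bw_divergence2}, with the minus sign.

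\textbf{Remaining checks.} The main care needed is justifying each Fubini/Tonelli interchange. The nonnegative pieces are handled by Tonelli, and the signed ones by the boundedness of $\varphi'$ on $[0,M]$. One must also ensure all four pieces are finite, so that splitting the expectation is legitimate. Boundedness of $\varphi'$ gives this, since it bounds $\varphi$, $\psi$ and $\int_0^M\varphi''$.
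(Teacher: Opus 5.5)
Your proof is correct and shares the paper's skeleton. Both expand the quantile representation under the comonotone coupling, convert each piece to survival-function form by layer-cake and Fubini/Tonelli, and use $\PP(F_2^{-1}(U)>y,\,F_1^{-1}(U)>x)=S_2(y)\wedge S_1(x)$.

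The genuine difference is the cross term $\E[\varphi'(Y)X]$. The paper handles it through the signed-product tail identity of Lemma~\ref{lem:EXY}, splitting $\varphi'(F_2^{-1}(U))$ into positive and negative parts. It then runs three separate cases according to the sign of $\varphi'$ on $[0,M]$: $\varphi'(0)>0$; $\varphi'(x_0)=0$ for some $x_0$; $\varphi'(M)<0$. In each case it substitutes $x=\varphi'(s)$ and recombines to the common answer.

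You instead write $\varphi'(Y)=\varphi'(0)+\int_0^M\varphi''(y)\mathds{1}_{\{Y>y\}}\,dy$. This holds whatever the sign of $\varphi'(0)$, and the remaining integrand is nonnegative because $\varphi'$ is increasing and $Y\ge 0$. A single Tonelli step then finishes, so you need neither the signed-product lemma nor the case split. Your computation is exactly the paper's first case, recognized as the general one.

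Similarly, packaging the two $Y$-only terms into $\psi(y)=y\varphi'(y)-\varphi(y)+\varphi(0)$, with $\psi'(y)=y\varphi''(y)$, replaces two separate computations in the paper. The paper evaluates $\int_0^1\varphi(F_2^{-1}(t))\,dt$ and $\int_0^1\varphi'(F_2^{-1}(t))F_2^{-1}(t)\,dt$ individually and then cancels the $\varphi'(u)S_2(u)$ terms.

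The paper's route is more modular, since its lemma covers arbitrary signed products. Yours is shorter and uniform in the sign of $\varphi'$. Both arguments rest equally on $\varphi'(b)-\varphi'(a)=\int_a^b\varphi''(s)\,ds$, equivalently on reading $\varphi''(y)\,dy$ as $d\varphi'(y)$. You state this assumption explicitly; the paper leaves it implicit.
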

\noindent Typical examples of $\varphi$ that fulfill the condition of Lemma \ref{lem:survival-rep-BW} are, for instance, $\varphi(x)=x^p$ for $p>1$ and $\varphi(x)=(x+a)\ln(x+a)$ for $a>0$.

Notably, the BW divergence is non-symmetric and penalizes the positive and negative deviations from the benchmark distribution differently in the sense that $\mathscr{B}_\varphi[F_1,F_2]\neq \mathscr{B}_\varphi[F_2,F_1]$ and $\mathscr{B}_\varphi[F_2+\epsilon \delta, F_2]\neq\mathscr{B}_\varphi[F_2-\epsilon \delta, F_2]$,\footnote{Here $\epsilon$ and $\delta$ are chosen such that $F_2\pm\epsilon\delta$ are still CDFs.} as elaborated by \cite{pesenti2025optimal}. These properties, which are not possessed by symmetric distance metrics in probability space, are of great interest in decision-making problems. 

\subsection{Problem formulation}\label{sec:prob_formulation}
In this section, we set up the problems that will be studied in this paper. We focus on a one-period economy. Let $X$ now be a non-negative bounded loss faced by a potential policyholder (hereafter referred to as decision maker, or DM for short), whose support is a subset of $[0,M]$ with $0<M<\infty$. The DM is interested in purchasing insurance $(I,\pi(\cdot))$ from an insurer. We denote by $\PP$ and $\Q$ the probability measures reflecting the decision maker’s and the insurer’s subjective perceptions, respectively. For the insurance premium, the classic expected-value premium principle is adopted, i.e. $\pi(I(X))=(1+\theta)\E^\Q[I(X)]$ with $\theta>0$ being the safety loading factor, where $\E^\Q[\cdot]$ denotes the expectation evaluated under the probability measure $\Q$. 

Aligning with the literature, we assume that the indemnity function is chosen from the following class
\begin{equation}
    \mathcal{I}=\left\{I:[0,M]\mapsto[0,M]:\ I(0)=0,\ 0\le I(y)-I(x)\le y-x,\ \forall\ x,y\in[0,M]\right\}.
\end{equation}
The 1-Lipschitz continuity imposed by the class $\mathcal{I}$ is called the \emph{no-sabotage} condition as per \cite{huberman1983optimal} and \cite{carlier2003pareto}, under which the so-called \emph{ex post} moral hazard issues can be alleviated. The class $\mathcal{I}$ is rich and encompasses many well-known indemnity functions, such as the stop-loss, proportional and limited stop-loss functions. 

With the insurance contract, the DM's terminal loss becomes $X-I(X)+\pi(I(X))$. We assume that the DM is ambiguous about the underlying distribution of $X$ and employs the BW divergence to depict the uncertainty. In the following, we denote the uncertainty set of $X$'s distribution by
\begin{equation*}
   \mathcal{B}(F_0,\epsilon):=\left\{F:F\ \text{is the CDF of $X$, and}\ \mathscr{B}_\varphi[F,F_0]\le\epsilon\right\} 
\end{equation*}
with $\epsilon>0$, where $F_0$ is the benchmark distribution of $X$ under $\PP$. The asymmetric nature of the BW divergence (see Section \ref{sec:BW_uncertainty}) highlights the DM's inclination to favor positive or negative deviations relative to the benchmark distribution.

In this paper, we aim to address the following two problems. The notations $\mathrm{VaR}_\alpha^{F_i}$ and $\rho_g^{F_i}$ denote the corresponding risk measures evaluated under the CDF $F_i$. 
\begin{problem}[$\alpha$-maxmin VaR-based model]\label{Prob:main1}
    Let $\alpha\in(0,1)$ and $\kappa\in[0,1]$, solve 
    \begin{equation*}
     \inf_{I\in\mathcal{I}}\ \left\{\kappa\sup_{F_1\in\mathcal{B}(F_0,\epsilon)}\mathrm{VaR}^{F_1}_\alpha(X-I(X)+\pi(I(X)))+(1-\kappa)\inf_{F_2\in\mathcal{B}(F_0,\epsilon)}\mathrm{VaR}^{F_2}_\alpha(X-I(X)+\pi(I(X)))\right\}. 
    \end{equation*}
\end{problem}
\begin{problem}[Guaranteed VaR performance model]\label{Prob:main2}
    Let $\rho_g(\cdot)$ be a convex distortion risk measure and $0<A<\sup_{F\in\mathcal{B}(F_0,\epsilon)}\mathrm{VaR}^F_\alpha(X)$, solve
    \begin{equation*}
    \left\{
    \begin{aligned}
        \inf_{I\in\mathcal{I}}&\ \sup_{F_1\in\mathcal{B}(F_0,\epsilon)}\ \rho_g^{F_1}(X-I(X)+\pi(I(X))), \\
        \mbox{s.t.}&\ \sup_{F_2\in\mathcal{B}(F_0,\epsilon)}\mathrm{VaR}^{F_2}_\alpha(X-I(X)+\pi(I(X)))\le A.
    \end{aligned}
    \right.
    \end{equation*}
\end{problem}
Problem \ref{Prob:main1} assumes that the DM is an $\alpha$-maxmin VaR minimizer, where $\kappa$ captures the DM's ambiguity aversion level. The DM becomes more ambiguity averse when $\kappa\to 1$ and more ambiguity seeking when $\kappa\to 0$. Problem \ref{Prob:main2} blends two risk measures: the decision maker minimizes the worst-case convex distortion risk measure of the terminal loss, subject to a constraint on the worst-case VaR of the same loss (a guaranteed performance requirement), where $A$ is the acceptable worst-case VaR.

\section{The solution to Problem \ref{Prob:main1}}\label{sec:sol-prob1}
Following the idea of \cite{jiang2025optimal}, the inner problem of Problem \ref{Prob:main1} can be simplified as
\begin{align}
    &\kappa\sup_{F_1\in\mathcal{B}(F_0,\epsilon)}\mathrm{VaR}^{F_1}_\alpha(X-I(X)+\pi(I(X)))+(1-\kappa)\inf_{F_2\in\mathcal{B}(F_0,\epsilon)}\mathrm{VaR}^{F_2}_\alpha(X-I(X)+\pi(I(X))) \nonumber \\
    =\ &\kappa R\left(\sup_{F_1\in\mathcal{B}(F_0,\epsilon)}\mathrm{VaR}_\alpha^{F_1}(X)\right)+(1-\kappa)R\left(\inf_{F_2\in\mathcal{B}(F_0,\epsilon)}\mathrm{VaR}_\alpha^{F_2}(X)\right)+\pi(I(X)), \label{Prob:main1-v2}
\end{align}
where $R(x):=x-I(x)$ denotes the DM's retained loss function\footnote{Clearly, for any $I\in\mathcal{I}$, it must hold $R\in\mathcal{I}$.}.  Hence, solving Problem \ref{Prob:main1} requires first the evaluation of $\sup_{F_1\in\mathcal{B}(F_0,\epsilon)}\mathrm{VaR}_\alpha^{F_1}(X)$ and $\inf_{F_2\in\mathcal{B}(F_0,\epsilon)}\mathrm{VaR}_\alpha^{F_2}(X)$. The worst- and best-case VaR of $X$ are summarized in the following theorems.

\begin{theorem}\label{th4-1}
Let $\epsilon\in(0,\infty)$. If $\epsilon > \int_{\alpha}^{1} B_{\varphi}(M,F_0^{-1}(t))\,dt$, then $\sup_{F \in \mathcal{B}(F_0,\epsilon)} \mathrm{VaR}^F_{\alpha}(X)=M$. If 
$\epsilon \leq \int_{\alpha}^{1} B_{\varphi}(M,F_0^{-1}(t))\,dt$, then 
\begin{equation}\label{eqsup}
\sup_{F \in \mathcal{B}(F_0,\epsilon)} \mathrm{VaR}^F_{\alpha}(X)=\overline{V_\alpha}:= \inf \left\{ D \in \left( F_0^{-1}(\alpha), M \right] :
\int_{\alpha}^{F_0(D)} B_{\varphi}(D,F_0^{-1}(t))\, dt \geq \epsilon \right\}. 
\end{equation}
Furthermore, the supremum can never be attained in the sense that there does not exist any $F^{*} \in \mathcal{B}(F_0,\epsilon)$ such that $\mathrm{VaR}_{\alpha}^{F^{*}}(X)
= \overline{V_\alpha}$.
\end{theorem}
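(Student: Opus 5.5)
Throughout, work in quantile space: by the comonotonic representation, $\mathscr{B}_\varphi[F,F_0]=\int_0^1 B_\varphi(F^{-1}(t),F_0^{-1}(t))\,dt$. The requirement $\mathrm{VaR}^F_\alpha(X)\ge D$ is the requirement $F^{-1}(\alpha)\ge D$, since $F^{-1}$ is the left-continuous quantile. Because $F^{-1}$ is nondecreasing, this forces $F^{-1}(t)\ge D$ for all $t\ge\alpha$. We also use that $B_\varphi(x,y)\ge 0$, that it vanishes iff $x=y$, and that $x\mapsto B_\varphi(x,y)$ is increasing for $x\ge y$ (its derivative is $\varphi'(x)-\varphi'(y)>0$ by strict convexity).

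\textbf{Step 1: minimal cost of reaching level $D$.} For $D\in(F_0^{-1}(\alpha),M]$, define
\[
c(D):=\inf\{\mathscr{B}_\varphi[F,F_0]: F^{-1}(t)\ge D \text{ for all } t\in(\alpha,1]\}.
\]
The pointwise minimizer is $G_D^{-1}(t)=F_0^{-1}(t)\vee D$ for $t\in(\alpha,1]$ and $G_D^{-1}=F_0^{-1}$ on $(0,\alpha]$. This function is nondecreasing, so it is a valid quantile function with values in $[0,M]$, and it gives
\[
c(D)=\int_\alpha^{F_0(D)}B_\varphi(D,F_0^{-1}(t))\,dt .
\]
(For $t>F_0(D)$ we have $F_0^{-1}(t)\ge D$, so the integrand vanishes there.) However, $G_D^{-1}$ is not left-continuous at $\alpha$ when it jumps there. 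The true quantile of $G_D$ then satisfies $G_D^{-1}(\alpha)=F_0^{-1}(\alpha)<D$, so the condition must really be imposed on $[\alpha,1]$ including the point $\alpha$ itself. The infimum $c(D)$ is therefore not attained by any genuine quantile with $F^{-1}(\alpha)\ge D$. Instead, it is approached by quantiles equal to $D$ on $[\alpha-\delta,\alpha]$, at an extra cost that tends to $0$ as $\delta\downarrow 0$. Note that $c$ is continuous and strictly increasing in $D$: the integrand is increasing in $D$, and the interval of integration grows with $D$.

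\textbf{Step 2: identify the supremum.} The feasible levels are
\[
\{D: \exists F\in\mathcal{B}(F_0,\epsilon),\ F^{-1}(\alpha)\ge D\}=\{D: c(D)<\epsilon\}\cup\{D\le F_0^{-1}(\alpha)\}.
\]
Attaining level $D$ exactly costs strictly more than $c(D)$. The reason is that any $F$ with $F^{-1}(\alpha)\ge D$ has $F^{-1}\ge D$ on some interval $(\alpha-\delta,1]$, where $\delta>0$ follows from left-continuity. Since $F_0^{-1}<D$ near $\alpha$ (we have $F_0^{-1}(\alpha)<D$ and $F_0^{-1}$ is left-continuous), this adds a positive amount to the cost. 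Conversely, the $\delta$-approximations of Step 1 show that every $D$ with $c(D)<\epsilon$ is reachable.

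If $\epsilon>\int_\alpha^1 B_\varphi(M,F_0^{-1}(t))\,dt=c(M)$, then $M$ is reachable, so the supremum equals $M$. This holds because $F_0^{-1}\le M$, so $F_0(M)=1$.

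Otherwise $c(M)\ge\epsilon$. By continuity and monotonicity of $c$, together with $c(D)\to 0$ as $D\downarrow F_0^{-1}(\alpha)$, we get $\sup\{D:c(D)<\epsilon\}=\inf\{D:c(D)\ge\epsilon\}=\overline{V_\alpha}$.

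\textbf{Step 3: non-attainment.} Suppose some $F^*\in\mathcal{B}(F_0,\epsilon)$ had $F^{*-1}(\alpha)=\overline{V_\alpha}$. Step 2 gives $\mathscr{B}_\varphi[F^*,F_0]>c(\overline{V_\alpha})$. By continuity of $c$, we have $c(\overline{V_\alpha})=\epsilon$, which yields a contradiction. In the first case, attaining $M$ is consistent with the statement, which only claims non-attainment of $\overline{V_\alpha}$.

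\textbf{Main obstacle.} The main difficulty is the strict inequality in Step 2: showing that genuine left-continuous quantiles pay strictly more than $c(D)$, while approximations pay arbitrarily close to it. Both directions require care at $t=\alpha$, along with the continuity of $c$ near $F_0^{-1}(\alpha)$, which I would handle via dominated convergence.
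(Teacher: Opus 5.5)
Your route is the paper's: the same perturbed quantiles (equal to $D$ on $(\alpha-\delta,F_0(D)]$ and $F_0^{-1}$ elsewhere) give the lower bound, a pointwise Bregman comparison on $(\alpha,1]$ gives the upper bound, and an extra positive cost on a left neighbourhood of $\alpha$ gives non-attainment. Packaging this through $c(D)$ is fine. Writing $c(D)=\int_\alpha^1 B_\varphi(D\vee F_0^{-1}(t),F_0^{-1}(t))\,dt$ and using dominated convergence shows $c$ is continuous, which is slightly more than the right-continuity the paper uses. $c$ need not be strictly increasing: if $F_0\equiv\alpha$ on an interval to the right of $F_0^{-1}(\alpha)$, then $c\equiv 0$ there. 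You only use monotonicity, so this slip is harmless.

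The step that fails as written is your justification of the strict inequality in Step 2, and Step 3 rests on it. You claim that $F^{-1}(\alpha)\ge D$ plus left-continuity gives $F^{-1}\ge D$ on some $(\alpha-\delta,1]$. Left-continuity only says $F^{-1}(t)\to F^{-1}(\alpha)$ as $t\uparrow\alpha$. If $F^{-1}(\alpha)=D$ exactly and $F^{-1}$ is strictly increasing through $\alpha$ (e.g.\ a uniform distribution), then $F^{-1}(t)<D$ for every $t<\alpha$. That equality case is precisely $F^{*-1}(\alpha)=\overline{V_\alpha}$ in Step 3.

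The repair, which is the paper's $\delta_3,\xi_3$ argument, is to back off by a margin. Fix $\eta>0$ with $D-\eta>F_0^{-1}(\alpha)$. Left-continuity gives $\delta>0$ with $F^{-1}(t)>D-\eta$ on $(\alpha-\delta,\alpha]$. On that interval $F_0^{-1}(t)\le F_0^{-1}(\alpha)<D-\eta$ by monotonicity alone; left-continuity of $F_0^{-1}$ is irrelevant, and to the right of $\alpha$ it may jump above $D$. Since $x\mapsto B_\varphi(x,y)$ is increasing on $[y,\infty)$ and $B_\varphi(x,y)>0$ for $x\neq y$ by strict convexity, $(\alpha-\delta,\alpha]$ contributes at least $\int_{\alpha-\delta}^{\alpha}B_\varphi(D-\eta,F_0^{-1}(t))\,dt>0$. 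The interval $(\alpha,1]$ contributes at least $c(D)$, because $F^{-1}(t)\ge F^{-1}(\alpha)\ge D$ there. With this change the strict inequality holds for every $D>F_0^{-1}(\alpha)$. Combined with $c(\overline{V_\alpha})\ge\epsilon$, Step 3 then gives the contradiction.
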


\begin{theorem}\label{th4-2}
For any $\epsilon\in(0,\infty)$, 
\begin{equation}\label{eqinf}
\inf_{F \in \mathcal{B}(F_0,\epsilon)} \mathrm{VaR}^F_{\alpha}(X)= \underline{V_\alpha}:= \inf \left\{ D \in \left[0, F_0^{-1}(\alpha) \right) :
\int_{F_0(D)}^{\alpha} B_{\varphi}(D,F_0^{-1}(t))\, dt \le \epsilon \right\}. 
\end{equation}
Furthermore, the infimum can be attained by some $F^{*} \in \mathcal{B}(F_0,\epsilon)$ with $\mathrm{VaR}_{\alpha}^{F^{*}}(X)
= \underline{V_\alpha}$.
\end{theorem}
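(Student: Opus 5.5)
The proof works entirely with quantile functions. By the quantile representation of $\mathscr{B}_\varphi$, the ball $\mathcal{B}(F_0,\epsilon)$ consists of exactly those CDFs on $[0,M]$ whose quantile function $G=F^{-1}$ satisfies $\int_0^1 B_\varphi(G(t),F_0^{-1}(t))\,dt\le\epsilon$. Moreover $\mathrm{VaR}^F_\alpha(X)=G(\alpha)$ is the left quantile.

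\textbf{Reduction.} Suppose $F$ has $G(\alpha)\le D$ with $D<F_0^{-1}(\alpha)$. Since $G$ is nondecreasing, $G(t)\le D$ for all $t\le\alpha$. Recall that $B_\varphi(x,y)$ is strictly convex in $x$, vanishes at $x=y$, and is increasing in $|x-y|$ on each side of $y$. Hence for every $t\in(F_0(D),\alpha]$ we have $F_0^{-1}(t)>D\ge G(t)$, which gives $B_\varphi(G(t),F_0^{-1}(t))\ge B_\varphi(D,F_0^{-1}(t))$. Dropping the nonnegative contributions from the other $t$ yields the lower bound
\begin{equation*}
\mathscr{B}_\varphi[F,F_0]\ge h(D):=\int_{F_0(D)}^{\alpha}B_\varphi(D,F_0^{-1}(t))\,dt .
\end{equation*}
Conversely, for $D\in[0,F_0^{-1}(\alpha))$ define the candidate $G_D(t)=F_0^{-1}(t)\wedge D$ for $t\le\alpha$ and $G_D(t)=F_0^{-1}(t)$ for $t>\alpha$. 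This $G_D$ is nondecreasing and left-continuous, takes values in $[0,M]$, and so is a valid quantile function. Its divergence equals $h(D)$ exactly, and $G_D(\alpha)=D$. Therefore
\begin{equation*}
\inf_{F\in\mathcal{B}(F_0,\epsilon)}\mathrm{VaR}^F_\alpha(X)=\inf\{D\in[0,F_0^{-1}(\alpha)) : \text{some }F\text{ with }G(\alpha)\le D\text{ lies in the ball}\}.
\end{equation*}
Combined with the bound, this set equals $\{D: h(D)\le\epsilon\}$, giving \eqref{eqinf}. (If $F_0^{-1}(\alpha)=0$ the set is empty, the infimum is trivially $F_0^{-1}(\alpha)$, and $F_0$ itself attains it; I would note this convention.)

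\textbf{Monotonicity and continuity of $h$.} As $D$ decreases, the integration interval $(F_0(D),\alpha]$ grows and each integrand $B_\varphi(D,F_0^{-1}(t))$ increases. So $h$ is nonincreasing in $D$, and the admissible set is an interval $[\underline{V_\alpha},F_0^{-1}(\alpha))$ or $(\underline{V_\alpha},F_0^{-1}(\alpha))$. For attainment I need the left endpoint included, i.e.\ lower semicontinuity of $h$ from the right, so that $h(D_n)\le\epsilon$ with $D_n\downarrow \underline{V_\alpha}$ implies $h(\underline{V_\alpha})\le\epsilon$. Write $h(D)=\int_0^\alpha B_\varphi(D,F_0^{-1}(t))\mathds{1}\{F_0^{-1}(t)>D\}\,dt$, noting that the indicator may be included because $B_\varphi(D,D)=0$. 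Equivalently, $h(D)=\int_0^\alpha B_\varphi\big(D\wedge F_0^{-1}(t),F_0^{-1}(t)\big)\,dt$. The integrand is continuous in $D$, since $\varphi$ is $C^1$, and bounded on $[0,M]^2$. Dominated convergence therefore makes $h$ continuous, hence $h(\underline{V_\alpha})\le\epsilon$. Thus $F^*$ with quantile $G_{\underline{V_\alpha}}$ lies in the ball and attains $\mathrm{VaR}_\alpha=\underline{V_\alpha}$.

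\textbf{Main obstacle.} The delicate step is the reduction, where the inequality $B_\varphi(G(t),F_0^{-1}(t))\ge B_\varphi(D,F_0^{-1}(t))$ must be justified. It relies on the one-sided monotonicity of $x\mapsto B_\varphi(x,y)$ for $x<y$, which holds because the derivative in $x$ is $\varphi'(x)-\varphi'(y)<0$ by strict convexity. Care is also needed with left- versus right-continuity conventions so that $G_D(\alpha)=D$ holds exactly and the boundary case $t=F_0(D)$ carries zero measure. I would verify these details, and also check that $G_D$ stays left-continuous at $\alpha$.
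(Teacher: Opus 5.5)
Your proposal is correct and follows essentially the same route as the paper. It uses the same attaining distribution: the benchmark quantile truncated to $F_0^{-1}(t)\wedge D$ for $t\le\alpha$, which is the paper's $\hat F^{-1}$. It also uses the same one-sided monotonicity of $x\mapsto B_\varphi(x,y)$ to show that every $F$ in the ball has $\mathrm{VaR}^F_\alpha(X)$ in the admissible set. Your dominated-convergence continuity argument for $h(D)=\int_0^\alpha B_\varphi(D\wedge F_0^{-1}(t),F_0^{-1}(t))\,dt$, together with your remark on the degenerate case $F_0^{-1}(\alpha)=0$, supplies justification the paper omits: the paper simply asserts $\int_{F_0(\underline{V_\alpha})}^{\alpha}B_\varphi(\underline{V_\alpha},F_0^{-1}(t))\,dt\le\epsilon$ ``by the definition of $\underline{V_\alpha}$''.
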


Although there does not exist any $F^{*} \in \mathcal{B}(F_0,\epsilon)$ such that $\mathrm{VaR}_{\alpha}^{F^{*}}(X)
= \overline{V_\alpha}$, we can find a sequence of CDFs $\{F_n\}\in \mathcal{B}(F_0,\epsilon)$ such that $\lim_{n\to \infty}\mathrm{VaR}_{\alpha}^{F_n}(X)=\overline{V_\alpha}$. More specifically, define 
\begin{equation}\label{eqhatFn}
F_n^{-1}(t)=
\begin{cases}
\overline{V_\alpha}-\delta_{n}, & t\in(\alpha-\xi_{n}, F_0(\overline{V_\alpha}-\delta_{n})], \\
F_0^{-1}(t), & t\notin(\alpha-\xi_{n}, F_0(\overline{V_\alpha}-\delta_{n})],
\end{cases}    
\end{equation}
where $\delta_0\in (0,F_0^{-1}(\alpha)]$, $\{\delta_n\}$ is a decreasing sequence with $\lim_{n\to \infty}\delta_n=0$, and for each $n$, there exists $\xi_n\in(0,\alpha)$ such that
\begin{align*} &\int_{0}^{1}B_{\varphi}(F_n^{-1}(t),F_0^{-1}(t))\ dt=\int_{\alpha-\xi_{n}}^{F_0(\overline{V_\alpha}-\delta_{n})}B_{\varphi}(\overline{V_\alpha}-\delta_{n},F_0^{-1}(t))\ dt \le\epsilon. \end{align*}
Then, it is straightforward to verify that $\lim_{n\to \infty}F_n^{-1}(\alpha)=\lim_{n\to \infty}(\overline{V_\alpha}-\delta_{n})=\overline{V_\alpha}.$

With Theorems \ref{th4-1} and \ref{th4-2}, the objective function of Problem \ref{Prob:main1} reduces to
\begin{align*}
    &\kappa R(\overline{V_\alpha})+(1-\kappa)R(\underline{V_\alpha})+\pi(I(X)) \nonumber \\
    =\ &\kappa\overline{V_\alpha}+(1-\kappa)\underline{V_\alpha}+(1+\theta)\E^\Q[I(X)]-\kappa I(\overline{V_\alpha})-(1-\kappa)I(\underline{V_\alpha}) \\
    =\ &\kappa\overline{V_\alpha}+(1-\kappa)\underline{V_\alpha}+\int_0^M\left\{(1+\theta)S_\Q(x)-\kappa\mathds{1}_{[0,\overline{V_\alpha}]}(x)-(1-\kappa)\mathds{1}_{[0,\underline{V_\alpha}]}(x)\right\}dI(x).
\end{align*}
The solution to Problem \ref{Prob:main1} is presented in the following theorem, for which the proof is straightforward and thus omitted.

\begin{theorem}\label{th3.3}
    Let $H(t)=(1+\theta)S_\Q(t)-\kappa\mathds{1}_{[0,\overline{V_\alpha}]}(t)-(1-\kappa)\mathds{1}_{[0,\underline{V_\alpha}]}(t)$. The optimal indemnity function that solves Problem \ref{Prob:main1} is given by $I^*(x)=\int_0^x{I^*}'(t)\,dt$, where
    \begin{equation}
        {I^*}'(t)=\mathds{1}_{\{t: H(t)<0\}}(t)+\eta(t)\mathds{1}_{\{t: H(t)=0\}}(t)
    \end{equation}
    with $\eta(t)$ being any $[0,1]$-valued Lebesgue measurable function.
\end{theorem}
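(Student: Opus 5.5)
The plan is a pointwise minimization of the linear functional obtained just before the statement. Theorems \ref{th4-1} and \ref{th4-2} give constants $\overline{V_\alpha}$ and $\underline{V_\alpha}$ that do not depend on $I$. So by \eqref{Prob:main1-v2}, Problem \ref{Prob:main1} reduces to minimizing
\[
\kappa\overline{V_\alpha}+(1-\kappa)\underline{V_\alpha}+\int_0^M H(x)\,dI(x)
\]
over $I\in\mathcal{I}$. The first two terms are fixed, so only the integral $\int_0^M H(x)\,dI(x)$ matters.

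First I will justify the displayed rewriting. Every $I\in\mathcal{I}$ is $1$-Lipschitz with $I(0)=0$, so it is absolutely continuous with $I(x)=\int_0^x I'(t)\,dt$ and $I'(t)\in[0,1]$ almost everywhere. Conversely, every measurable $h:[0,M]\to[0,1]$ defines an element $I(x)=\int_0^x h(t)\,dt$ of $\mathcal{I}$. Then:
\begin{itemize}
\item Fubini gives $\E^\Q[I(X)]=\int_0^M S_\Q(t)I'(t)\,dt$.
\item Writing $I(v)=\int_0^M \mathds{1}_{[0,v]}(t)I'(t)\,dt$ for $v=\overline{V_\alpha},\underline{V_\alpha}$ handles the two VaR terms.
\end{itemize}
The boundary convention for the indicator does not matter, since a single point has Lebesgue measure zero. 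This shows the objective equals the constant plus $\int_0^M H(t)h(t)\,dt$, with $h=I'$ ranging over all measurable $[0,1]$-valued functions.

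Next comes the pointwise step. For each $t$, the integrand $H(t)h(t)$ is minimized over $h(t)\in[0,1]$ as follows:
\begin{itemize}
\item $h(t)=1$ when $H(t)<0$;
\item $h(t)=0$ when $H(t)>0$;
\item any value $h(t)=\eta(t)\in[0,1]$ when $H(t)=0$.
\end{itemize}
For any admissible $h$ we have $H(t)h(t)\ge \min\{H(t),0\}=H(t)\,{I^*}'(t)$ pointwise. Integrating gives $\int_0^M H\,h\,dt\ge\int_0^M H\,{I^*}'\,dt$, so $I^*$ is optimal.

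The main obstacle is measurability: the proposed ${I^*}'$ must be a legitimate derivative. The function $H$ is a linear combination of the monotone function $S_\Q$ and two indicators of intervals, so it is Borel measurable. Hence the sets $\{H<0\}$ and $\{H=0\}$ are Borel, and ${I^*}'$ is measurable and $[0,1]$-valued, which yields $I^*\in\mathcal{I}$. I would also remark that the choice of $\eta$ is free because $H\cdot\eta=0$ on $\{H=0\}$, so every such selection attains the same minimal value. Changing ${I^*}'$ off these sets only on a Lebesgue-null set does not alter $I^*$. This completes the argument.
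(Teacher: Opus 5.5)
Your proposal is correct and is essentially the argument the paper has in mind. The paper omits the proof as straightforward: it relies on the displayed reduction of the objective to $\kappa\overline{V_\alpha}+(1-\kappa)\underline{V_\alpha}+\int_0^M H(x)\,dI(x)$, followed by pointwise minimization of the marginal-indemnity integrand, which is the same MIF argument it cites for Theorem~\ref{thm:op-indemnity}. Your additions simply fill in the details the paper leaves implicit: absolute continuity of $I\in\mathcal{I}$ with $I'\in[0,1]$ a.e., Fubini for $\E^\Q[I(X)]$, the pointwise bound $H h\ge \min\{H,0\}$, and measurability of $\{H<0\}$ and $\{H=0\}$. The equality case of that pointwise bound also shows that every optimal $I$ has a derivative of this form a.e.
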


The function $H(t)$ can be easily re-written as
\begin{equation*}
    H(t) = (1+\theta) S_\Q(x) - \mathds{1}_{[0,\underline{V_\alpha}]}(t) - \kappa\mathds{1}_{(\underline{V_\alpha}, \overline{V_\alpha}]}(t).
\end{equation*}
To make a comparison with existing literature results, we take $\eta(t)=1$ when $H(t)=0$. Define 
\begin{equation*}
    d_1^*=\inf\left\{x\in[0,M]: S_\Q(x)\le \frac{1}{1+\theta}\right\},\quad
    d_2^*=\inf\left\{x\in[0,M]: S_\Q(x)\le \frac{\kappa}{1+\theta}\right\}, 
\end{equation*}
we have
\begin{equation}\label{eq:explicit_I_var}
    I^*(x)=\left\{
    \begin{aligned}
        &(x\wedge \underline{V_\alpha}-d_1^*)_+,&\quad& \text{if}\ d_1^*<\underline{V_\alpha}<\overline{V_\alpha}\le d_2^*, \\
        &(x\wedge \underline{V_\alpha}-d_1^*)_++(x\wedge \overline{V_\alpha}-d_2^*)_+,&\quad& \text{if}\ d_1^*<\underline{V_\alpha}\leq d_2^*<\overline{V_\alpha}, \\
        &(x\wedge \overline{V_\alpha}-d_2^*)_+,&\quad& \text{if}\ \underline{V_\alpha}\leq d_1^*\leq d_2^*<\overline{V_\alpha}, \\
        &(x\wedge \overline{V_\alpha}-d_1^*)_+,&\quad& \text{if}\ d_1^*\leq d_2^*<\underline{V_\alpha}<\overline{V_\alpha},\\
        &0,&\quad&\ \text{otherwise}.
    \end{aligned}
    \right.
\end{equation}
Unlike the classical VaR-based model (without model uncertainty), where a limited excess-of-loss indemnity is optimal (cf. \cite{cheung2010optimal} and \cite{chi2011optimal}), the optimal indemnity for Problem \ref{Prob:main1} can involve a more layered structure. This illustrates how the decision maker’s ambiguity aversion influences insurance demand. Theorem \ref{th4-1} (resp. Theorem \ref{th4-2}) shows that computing the worst-case (resp. best-case) VaR only requires examining the CDFs whose quantile functions deviate positively (resp. negatively) from the benchmark. Consequently, the explicit indemnity form in \eqref{eq:explicit_I_var} reveals the effect of the asymmetry in the BW divergence on insurance demand. We conclude this section with an illustrative example.

\begin{example}\label{example:alpha-maxmin-var}
    We assume that under the benchmark distribution the loss follows a truncated exponential distribution with mean $1$ (million) and truncation point $100$ (million). The confidence level for VaR is set to be $\alpha=0.95$. For convenience, let $q_\alpha=\mathrm{VaR}^{F_0}_\alpha(X)$. The safety loading factor and the ambiguity aversion level are assumed to be $\theta=0.5$ and $\kappa=0.9$ respectively. We consider the following $\varphi$ with $k$ signifying the DM's sensitivity of the positive deviation from the benchmark quantile function:
     \begin{equation*}
    \varphi(x;k)=\left\{
    \begin{aligned}
        &x^2,&\quad &x\in[0,q_\alpha), \\
        &q_\alpha^2+2q_\alpha(x-q_\alpha)+k(x-q_\alpha)^2,&\quad &x\in[q_\alpha,M].
    \end{aligned}
    \right.
     \end{equation*}
    Note that $\varphi(x;k_2)-\varphi(x;k_1)$ is convex if $k_2>k_1$, thus $B_{\varphi(\cdot;k_2)}(D,F_0^{-1}(t))>B_{\varphi(\cdot;k_1)}(D,F_0^{-1}(t))$ for all $t\in(\alpha,F_0(D))$ when $D>F_0^{-1}(\alpha)$. Apparently, a larger $k$ penalizes more the CDFs whose quantile functions deviate positively from the benchmark quantile function (or shows that the DM is more confident about the benchmark distribution). Under a given $\epsilon$ (which is 0.5 in this example), it is not a surprise to find that $\overline{V_\alpha}$ decreases with $k$. Under our setting, it can be easily calculated that $d_1^*\approx 0.406$, $d_2^*\approx 0.511$ and $\underline{V_\alpha}\approx 0.563$, where the $\underline{V_\alpha}$ remains the same across different values of $k$. Hence, the optimal indemnity function is given by $I^*(x)=(x\wedge \overline{V_\alpha}-d_1^*)_+$, and the DM would lower her demand for insurance when being more sensitive to the positive deviation from the benchmark quantile function. The corresponding optimal indemnity functions are shown in the right panel of Figure \ref{fig:varphi_and_indemnity}. 

    In contrast, if the DM penalizes more the CDFs whose quantile functions deviate negatively from the benchmark quantile function, $\underline{V_\alpha}$ increases with $k$ while $\overline{V_\alpha}$ remains unchanged. Under the setting of this example, we can easily verify that the DM would not change her choice of the indemnity function. 
\begin{figure}[!htbp]
\centering
\minipage{0.5\textwidth}
 \centering
  \includegraphics[width=\linewidth]{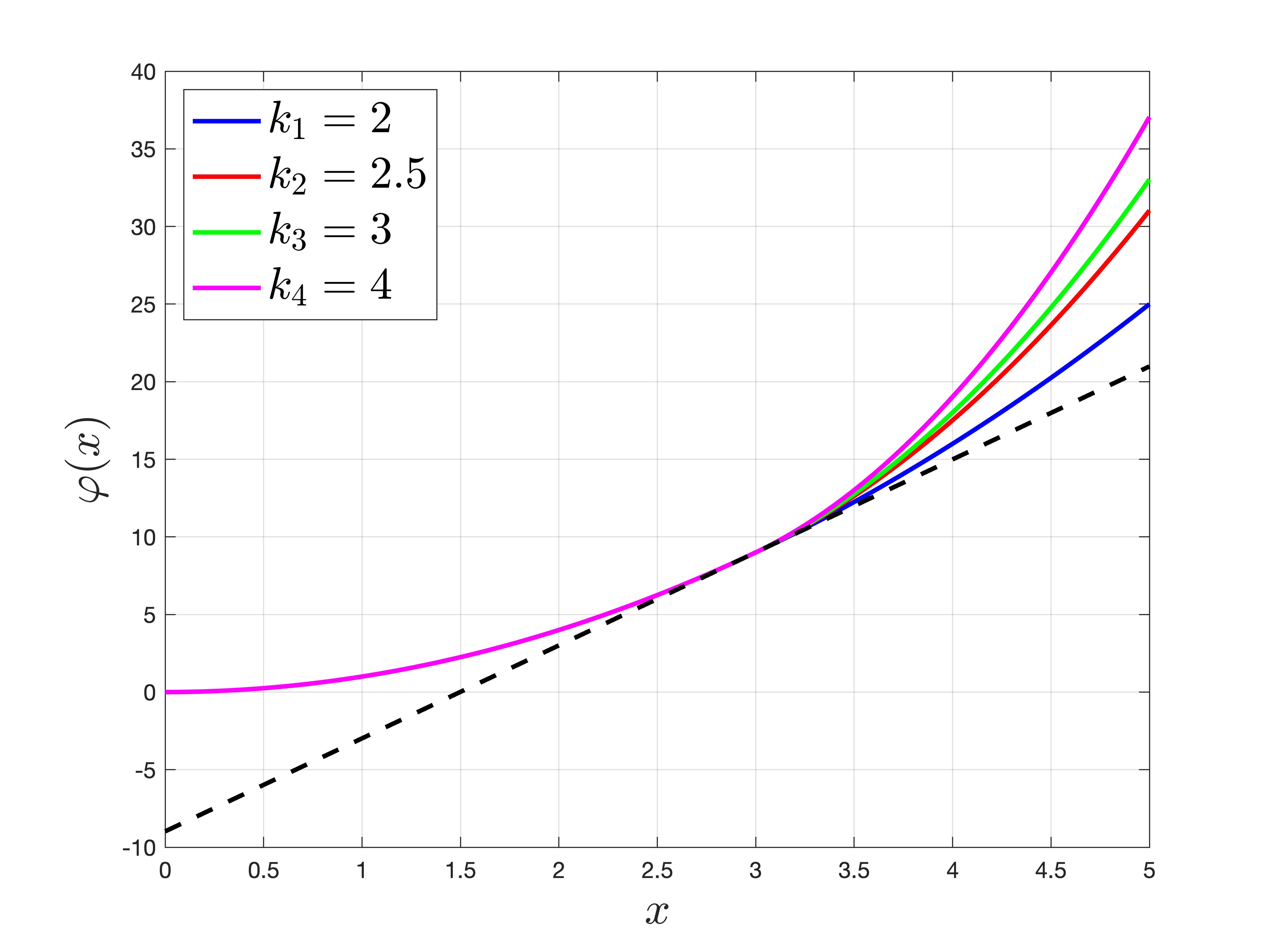}
\endminipage
\minipage{0.5\textwidth}
  \centering
  \includegraphics[width=\linewidth]{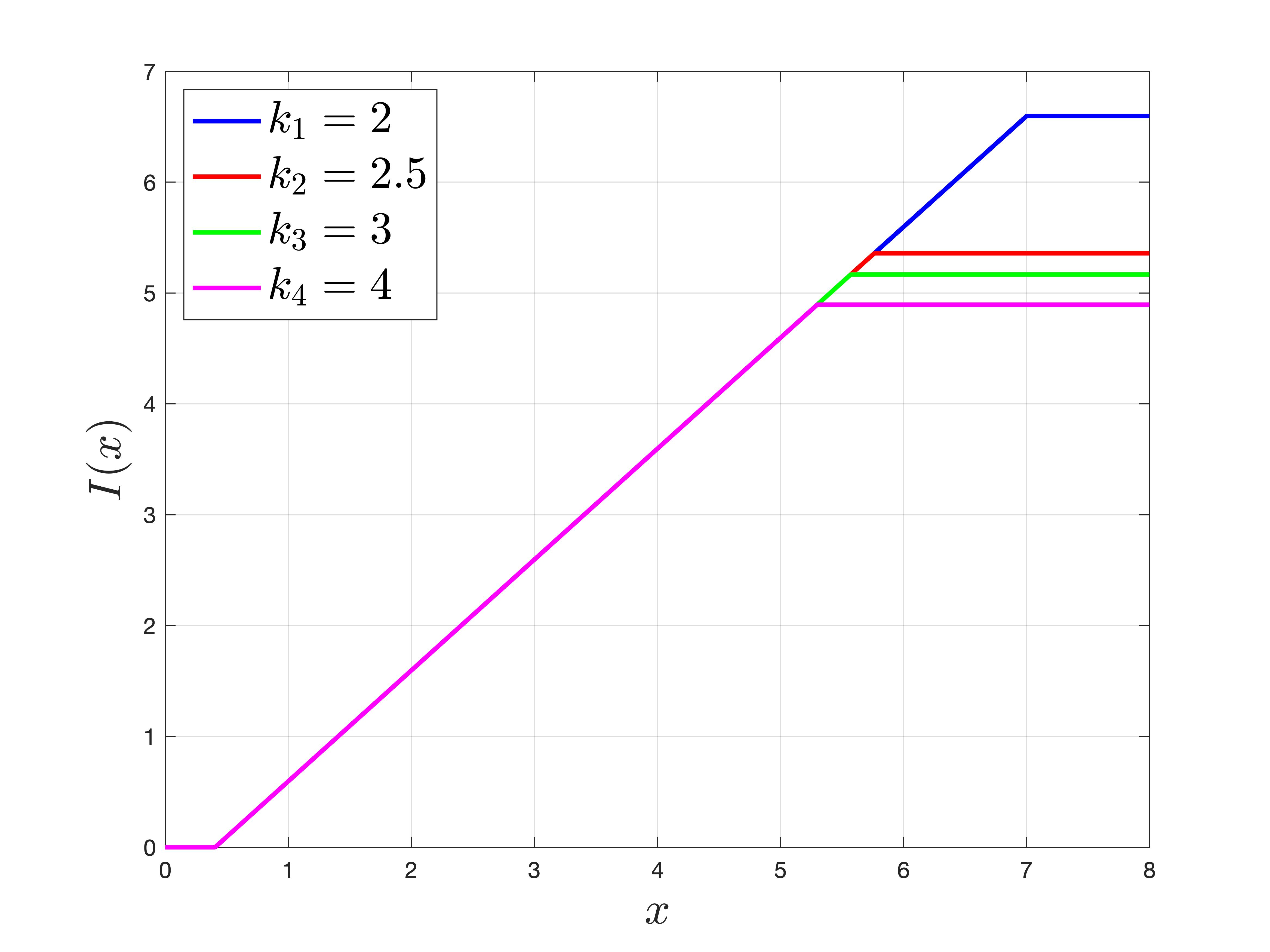}
\endminipage
\caption{(Left) The functions $\varphi$ under different values of $k$; (Right) The corresponding optimal indemnity functions.}
\label{fig:varphi_and_indemnity}
\end{figure}
\end{example}

We remark that in the above example an ambiguity-averse DM is assumed, where $\kappa$ is close to $1$. If the DM is ambiguity-seeking, i.e. a small $\kappa$ is assigned to Problem \ref{Prob:main1}, the resulting conclusion could differ substantially. The closed-form indemnity function in \eqref{eq:explicit_I_var} elegantly encapsulates how these differences in ambiguity preference manifest in the optimal contract.


\section{The solution to Problem \ref{Prob:main2}}\label{sec:sol-prob2}
\subsection{The optimal indemnity function}\label{sec:sol-prob2-indemnity}
With the results developed in Section \ref{sec:sol-prob1}, we can rewrite the constraint of Problem \ref{Prob:main2} as
\begin{equation}\label{Prob2:constraint-reduced}
    \pi(I(X))-I(\overline{V_\alpha})\le A-\overline{V_\alpha}.
\end{equation}
Before solving Problem \ref{Prob:main2}, we adopt the following assumption throughout this section.
\begin{assumption}\label{Assum:non-trivial}
    $\inf_{I\in\mathcal{I}}\ \pi(I(X))-I(\overline{V_\alpha})<A-\overline{V_\alpha}$.
\end{assumption}
\noindent If Assumption \ref{Assum:non-trivial} is violated, the solution to Problem \ref{Prob:main2} either reduces trivially to the minimizer of $\inf_{I\in\mathcal{I}}\ \pi(I(X))-I(\overline{V_\alpha})$ or fails to exist.

With Assumption \ref{Assum:non-trivial}, a solution to Problem \ref{Prob:main2} exists, which is presented in the next theorem.
\begin{theorem}\label{thm:sol-exist-prob2}
    Let Assumption \ref{Assum:non-trivial} hold. There exists a solution to Problem \ref{Prob:main2}.
\end{theorem}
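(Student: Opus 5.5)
Existence will follow from a compactness argument. First we show that the feasible set is nonempty, compact in the topology of uniform convergence, and that the objective is lower semicontinuous on it; the Weierstrass theorem then gives the result.

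\emph{Feasible set.} The feasible set is
\[
\mathcal{I}_A:=\left\{I\in\mathcal{I}:\ (1+\theta)\E^\Q[I(X)]-I(\overline{V_\alpha})\le A-\overline{V_\alpha}\right\},
\]
which is the reduced form \eqref{Prob2:constraint-reduced} of the constraint, obtained from Theorem \ref{th4-1} together with the translation invariance and comonotonicity used in \eqref{Prob:main1-v2}. The map $I\mapsto(1+\theta)\E^\Q[I(X)]-I(\overline{V_\alpha})$ is continuous on $\mathcal{I}$ under the sup norm, so by Assumption \ref{Assum:non-trivial} the set $\mathcal{I}_A$ is nonempty. Every $I\in\mathcal{I}$ is $1$-Lipschitz and satisfies $I(0)=0$, so $\mathcal{I}$ is uniformly bounded and equicontinuous on $[0,M]$. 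It is also closed under uniform convergence, because pointwise limits preserve both $I(0)=0$ and the inequalities $0\le I(y)-I(x)\le y-x$. By Arzel\`a--Ascoli, $\mathcal{I}$ is compact in $C[0,M]$. The constraint functional is continuous: evaluation at a point is continuous, and $|\E^\Q[I_n(X)]-\E^\Q[I(X)]|\le\|I_n-I\|_\infty$. Hence $\mathcal{I}_A$ is a closed subset of a compact set, and therefore compact.

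\emph{Objective.} For fixed $F_1$, the map
\[
J_{F_1}(I):=\rho_g^{F_1}\bigl(X-I(X)+\pi(I(X))\bigr)=\rho_g^{F_1}\bigl(R(X)\bigr)+\pi(I(X))
\]
is continuous in $I$, where $R(x)=x-I(x)$. Indeed, $\rho_g$ is monotone and translation invariant, so $|\rho_g^{F_1}(Y)-\rho_g^{F_1}(Z)|\le\|Y-Z\|_\infty$. Moreover, $R$ and $R_n$ differ uniformly by at most $\|I_n-I\|_\infty$, and the premium part changes by at most $(1+\theta)\|I_n-I\|_\infty$. These bounds do not depend on $F_1$, so the family $\{J_{F_1}\}_{F_1\in\mathcal{B}(F_0,\epsilon)}$ is equi-Lipschitz with constant $2+\theta$. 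The worst-case functional $\sup_{F_1}J_{F_1}$ is therefore Lipschitz as well; in particular it is finite, since every term is bounded by $M+(1+\theta)M$. Continuity of the objective is thus immediate, and we do not need the worst-case distribution to be attained.

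\emph{Conclusion and main obstacle.} A continuous function on a nonempty compact set attains its infimum, which gives a solution to Problem \ref{Prob:main2}. The only point needing care is that the supremum over the BW ball might fail to be attained; Theorem \ref{th4-1} shows this does happen for VaR. This causes no difficulty: the constraint only involves the value $\overline{V_\alpha}$, and the objective is a supremum of uniformly equicontinuous functions, so no attainment is needed anywhere in the argument.
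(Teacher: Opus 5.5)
Your proof is correct and follows the paper's route. Both arguments take $\mathcal{I}$ to be compact in the sup norm, get closedness of the reduced constraint set from continuity of $I\mapsto\pi(I(X))-I(\overline{V_\alpha})$, and apply Weierstrass after establishing semicontinuity of the worst-case objective as a supremum of continuous functionals. The only differences are that you derive the explicit uniform Lipschitz constant $2+\theta$ from monotonicity and translation invariance, which gives full continuity, whereas the paper cites an external continuity result for each $\rho_g^F$ and concludes only lower semicontinuity, which already suffices.
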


Note that $I\mapsto \sup_{F\in\mathcal{B}(F_0,\epsilon)}\rho_g^F(X-I(X)+\pi(I(X)))$ is convex and the constraint \eqref{Prob2:constraint-reduced} is linear in $I$. Under Assumption \ref{Assum:non-trivial}, the Slater's condition holds, which makes solving Problem \ref{Prob:main2} equivalent to solving 
\begin{equation}\label{Prob:main2-v2}
    \inf_{I\in\mathcal{I}}\left\{\sup_{F\in\mathcal{B}(F_0,\epsilon)}\rho^F_g(X-I(X)+\pi(I(X)))+\lambda(\pi(I(X))-I(\overline{V_\alpha})-A+\overline{V_\alpha})\right\}
\end{equation}
for some $\lambda\ge 0$.

Before proceeding, we recall the following well-known minimax theorem.

\begin{lemma}[Minimax theorem \citep{fan1953minimax}]\label{thm:minimax}
    Let $\Xi_1$ be a non-empty convex compact subset of a Hausdorff topological vector space and $\Xi_2$ be a convex set. If $\mathcal{H}$ is a real-valued function defined on $\Xi_1\times\Xi_2$ such that
    \begin{itemize}
        \item[(a).] $\xi_1\mapsto \mathcal{H}(\xi_1,\xi_2)$ is convex and lower semi-continuous on $\Xi_1$ for each $\xi_2\in\Xi_2$;
        \item[(b).] $\xi_2\mapsto\mathcal{H}(\xi_1,\xi_2)$ is concave on $\Xi_2$ for each $\xi_1\in\Xi_1$,
    \end{itemize}
    then $\inf_{\xi_1\in\Xi_1}\sup_{\xi_2\in\Xi_2}\mathcal{H}(\xi_1,\xi_2)=\sup_{\xi_2\in\Xi_2}\inf_{\xi_1\in\Xi_1}\mathcal{H}(\xi_1,\xi_2)$.
\end{lemma}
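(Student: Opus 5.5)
Since Lemma \ref{thm:minimax} is Fan's classical minimax theorem, I would reproduce the standard argument. It has three parts: a trivial inequality, a compactness reduction to finitely many $\xi_2$'s, and a finite-dimensional separation argument.

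\textbf{Trivial direction and setup.} For any $\xi_1',\xi_2'$ we have $\inf_{\xi_1}\mathcal{H}(\xi_1,\xi_2')\le \mathcal{H}(\xi_1',\xi_2')\le\sup_{\xi_2}\mathcal{H}(\xi_1',\xi_2)$. This gives $\sup_{\xi_2}\inf_{\xi_1}\mathcal{H}\le\inf_{\xi_1}\sup_{\xi_2}\mathcal{H}$. For the reverse inequality, fix any real $c<\inf_{\xi_1}\sup_{\xi_2}\mathcal{H}(\xi_1,\xi_2)$. It suffices to produce some $\bar\xi_2\in\Xi_2$ with $\inf_{\xi_1}\mathcal{H}(\xi_1,\bar\xi_2)\ge c$.

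\textbf{Compactness reduction.} For each $\xi_2$, the level set $L(\xi_2)=\{\xi_1\in\Xi_1:\mathcal{H}(\xi_1,\xi_2)\le c\}$ is closed in $\Xi_1$ by lower semicontinuity in (a). The intersection $\bigcap_{\xi_2}L(\xi_2)$ is empty: a point in it would satisfy $\sup_{\xi_2}\mathcal{H}(\xi_1,\xi_2)\le c$, contradicting the choice of $c$. Since $\Xi_1$ is compact, the finite intersection property yields $\xi_2^1,\dots,\xi_2^n$ with $\bigcap_{i=1}^n L(\xi_2^i)=\emptyset$, that is, $\max_i\mathcal{H}(\xi_1,\xi_2^i)>c$ for every $\xi_1\in\Xi_1$. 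The function $\xi_1\mapsto\max_i\mathcal{H}(\xi_1,\xi_2^i)$ is lower semicontinuous, so it attains its minimum on the compact set $\Xi_1$. Hence there is $\delta>0$ with $\max_i\mathcal{H}(\xi_1,\xi_2^i)\ge c+\delta$ for all $\xi_1$.

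\textbf{Finite-dimensional separation (main step).} Define $\Phi:\Xi_1\to\mathbb{R}^n$ by $\Phi(\xi_1)=(\mathcal{H}(\xi_1,\xi_2^i))_{i=1}^n$, and set $C=\Phi(\Xi_1)+\mathbb{R}^n_+$.

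1. Convexity of $C$ follows from (a): for $\xi_1,\xi_1'\in\Xi_1$ and $t\in[0,1]$, $\Phi(t\xi_1+(1-t)\xi_1')\le t\Phi(\xi_1)+(1-t)\Phi(\xi_1')$ componentwise. Therefore $t\Phi(\xi_1)+(1-t)\Phi(\xi_1')\in C$, and adding orthants preserves this.
2. $C$ is disjoint from the open convex set $G=\{y\in\mathbb{R}^n:y_i<c+\delta\ \forall i\}$, by the previous step.
3. A separating hyperplane gives a nonzero $\mu\in\mathbb{R}^n$ and a constant $s$ with $\mu\cdot y\ge s\ge \mu\cdot z$ for all $y\in C$, $z\in G$. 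Because $C$ is stable under adding $\mathbb{R}^n_+$, we get $\mu\ge0$, so we may normalize $\sum_i\mu_i=1$. Letting $z\to(c+\delta)\mathbf{1}$ gives $s\ge c+\delta$.
4. Put $\bar\xi_2=\sum_i\mu_i\xi_2^i\in\Xi_2$, which lies in $\Xi_2$ by convexity. Concavity (b) then gives, for every $\xi_1$,
\[
\mathcal{H}(\xi_1,\bar\xi_2)\ \ge\ \sum_i\mu_i\,\mathcal{H}(\xi_1,\xi_2^i)\ =\ \mu\cdot\Phi(\xi_1)\ \ge\ c+\delta .
\]

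Thus $\sup_{\xi_2}\inf_{\xi_1}\mathcal{H}\ge c$. Since $c$ was arbitrary below the inf-sup, the claim follows.

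I expect the main obstacle to be step 1: checking convexity of $\Phi(\Xi_1)+\mathbb{R}^n_+$ and orienting the separating functional so that $\mu$ is a probability vector. The compactness step is routine once lower semicontinuity is used to obtain the strict margin $\delta$. That margin is what makes the separation against the open box $G$ clean, and real-valuedness of $\mathcal{H}$ guarantees that $\Phi$ takes values in $\mathbb{R}^n$.
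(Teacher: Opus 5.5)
The paper does not prove this lemma; it only cites Fan (1953). Your argument is a correct and complete proof, and it follows essentially Fan's own route: lower semicontinuity plus compactness reduce the problem to finitely many $\xi_2^1,\dots,\xi_2^n$, and separating the convex set $\Phi(\Xi_1)+\mathbb{R}^n_+$ from a box in $\mathbb{R}^n$ produces the probability weights $\mu$ that define $\bar\xi_2$. As a minor simplification, neither the margin $\delta$ (and hence the attainment of the minimum) nor the Hausdorff property is needed, since separating $C$ from $\{y\in\mathbb{R}^n: y_i<c \text{ for all } i\}$ already gives $\inf_{\xi_1}\mathcal{H}(\xi_1,\bar\xi_2)\ge c$; moreover, your steps 1 and 4 use only convex-likeness in $\xi_1$ and concave-likeness in $\xi_2$, which is exactly the generality of Fan's theorem.
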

Note that the set $\mathcal{I}$ is compact under the $L^\infty$ metric (see, for example, \cite{ZHANG2026103202}). Moreover, the convexity of $\mathcal{I}$ and $\mathcal{B}(F_0,\epsilon)$ and the conditions (a) and (b) can be easily verified. Therefore, Lemma \ref{thm:minimax} is applicable here, which turns Problem~\eqref{Prob:main2-v2} into
\begin{equation}\label{prob:minmax}
    \sup_{F\in\mathcal{B}(F_0,\epsilon)}\inf_{I\in\mathcal{I}}\ \left\{\rho_g^F(X-I(X)+\pi(I(X)))+\lambda\left(\pi(I(X))-I(\overline{V_\alpha})-A+\overline{V_\alpha}\right)\right\}
\end{equation}
for which the objective function can be simplified as
\begin{align*}
     & \rho_g^F(X-I(X)+\pi(I(X)))+\lambda\left(\pi(I(X))-I(\overline{V_\alpha})-A+\overline{V_\alpha}\right)\\
    =\ &\rho_g^F(X)-\lambda(A-\overline{V_\alpha})+(1+\lambda)\pi(I(X))-\rho_g^F(I(X))-\lambda I(\overline{V_\alpha}) \\
    =\ &\rho_g^F(X)-\lambda(A-\overline{V_\alpha})+\int_0^M\left\{(1+\lambda)(1+\theta)S_\Q(x)-g(S(x))-\lambda\mathds{1}_{[0,\overline{V_\alpha}]}(x)\right\}dI(x).
\end{align*}
For given $F\in\mathcal{B}(F_0,\epsilon)$ and $\lambda\ge 0$, the optimal indemnity function can be derived straightforwardly. For that purpose, we define 
 \begin{equation*}
 \mathcal{S}(S_0,\epsilon)=\{S:S(x)=1-F(x),\ \forall x\in[0,M]\ \text{with}\ F\in\mathcal{B}(F_0,\epsilon)\},
  \end{equation*}
 where $S_0(x)=1-F_0(x)$, and
 \begin{equation*}
H(x;S,\lambda)=(1+\lambda)(1+\theta)S_\Q(x)-g(S(x))-\lambda\mathds{1}_{[0,\overline{V_\alpha}]}(x).
 \end{equation*} 
We summarize the result of the optimal indemnity function in the following theorem. The proof follows the standard marginal indemnification function (MIF) method employed in \cite{zhuang2016marginal} and is therefore omitted here.
\begin{theorem}\label{thm:op-indemnity}
    Let $S\in\mathcal{S}(S_0,\epsilon)$ and $\lambda\ge 0$ be given. The optimal indemnity function that solves the inner problem of \eqref{prob:minmax}, that is,
    \begin{equation}
        \inf_{I\in\mathcal{I}}\ \left\{\rho_g^F(X-I(X)+\pi(I(X)))+\lambda\left(\pi(I(X))-I(\overline{V_\alpha})-A+\overline{V_\alpha}\right)\right\}
    \end{equation}
    is given by $I^*(x;S,\lambda)=\int_0^x{I^*}'(t;S,\lambda)\,dt$, where
    \begin{equation}
        {I^*}'(t;S,\lambda)=\mathds{1}_{\{t: H(t;S,\lambda)<0\}}(t)+\tilde{\eta}(t)\mathds{1}_{\{t: H(t;S,\lambda)=0\}}(t),
    \end{equation}
    where $\tilde{\eta}(t)$ is a $[0,1]$-valued Lebesgue measurable function.
\end{theorem}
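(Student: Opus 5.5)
The plan is to rewrite the objective as a linear functional of the marginal indemnity ${I}'$ and then minimize it pointwise. Fix $S\in\mathcal{S}(S_0,\epsilon)$ and $\lambda\ge 0$. Every $I\in\mathcal{I}$ is $1$-Lipschitz with $I(0)=0$, so it is absolutely continuous. Hence $I(x)=\int_0^x I'(t)\,dt$ for some measurable $I'$ with values in $[0,1]$ a.e. Conversely, every such $I'$ produces an element of $\mathcal{I}$. Minimizing over $\mathcal{I}$ is therefore the same as minimizing over the set $\mathcal{M}$ of $[0,1]$-valued measurable functions on $[0,M]$.

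Next I will justify the simplification displayed just before the theorem. The retained loss $R(X)=X-I(X)$ and $I(X)$ are both nondecreasing functions of $X$, so they are comonotonic. Comonotonic additivity and translation invariance of $\rho_g^F$ then give
\[
\rho_g^F\bigl(X-I(X)+\pi(I(X))\bigr)=\rho_g^F(X)-\rho_g^F(I(X))+\pi(I(X)).
\]
I will then express each term involving $I$ as a Stieltjes integral:
\begin{itemize}
\item $\rho_g^F(I(X))=\int_0^M g(S(x))\,dI(x)$, because $\{I(X)>I(x)\}=\{X>x\}$ up to flat pieces of $I$, which contribute nothing to $dI$;
\item $\E^\Q[I(X)]=\int_0^M S_\Q(x)\,dI(x)$;
\item $I(\overline{V_\alpha})=\int_0^M \mathds{1}_{[0,\overline{V_\alpha}]}(x)\,dI(x)$.
\end{itemize}
Collecting terms, the objective equals a constant independent of $I$ plus $\int_0^M H(x;S,\lambda)\,I'(x)\,dx$.

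The minimization is then pointwise. Since $I'(x)\in[0,1]$, we have
\[
H(x;S,\lambda)\,I'(x)\ \ge\ \min\{H(x;S,\lambda),0\}
\]
for every $x$, with equality exactly when three conditions hold: $I'(x)=1$ where $H<0$, $I'(x)=0$ where $H>0$, and $I'(x)$ arbitrary in $[0,1]$ where $H=0$. Integrating shows that $I^*$ as stated attains the lower bound $\int_0^M \min\{H,0\}\,dx$, so it is optimal. Every optimizer must also have this form up to a Lebesgue-null set.

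Two measurability points remain. $S$ is monotone, $g$ is monotone and $S_\Q$ is monotone, so $H(\cdot;S,\lambda)$ is Borel measurable. Consequently the sets $\{H<0\}$ and $\{H=0\}$ are measurable, ${I^*}'$ is measurable, and $I^*\in\mathcal{I}$.

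The main obstacle is the second step: rigorously writing $\rho_g^F(I(X))$ as $\int g(S(x))\,dI(x)$ under a general $F$, which may have atoms, combined with an $I$ that may have flat regions. I would handle it through the layer-cake formula:
\[
\rho_g^F(I(X))=\int_0^{I(M)} g\bigl(\PP(I(X)>y)\bigr)\,dy.
\]
I would then change variables $y=I(x)$. For a continuous nondecreasing $I$, the event $\{I(X)>I(x)\}$ differs from $\{X>x\}$ only on a set of $x$ where $I$ is locally constant, and such sets have $dI$-measure zero. This justifies the identity.
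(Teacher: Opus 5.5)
Your proposal is correct and takes essentially the same route as the paper. The paper omits the proof, citing the marginal indemnification function method of \cite{zhuang2016marginal}, and relies on exactly your reduction of the objective to $\rho_g^F(X)-\lambda(A-\overline{V_\alpha})+\int_0^M H(x;S,\lambda)\,dI(x)$ followed by pointwise minimization over $I'\in[0,1]$; your layer-cake and change-of-variables justification of $\rho_g^F(I(X))=\int_0^M g(S(x))\,dI(x)$ supplies the one step the paper leaves implicit.
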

\noindent We interpret $H(x;S,\lambda)$ as the net price of marginal indemnity, and the DM will always purchase the full marginal indemnity when the net price is negative and is indifferent between with and without insurance when this price is zero.

With Theorem \ref{thm:op-indemnity}, the objective function of Problem~\eqref{Prob:main2-v2} can be simplified as
\begin{align*}
    &\rho_g^F(X)-\lambda(A-\overline{V_\alpha})+\int_0^M\left\{(1+\lambda)(1+\theta)S_\Q(x)-g(S(x))-\lambda\mathds{1}_{[0,\overline{V_\alpha}]}(x)\right\}dI^*(x;S,\lambda) \\
    =\ &\rho_g^F(X)-\lambda(A-\overline{V_\alpha})-\int_0^M\left(g(S(x))+\lambda\mathds{1}_{[0,\overline{V_\alpha}]}(x)-(1+\lambda)(1+\theta)S_\Q(x)\right)_+\,dx \\
    =\ &\int_0^Mg(S(x))\wedge\left((1+\lambda)(1+\theta)S_\Q(x)-\lambda\mathds{1}_{[0,\overline{V_\alpha}]}(x)\right)\,dx-\lambda(A-\overline{V_\alpha}).
\end{align*}
This reduces Problem~\eqref{Prob:main2-v2} to
\begin{align}
    \sup_{S\in\mathcal{S}(S_0,\epsilon)}\int_0^Mg(S(x))\wedge\left((1+\lambda)(1+\theta)S_\Q(x)-\lambda\mathds{1}_{[0,\overline{V_\alpha}]}(x)\right)\,dx-\lambda(A-\overline{V_\alpha}). \label{Prob:main2-v4}
\end{align}
In the next section, we characterize the worst-case survival function that solves problem \eqref{Prob:main2-v4}.

\subsection{The worst-case distribution}\label{sec:worst-dist}
It is now sufficient to focus on the problem
\begin{equation}\label{Prob:inner}
    \sup_{S\in\mathcal{S}(S_0,\epsilon)}\ \int_0^M g(S(x))\wedge\left((1+\lambda)(1+\theta)S_\Q(x)-\lambda\mathds{1}_{[0,\overline{V_\alpha}]}(x)\right)\,dx
\end{equation}
for some $\lambda\ge 0$. Note that the objective function of \eqref{Prob:inner} is monotone in $S(x)$, which directly yields the following lemma. 
\begin{lemma}
The worst-case survival function $S^*(x)$ that solves Problem~\eqref{Prob:inner}, if exists, must satisfy $S^*(x)\ge S_0(x)$ for all $x\in[0,M]$,
whose corresponding quantile function satisfies $F^{*-1}(t)\ge F_0^{-1}(t)$ for all $t\in[0,1]$.
\end{lemma}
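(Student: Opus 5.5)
The plan is a modification argument. Starting from any feasible survival function, I will replace it by its pointwise maximum with $S_0$, show that this stays in the Bregman--Wasserstein ball, and show that the objective of \eqref{Prob:inner} does not decrease. As a result, the supremum in \eqref{Prob:inner} can always be sought among survival functions dominating $S_0$. A maximizer that fails $S^*\ge S_0$ can be replaced by its modification, which is again a maximizer and satisfies the dominance. This is the sense in which I would prove the statement; the literal ``must'' can fail when the objective is flat in $S$ on some region.

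Step 1 (the modification and its quantile function). Fix $S\in\mathcal{S}(S_0,\epsilon)$ with CDF $F$, and define $\tilde S:=S\vee S_0$, i.e.\ $\tilde F:=F\wedge F_0$. The function $\tilde F$ is nondecreasing and right-continuous, with $\tilde F(M)=1$, so it is a CDF supported in $[0,M]$. From the definition of the left quantile,
\[
\tilde F^{-1}(t)=\inf\{x: F(x)\wedge F_0(x)\ge t\}=F^{-1}(t)\vee F_0^{-1}(t),
\]
because $F(x)\wedge F_0(x)\ge t$ holds iff $F(x)\ge t$ and $F_0(x)\ge t$, and each of these sets is an up-interval $[F^{-1}(t),M]$, respectively $[F_0^{-1}(t),M]$. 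This already gives $\tilde F^{-1}\ge F_0^{-1}$ on $[0,1]$, which is the quantile part of the claim.

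Step 2 (feasibility). I use the quantile representation of $\mathscr{B}_\varphi$ via comonotonic coupling, recalled in Section~\ref{sec:BW_uncertainty}. Split $[0,1]$ into two sets. On $\{t: F^{-1}(t)\ge F_0^{-1}(t)\}$ the integrands coincide: $B_\varphi(\tilde F^{-1}(t),F_0^{-1}(t))=B_\varphi(F^{-1}(t),F_0^{-1}(t))$. On the complement, $\tilde F^{-1}(t)=F_0^{-1}(t)$, so the integrand is $B_\varphi(F_0^{-1}(t),F_0^{-1}(t))=0\le B_\varphi(F^{-1}(t),F_0^{-1}(t))$, using nonnegativity of the Bregman divergence (convexity of $\varphi$). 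Hence $\mathscr{B}_\varphi[\tilde F,F_0]\le\mathscr{B}_\varphi[F,F_0]\le\epsilon$, and therefore $\tilde S\in\mathcal{S}(S_0,\epsilon)$.

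Step 3 (monotonicity of the objective). The distortion $g$ is nondecreasing, so $g(\tilde S(x))\ge g(S(x))$ pointwise. The second argument of the minimum in \eqref{Prob:inner} does not depend on $S$, so the integrand $g(S(x))\wedge(\cdots)$ is pointwise nondecreasing under the replacement. Integrating over $[0,M]$, the objective at $\tilde S$ is at least its value at $S$. Combining with Step 2, the supremum over $\mathcal{S}(S_0,\epsilon)$ equals the supremum over $\{S\in\mathcal{S}(S_0,\epsilon):S\ge S_0\}$. If $S^*$ attains the supremum, then $S^*\vee S_0$ is feasible and attains it as well, and it satisfies both $S^*\vee S_0\ge S_0$ and the quantile inequality from Step 1.

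The only delicate point is the quantile identity in Step 1 under left-continuous inverses and possible atoms. I would check it directly from the set characterization above rather than through any smoothness assumption.
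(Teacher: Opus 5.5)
Your proof is correct and follows the same idea as the paper, which justifies the lemma in a single line by the monotonicity of the objective of \eqref{Prob:inner} in $S$. Your Step~2 supplies what that line leaves implicit, namely that $S\vee S_0$ stays in the BW ball, via the identity $\tilde F^{-1}=F^{-1}\vee F_0^{-1}$ and a pointwise comparison of the Bregman integrands; you are also right that the lemma holds only in the ``without loss of generality'' sense (a maximizer may dip below $S_0$ where $g(S(x))\wedge(\cdots)$ is flat in $S(x)$), which is exactly how the paper uses it when restricting to $\tilde{\mathcal S}(S_0,\epsilon)$.
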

\noindent Therefore, for Problem~\eqref{Prob:inner} we can restrict ourselves to the following smaller uncertainty set:
\begin{equation*}
\tilde{\mathcal S}(S_0,\epsilon) := 
\big\{S\in\mathcal S(S_0,\epsilon): S(x)\ge S_0(x),\ \forall x\in[0,M]\big\}.
\end{equation*}

With the help of Lemma $\ref{lem:survival-rep-BW}$, the BW constraint can be rewritten as
\begin{equation*}
\begin{aligned}
   &\int_0^M \Big(\big[\varphi'(x)-\varphi'(0)\big]S(x)-\int_0^{M}\varphi''(y)\big(S_0(y)\wedge S(x)\big)\,dy\Big)\,dx \le \zeta,
\end{aligned}
\end{equation*}
where $\zeta := \epsilon-\int_0^M\varphi'(y)yS_0(y)\,dy$. Since the objective in $\eqref{Prob:inner}$ is concave in $S(\cdot)$, and the constraint is convex in $S(\cdot)$, with $S_0$ being a strictly feasible solution, strong duality holds because Slater's conditions are fulfilled. Similar to Theorem \ref{thm:sol-exist-prob2}, we first verify the existence of a solution to Problem \eqref{Prob:inner}.
\begin{theorem}\label{thm:sol-exist-inner}
    Given any $\lambda\ge 0$, there exists a solution to Problem \eqref{Prob:inner}.
\end{theorem}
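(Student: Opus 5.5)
I will argue by the direct method: the feasible set is compact in a suitable topology and the objective is upper semi-continuous there, so the supremum is attained. Write
\[
\Phi(S):=\int_0^M g(S(x))\wedge h_\lambda(x)\,dx,\qquad h_\lambda(x):=(1+\lambda)(1+\theta)S_\Q(x)-\lambda\mathds{1}_{[0,\overline{V_\alpha}]}(x).
\]
All survival functions in $\mathcal{S}(S_0,\epsilon)$ are non-increasing, right-continuous, and take values in $[0,1]$ on $[0,M]$. The objective is bounded, since $|g(S)\wedge h_\lambda|\le \max\{1,(1+\lambda)(1+\theta)+\lambda\}$. Hence $\sup\Phi$ over $\mathcal{S}(S_0,\epsilon)$ is finite, and I take a maximizing sequence $\{S_n\}$.

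\textbf{Step 1 (compactness).} By Helly's selection theorem, a subsequence, still denoted $S_n$, converges at every continuity point of some non-increasing function $\tilde S$ with values in $[0,1]$. Let $S^*$ be the right-continuous modification of $\tilde S$. Because every law lives on the bounded interval $[0,M]$, no mass escapes, and $S^*$ is the survival function of a distribution on $[0,M]$. Weak convergence on $[0,M]$ is then equivalent to convergence of quantile functions $F_n^{-1}(t)\to F^{*-1}(t)$ at every continuity point $t$ of $F^{*-1}$, which gives Lebesgue-a.e.\ convergence on $(0,1)$.

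\textbf{Step 2 (feasibility of the limit).} This is the main obstacle. I need $\mathscr{B}_\varphi[F^*,F_0]\le\epsilon$. I will use the quantile representation: the map $t\mapsto B_\varphi(F_n^{-1}(t),F_0^{-1}(t))$ converges a.e.\ to $B_\varphi(F^{*-1}(t),F_0^{-1}(t))$, because $\varphi$ is $C^1$. These integrands are nonnegative, so Fatou's lemma gives $\mathscr{B}_\varphi[F^*,F_0]\le\liminf_n \mathscr{B}_\varphi[F_n,F_0]\le\epsilon$. Boundedness of all quantiles by $M$ also gives a uniform bound on the integrands, so dominated convergence would work as well. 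Either way, $S^*\in\mathcal{S}(S_0,\epsilon)$. This step is exactly where the claim differs from Theorem \ref{th4-1}, where the supremum was not attained. The difference is that here the objective is continuous, whereas there VaR was only semi-continuous, so I will check the objective carefully in Step 3.

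\textbf{Step 3 (continuity of the objective).} $S_n(x)\to S^*(x)$ at all but countably many $x$, hence Lebesgue-a.e.\ on $[0,M]$. Since $g$ is concave on $[0,1]$, it is continuous on $(0,1]$. There is a possible jump at $0$, but concavity with $g(0)=0$ forces $g(0^+)\ge 0$, and any jump is upward. Hence $g$ is lower semi-continuous at $0$, not upper semi-continuous there, so a.e.\ convergence of $g(S_n(x))$ could fail on the set $\{S^*=0\}$. I will handle this set as follows. If $S^*(x)=0$, then $x$ lies above the support of $F^*$, and therefore $\Phi$'s integrand at $S^*$ equals $0\wedge h_\lambda(x)$. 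If necessary I will replace $g$ by its right-continuous version at $0$, or argue directly, to show that $\limsup_n g(S_n(x))\wedge h_\lambda(x)\le g(S^*(x))\wedge h_\lambda(x)$ fails only on a set that does not matter. A cleaner route is to note that in Problem \eqref{Prob:inner} we may restrict to $\tilde{\mathcal S}(S_0,\epsilon)$. If the maximizing sequence leaves no mass beyond $F^*$'s support, then $S_n\to 0$ there, and $g(S_n)\to g(0^+)$. If $g(0^+)>0$, I will instead modify $S^*$ by shifting an arbitrarily small amount of mass to $M$, using slack in the BW constraint, or absorb this into the limit argument. On the remaining set, continuity of $g$ and bounded convergence give $\Phi(S_n)\to\Phi(S^*)$ (at least $\limsup\Phi(S_n)\le\Phi(S^*)$ after this treatment). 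In the standard case where $g$ is continuous on $[0,1]$, e.g.\ TVaR, the argument is immediate. Therefore $\Phi(S^*)=\sup\Phi$, and $S^*$ solves \eqref{Prob:inner}.
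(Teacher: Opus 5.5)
Your proof is correct and follows the paper's route: compactness of laws on $[0,M]$ (your Helly step is the paper's Prokhorov step), closedness of the BW ball, and continuity of the objective by dominated convergence. Your Fatou/dominated-convergence argument on the quantile representation also supplies the closedness of the feasible set, which the paper only asserts.

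As the paper does (it explicitly takes $g$ to be continuous), you should simply assume $g$ is continuous at $0$ and drop the Step 3 patch. Adding a small mass at $M$ ``using slack'' fails when the BW constraint binds, and attainment can then genuinely fail. Take $\lambda=0$, $X\equiv M$ under $\Q$, $g(t)=c\mathds{1}_{(0,1]}(t)+(1-c)t$ with $c\in(0,1)$, $\varphi(x)=x^2$, $M=2$, $F_0$ uniform on $[0,1]$ and $\epsilon<1$. The objective becomes $c\,\mathrm{ess\,sup}_F X+(1-c)\E_F[X]$, whose supremum $2c+(1-c)(\tfrac12+\sqrt{\epsilon})$ is approached but never attained. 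Your restriction to $\tilde{\mathcal S}(S_0,\epsilon)$ rescues the argument only when $S_0>0$ on $[0,M)$.
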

\noindent To this end, solving Problem~\eqref{Prob:inner} boils down to solving 
\begin{equation}\label{eq:totalLg}
\begin{aligned}
\sup_{S\in\tilde{\mathcal S}(S_0,\epsilon)}\ \mathscr{L}(S(x);\beta):= 
\int_0^{M} \bigg\{&g(S(x))\wedge\left((1+\lambda)(1+\theta)S_\Q(x)-\lambda\mathds{1}_{[0,\overline{V_\alpha}]}(x)\right)\\
&-\beta\Big(\big[\varphi'(x)-\varphi'(0)\big]S(x)-\int_0^{M}\varphi''(y)S_0(y)\wedge S(x)\,dy\Big)
\bigg\}\,dx,
\end{aligned}
\end{equation}
for some $\beta\ge0$.

While the objective function in \eqref{eq:totalLg} is concave in $S(\cdot)$, it is discontinuous in $x$. Consequently, the pointwise maximizer may contain an upward jump at the point of discontinuity, violating the definition of a survival function. We thus temporarily deviate from Problem~\eqref{eq:totalLg} and first solve the following auxiliary problem:
\begin{equation}\label{eq:totalLg-e}
\sup_{G\in\mathcal{G}}\ \mathscr{L}(G(x);\beta)
\end{equation}
where 
 \begin{equation*}
 \mathcal{G}:=\{G: G(x)\in[S_0(x),1], \ \forall x\in[0,M]\}.
  \end{equation*}
We denote by $G^*(x;\beta)$ the solution to Problem \eqref{eq:totalLg-e}. We will show later that the solution to Problem~\eqref{eq:totalLg} can be obtained via a modification approach based on the solution to Problem~\eqref{eq:totalLg-e}.

To solve Problem~\eqref{eq:totalLg-e}, we define
\begin{equation*} 
x_1 := \sup\bigg\{x\in[0,\overline{V_\alpha}): S_{\mathbb Q}(x)\ge \tfrac{1}{1+\theta}\bigg\},\quad x_2:= \sup\bigg\{x\in[\overline{V_\alpha},M): S_{\mathbb Q}(x)\ge \tfrac{1}{(1+\lambda)(1+\theta)}\bigg\}
\end{equation*}
and
\begin{equation*} 
\begin{aligned} 
&\mathscr{A}_1 := \{x\in[x_1,\overline{V_\alpha}):H(x;S_0,\lambda)\ge 0\},\quad \mathscr{B}_1 := [x_1,\overline{V_\alpha})\setminus\mathscr{A}_1,\\ 
&\mathscr{A}_2 := \{x\in[x_2,M]:H(x;S_0,\lambda)\ge 0\},\quad \mathscr{B}_2 := [x_2,M]\setminus\mathscr{A}_2. 
\end{aligned} 
\end{equation*}
The following proposition characterizes the worst-case functions in $\mathcal{G}$ that solve Problem \eqref{eq:totalLg-e}.

\begin{proposition}\label{prop:beta>0}
The solution to Problem $\eqref{eq:totalLg-e}$ is given by
\begin{equation*}
\begin{aligned}
  G^*(x;\beta)=&\hat G(x;\beta)\mathds{1}_{[0,x_1)}(x)
+\big(\hat G(x;\beta)\wedge g^{-1}\left((1+\lambda)(1+\theta)S_\Q(x)-\lambda\right)\big)\mathds{1}_{\mathscr{A}_1}(x)
+S_0(x)\mathds{1}_{\mathscr{B}_1}(x)\\
&+\hat G(x;\beta)\mathds{1}_{[\overline{V_\alpha},x_2)}(x)
+\big(\hat G(x;\beta)\wedge g^{-1}\left((1+\lambda)(1+\theta)S_\Q(x)\right)\big)\mathds{1}_{\mathscr{A}_2}(x)
+S_0(x)\mathds{1}_{\mathscr{B}_2}(x),  
\end{aligned}
\end{equation*}
where
\begin{equation*}
\hat G(x;\beta)=\inf\{t\in[S_0(x),1]: K'(t;x)\le 0\}
\end{equation*}
and
\begin{equation*}
K(t;x):=g(t)-\beta\!\left(\big[\varphi'(x)-\varphi'(0)\big]t-\int_0^{M}\varphi''(y)S_0(y)\wedge  t\,dy\right).
\end{equation*}
Furthermore, $\hat{G}(x;0)=g^{-1}(1)\vee S_0(x)$ and
\begin{equation*}
    G^*(\cdot;0)\in \arg\inf_{G\in\mathcal{G}_0}\ \int_0^M \Big(\big[\varphi'(x)-\varphi'(0)\big]G(x)-\int_0^{M}\varphi''(y)\big(S_0(y)\wedge G(x)\big)\,dy\Big)\,dx,
\end{equation*}
where 
 \begin{equation*}
 \mathcal{G}_0=\{\tilde{G}\in\mathcal{G}:\ \mathscr{L}(\tilde{G}(x);0)=\sup_{G\in\mathcal{G}}\ \mathscr{L}(G(x);0)\}.
  \end{equation*}
\end{proposition}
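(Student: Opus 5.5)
Since the constraint set $\mathcal{G}$ imposes no monotonicity in $x$, Problem~\eqref{eq:totalLg-e} decouples into a family of one-dimensional problems. For each fixed $x\in[0,M]$ we maximize over $t=G(x)\in[S_0(x),1]$ the integrand
\begin{equation*}
\Phi(t;x):=g(t)\wedge c(x)-\beta\Big(\big[\varphi'(x)-\varphi'(0)\big]t-\int_0^M\varphi''(y)\,S_0(y)\wedge t\,dy\Big),
\end{equation*}
where $c(x):=(1+\lambda)(1+\theta)S_\Q(x)-\lambda\mathds{1}_{[0,\overline{V_\alpha}]}(x)$. A pointwise maximizer that is Lebesgue measurable in $x$ then solves \eqref{eq:totalLg-e}. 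Measurability of the selection below is clear, because every ingredient ($S_0$, $S_\Q$, $g^{-1}$, and the infimum defining $\hat G$) is monotone or Borel.

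The first step is concavity. The map $t\mapsto -\beta\big([\varphi'(x)-\varphi'(0)]t-\int_0^M\varphi''(y)\,S_0(y)\wedge t\,dy\big)$ is concave, since $\varphi''\ge0$ and $t\mapsto S_0(y)\wedge t$ is concave. Because $g$ is concave, $K(\cdot;x)$ is concave on $[S_0(x),1]$. Its unconstrained maximizer over this interval is therefore the first point where the right derivative becomes nonpositive, namely $\hat G(x;\beta)$.

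Next we handle the cap $g(t)\wedge c(x)$ by splitting according to the value of $c(x)$.
\begin{enumerate}[(i)]
\item If $c(x)\ge 1\ge g(t)$, the cap never binds, $\Phi=K$, and the maximizer is $\hat G$. On $[0,\overline{V_\alpha})$ this happens exactly when $S_\Q(x)\ge 1/(1+\theta)$, i.e.\ on $[0,x_1)$. On $[\overline{V_\alpha},M]$ it happens when $S_\Q(x)\ge 1/((1+\lambda)(1+\theta))$, i.e.\ on $[\overline{V_\alpha},x_2)$. Here we use the monotonicity of $S_\Q$ to identify these sets with intervals.
\item If $c(x)<1$, then $\Phi(t;x)=\min\{g(t),c(x)\}+(\text{linear-concave part})$. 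Since $t\mapsto\min\{g(t),c(x)\}$ is concave, $\Phi$ is concave with kink at $g^{-1}(c(x))$. For $t$ above this kink, $\Phi$ equals the nonincreasing penalty term, because the BW part $[\varphi'(x)-\varphi'(0)]t-\int\varphi''(S_0\wedge t)$ has derivative $\varphi'(x)-\varphi'(0)-\int_{\{S_0(y)>t\}}\varphi''\ge0$ for $t\ge S_0(x)$. Hence the maximizer is $\hat G\wedge g^{-1}(c(x))$ whenever $g^{-1}(c(x))\ge S_0(x)$, which is equivalent to $H(x;S_0,\lambda)=c(x)-g(S_0(x))\ge0$. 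This gives the sets $\mathscr{A}_1$ and $\mathscr{A}_2$, where $c(x)$ is the expression with and without the $-\lambda$ term respectively.
\item If instead $H(x;S_0,\lambda)<0$ (the sets $\mathscr{B}_i$), the cap is already binding at $S_0(x)$. Then $\Phi$ is nonincreasing on the whole interval and the maximizer is $S_0(x)$.
\end{enumerate}
Assembling the three cases yields the displayed formula for $G^*(x;\beta)$.

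For $\beta=0$ we have $K=g$, so $\hat G(x;0)=\inf\{t\ge S_0(x): g'(t)\le0\}$. Since $g$ is increasing and concave, this infimum is the smallest point where $g$ reaches $1$, giving $g^{-1}(1)\vee S_0(x)$ with $g^{-1}$ read as the left inverse. When $\beta=0$ the maximizer is not unique, because any $t$ with $g(t)\wedge c(x)$ maximal is optimal. The claim that $G^*(\cdot;0)$ minimizes the BW functional over $\mathcal{G}_0$ follows from two facts. First, $\mathcal{G}_0$ consists of the pointwise maximizers of $g(t)\wedge c(x)$, which form intervals $[\text{smallest maximizer},1]$. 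Second, the BW functional is nondecreasing in $t\ge S_0(x)$ pointwise (shown in (ii)), and $G^*(\cdot;0)$ selects the smallest maximizer at every $x$.

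The main obstacle is the derivative bookkeeping in step (ii). We must verify carefully that the penalty term is nondecreasing in $t$ on $[S_0(x),1]$, which is what makes truncation at the kink optimal. This requires writing $\varphi'(x)-\varphi'(0)=\int_0^x\varphi''$ and using that $S_0(y)>t\ge S_0(x)$ forces $y<x$. We must also treat the boundary ties in $g^{-1}$ consistently when $g$ has flat parts.
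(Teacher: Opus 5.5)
Your proposal is correct and follows essentially the same route as the paper's proof. Both proceed by pointwise maximization over $t\in[S_0(x),1]$, use the concavity of $K(\cdot;x)$, and rely on the fact that the BW penalty $t\mapsto[\varphi'(x)-\varphi'(0)]t-\int_0^M\varphi''(y)\,S_0(y)\wedge t\,dy$ is nondecreasing on $[S_0(x),1]$ (the paper's $K_2'(t^+;x)\le 0$); this makes truncation at $g^{-1}(c(x))$ optimal on $\mathscr{A}_i$ and $S_0(x)$ optimal on $\mathscr{B}_i$. The differences are cosmetic: you treat $\beta=0$ and $\beta>0$ together and explicitly justify the smallest-maximizer selection over $\mathcal{G}_0$ via that monotonicity, which the paper only asserts, while the paper splits the cases and also shows $\hat G(\cdot;\beta)$ is decreasing in $x$ on $[0,x_1)$ and $[\overline{V_\alpha},x_2)$; that is not needed for the statement but is what actually backs your measurability remark.
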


One can verify that $G^*(x;\beta)$ is decreasing on the intervals $[0,\overline{V_\alpha})$ and $[\overline{V_\alpha},M]$, respectively. However, it may fail to be decreasing on the entire interval $[0,M]$. For instance, if $\overline{V_\alpha}-\in \mathscr{A}_1$, $\overline{V_\alpha}\in \mathscr{A}_2$, and $S_\Q$ is continuous at $\overline{V_\alpha}$, then
\begin{align*}
 G^*(\overline{V_\alpha}-;0)
= g^{-1}\big((1+\lambda)(1+\theta)S_{\mathbb Q}(\overline{V_\alpha})-\lambda\big)
\le G^*(\overline{V_\alpha};0)
= g^{-1}\big((1+\lambda)(1+\theta)S_{\mathbb Q}(\overline{V_\alpha})\big),
\end{align*}
and for $\beta>0$,
\begin{align*}
&G^*(\overline{V_\alpha}-;\beta)
= \hat G(\overline{V_\alpha};\beta)\wedge g^{-1}\big((1+\lambda)(1+\theta)S_{\mathbb Q}(\overline{V_\alpha})-\lambda\big)\\
\le&\ G^*(\overline{V_\alpha};\beta)
=\hat G(\overline{V_\alpha};\beta)\wedge g^{-1}\big((1+\lambda)(1+\theta)S_{\mathbb Q}(\overline{V_\alpha})\big).
\end{align*}
If such an upward jump exists, i.e., $G^*(\overline{V_\alpha}-;\beta)<G^*(\overline{V_\alpha};\beta)$, the worst-case survival function solving Problem \eqref{eq:totalLg} can be derived via the following constructive modification procedure. Let
\begin{align}\label{eq:modified-S}
S(x;b(\beta))=&\ \left(G^*(x;\beta)\vee b(\beta)\right)\mathds{1}_{[0,\overline{V_\alpha}]}(x)+\left(G^*(x;\beta)\wedge b(\beta)\right)\mathds{1}_{[\overline{V_\alpha},M]}(x) \\
=&\ \begin{cases}
G^*(x;\beta), & x\in [0,a_1(\beta))\cup [a_2(\beta),M],\\
b(\beta), & x\in [a_1(\beta),a_2(\beta)),
\end{cases} \nonumber
\end{align}
where
\begin{equation*}
a_1(\beta):=\sup\big\{x\in[0,\overline{V_\alpha}): G^*(x;\beta)\ge b(\beta)\big\},\quad\
a_2(\beta):=\sup\big\{x\in[\overline{V_\alpha},M): G^*(x;\beta)\ge b(\beta)\big\}.
\end{equation*}

Note that $S(x;b(\beta))$ is a well-defined survival function on $[0,M]$. To present the main result of this section, we define
\begin{equation}
   \Psi(\beta):= \int_0^M\left([\varphi'(x)-\varphi'(0)]S(x;b^*(\beta))-\int_0^M\varphi''(y)(S_0(y)\wedge S(x;b^*(\beta)))\,dy\right)\,dx.
\end{equation}
The following theorem establishes the optimality of \eqref{eq:modified-S} for Problem \eqref{eq:totalLg} and presents the solution to Problem \eqref{Prob:inner}.

\begin{theorem}\label{th:beta}
Let 
 \begin{equation*}
\mathcal{O}_{b(\beta)}=\left\{
\begin{aligned}
    &[G^*(\overline{V_\alpha}-;\beta), G^*(\overline{V_\alpha};\beta)],&\quad&\text{if}\ G^*(\overline{V_\alpha}-;\beta)< G^*(\overline{V_\alpha};\beta), \\
    &\{G^*(\overline{V_\alpha};\beta)\},&\quad&\text{if}\ G^*(\overline{V_\alpha}-;\beta)\ge G^*(\overline{V_\alpha};\beta).
\end{aligned}
\right.
 \end{equation*}
For any survival function $S\in\tilde{\mathcal S}(S_0,\epsilon)$, there exists a $b(\beta)\in \mathcal{O}_{b(\beta)}$, such that $S(\cdot;b(\beta))\in \tilde{\mathcal S}(S_0,\epsilon)$ and
\begin{equation*}
\mathscr{L}(S(x;b(\beta));\beta)\ge\mathscr{L}(S(x);\beta).
\end{equation*}
Let
\begin{equation*}
b^*(\beta)\in\arg \sup_{b(\beta)\in \mathcal{O}_{b(\beta)}}\ \mathscr{L}(S(x;b(\beta));\beta),
\end{equation*}
then $S(x;b^*(\beta))$ solves Problem~$\eqref{eq:totalLg}$. Furthermore, if $\Psi(0)\le \zeta$, then $\beta^*=0$,
and $S(x;b^*(0))$ solves Problem~\eqref{Prob:inner}. Otherwise, there exists a $\beta^*>0$ such that $\Psi(\beta^*)=\zeta$ and $S(x;b^*(\beta^*))$ solves Problem \eqref{Prob:inner}.
\end{theorem}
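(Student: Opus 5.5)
The proof splits into three parts: a modification (rearrangement) argument establishing the first claim, optimality of $S(\cdot;b^*(\beta))$ for \eqref{eq:totalLg}, and the choice of the multiplier $\beta^*$.

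\textbf{Step 1: the modification inequality.} Write the integrand of $\mathscr{L}$ as $\ell(s;x)$, so that $\mathscr{L}(S;\beta)=\int_0^M\ell(S(x);x)\,dx$. For each fixed $x$, the map $s\mapsto\ell(s;x)$ is concave on $[S_0(x),1]$. Concavity follows because $g$ is concave, the minimum of concave functions is concave, and $t\mapsto-[\varphi'(x)-\varphi'(0)]t+\int_0^M\varphi''(y)(S_0(y)\wedge t)\,dy$ is concave (since $\varphi''\ge0$). By Proposition~\ref{prop:beta>0}, $G^*(x;\beta)$ is a pointwise maximizer of $\ell(\cdot;x)$ over $[S_0(x),1]$. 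Hence $\ell(\cdot;x)$ is nondecreasing on $[S_0(x),G^*(x;\beta)]$ and nonincreasing on $[G^*(x;\beta),1]$.

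Fix a survival function $S\in\tilde{\mathcal S}(S_0,\epsilon)$ and choose
\begin{equation*}
b:=S(\overline{V_\alpha}-)\ \text{clipped to}\ \mathcal{O}_{b(\beta)}.
\end{equation*}
We compare $S$ with $S(\cdot;b)$ separately on $[0,\overline{V_\alpha})$ and on $[\overline{V_\alpha},M]$.

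On $[0,\overline{V_\alpha})$, $S(\cdot;b)=G^*\vee b$. At points where $G^*\ge b$, the modified function equals the pointwise maximizer, so nothing is lost. At points where $G^*<b$, monotonicity of $S$ gives $S(x)\ge S(\overline{V_\alpha}-)$, and we need $S(x)\ge b$ there. This holds when $b=S(\overline{V_\alpha}-)$. If the clipping lowered $b$ to $G^*(\overline{V_\alpha}-)$, it still holds, because $S(x)\ge S(\overline{V_\alpha}-)>b$. If the clipping raised $b$, then $S(\overline{V_\alpha}-)<G^*(\overline{V_\alpha}-;\beta)\le G^*(x;\beta)$ for the relevant $x$, since $G^*$ is decreasing on $[0,\overline{V_\alpha})$; this case needs a separate check. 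In every case $b$ lies between $S(x)$ and $G^*(x)$, or beyond $G^*$ on the same side as $S$ but closer to it. Unimodality of $\ell(\cdot;x)$ then gives $\ell(b;x)\ge\ell(S(x);x)$.

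The interval $[\overline{V_\alpha},M]$ is handled symmetrically, using $S\le S(\overline{V_\alpha})\le S(\overline{V_\alpha}-)$ and $G^*\wedge b$.

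I expect the main obstacle to be the careful case analysis of the clipping to $\mathcal{O}_{b(\beta)}$. One must show that some $b$ in that interval lies, for every $x$ simultaneously, on the correct side relative to both $S(x)$ and $G^*(x)$. I would organize it by whether $S(\overline{V_\alpha}-)$ falls below, inside, or above $\mathcal{O}_{b(\beta)}$.

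The modified function must also stay feasible. Here I use that the BW term is itself part of $\ell$ with weight $\beta$. Feasibility of $S(\cdot;b)$ in the constraint would need separate care; I would argue that on the set where $S(\cdot;b)$ differs from $S$ it moves toward $G^*$, hence toward $S_0$ where the penalty dominates. If needed, I would instead invoke Lagrangian duality and note that feasibility is only required at the optimal $\beta^*$.

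\textbf{Step 2: optimality for \eqref{eq:totalLg}.} The map $b\mapsto\mathscr{L}(S(\cdot;b);\beta)$ is continuous on the compact set $\mathcal{O}_{b(\beta)}$, so $b^*(\beta)$ exists. Step~1 then gives $\mathscr{L}(S(\cdot;b^*);\beta)\ge\mathscr{L}(S;\beta)$ for every admissible $S$, which is optimality of $S(\cdot;b^*(\beta))$ for \eqref{eq:totalLg}.

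\textbf{Step 3: the multiplier.} Strong duality holds, as noted before Theorem~\ref{thm:sol-exist-inner}, and an optimal primal solution exists by that theorem. The KKT conditions require $\beta^*\ge0$, primal feasibility $\Psi(\beta^*)\le\zeta$, and complementary slackness $\beta^*(\Psi(\beta^*)-\zeta)=0$.

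If $\Psi(0)\le\zeta$, then $\beta^*=0$ works. By the last assertion of Proposition~\ref{prop:beta>0}, $G^*(\cdot;0)$ is the least-penalty maximizer, so $S(\cdot;b^*(0))$ is feasible and optimal.

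Otherwise, I would show two things. First, $\Psi$ is nonincreasing in $\beta$. This is the standard monotonicity of the constraint value in the multiplier, obtained from the exchange inequality that compares optimality at $\beta_1$ and at $\beta_2$. Second, $\Psi(\beta)\to\int\cdots S_0=\zeta-\epsilon+\text{const}<\zeta$ as $\beta\to\infty$, because $\hat G(\cdot;\beta)\to S_0$. Continuity of $\Psi$, derived from continuity of $\hat G$ in $\beta$ and dominated convergence, then yields $\beta^*$ with $\Psi(\beta^*)=\zeta$ by the intermediate value theorem. Complementary slackness and the saddle-point property then give that $S(\cdot;b^*(\beta^*))$ solves \eqref{Prob:inner}.

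If continuity of $\Psi$ fails, for instance because of nonuniqueness of $b^*$, I would fall back on the convexity of the dual function in $\beta$ and select $b^*$ from the subdifferential.
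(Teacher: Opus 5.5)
\textbf{Steps 1 and 2.} These are essentially the paper's argument. The paper also takes $b$ to be the value of $S$ at $\overline{V_\alpha}$ clipped to $\mathcal{O}_{b(\beta)}$, but it uses $S(\overline{V_\alpha})$ rather than $S(\overline{V_\alpha}-)$. Any value in $[S(\overline{V_\alpha}),S(\overline{V_\alpha}-)]$ works, because $S\ge S(\overline{V_\alpha}-)$ on $[0,\overline{V_\alpha})$ and $S\le S(\overline{V_\alpha})$ on $[\overline{V_\alpha},M]$. The paper then checks that $S(\cdot;b)$ lies pointwise between $S$ and $G^*$, and gets $b^*(\beta)$ from compactness of $\mathcal{O}_{b(\beta)}$ and dominated convergence. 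Your concavity and unimodality of $\ell(\cdot;x)$ supply exactly the justification the paper leaves as ``intuitively''.

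Two small clean-ups in Step 1:
\begin{itemize}
\item When clipping lowers $b$, it goes to $G^*(\overline{V_\alpha};\beta)$, not to $G^*(\overline{V_\alpha}-;\beta)$.
\item The case you defer to a ``separate check'' is trivial. If $b=G^*(\overline{V_\alpha}-;\beta)$, then $G^*\ge b$ on all of $[0,\overline{V_\alpha})$, since $G^*$ is decreasing there. So $S(\cdot;b)=G^*$ on that interval.
\end{itemize}

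Your feasibility heuristic is wrong. Moving toward $G^*$ is not moving toward $S_0$: on $[\overline{V_\alpha},a_2(\beta))$ the modification raises $S$ up to $b\le G^*$, i.e.\ away from $S_0$. So $S(\cdot;b)$ can leave the BW ball. Your fallback is the consistent reading. The Lagrangian problem is effectively posed over survival functions dominating $S_0$, and that (plus monotonicity) is all the paper's proof actually verifies for $S(\cdot;b)$. The ball constraint is recovered only at $\beta^*$.

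\textbf{Step 3.} Here you genuinely differ from the paper. The paper does not construct $\beta^*$. It obtains it from the Lagrange duality theorem (Theorem 1 in Section 8.3 of \cite{luenberger1997optimization}), using only existence of a primal optimum (Theorem~\ref{thm:sol-exist-inner}) and Slater's condition. No monotonicity or continuity of $\Psi$ is needed.

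You instead apply the intermediate value theorem to $\Psi$.
\begin{itemize}
\item Monotonicity is fine: your exchange inequality compares maximizers over the $\beta$-independent relaxed set.
\item The limit $\Psi(\beta)\to\zeta-\epsilon<\zeta$, the penalty of $S_0$, is fine.
\item Continuity is not justified by what you cite. $\Psi$ depends on the argmax selection $b^*(\beta)$, not only on $\hat G(\cdot;\beta)$.
\end{itemize}
If at some $\beta$ the Lagrangian has maximizers with different BW penalties, $\Psi$ can jump across $\zeta$. Then no $\beta$ with $\Psi(\beta)=\zeta$ exists for that selection. This can happen when $b\mapsto\mathscr{L}(S(\cdot;b);\beta)$ is not strictly concave, for example when $\ell(\cdot;x)$ has linear pieces coming from a piecewise-linear $g$ together with atoms of $F_0$.

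Your constructive route gives a monotone search for $\beta^*$, which is useful for the numerics of Section~\ref{sec:concrete}. But to make it rigorous you need uniqueness of $b^*(\beta)$ plus a Berge-type continuity argument. The duality route needs neither. Your subdifferential fallback is essentially the paper's route, and you should make it the main argument.
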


\begin{figure}[!htbp]
\centering
\minipage{0.3\textwidth}
 \centering
  \includegraphics[width=\linewidth]{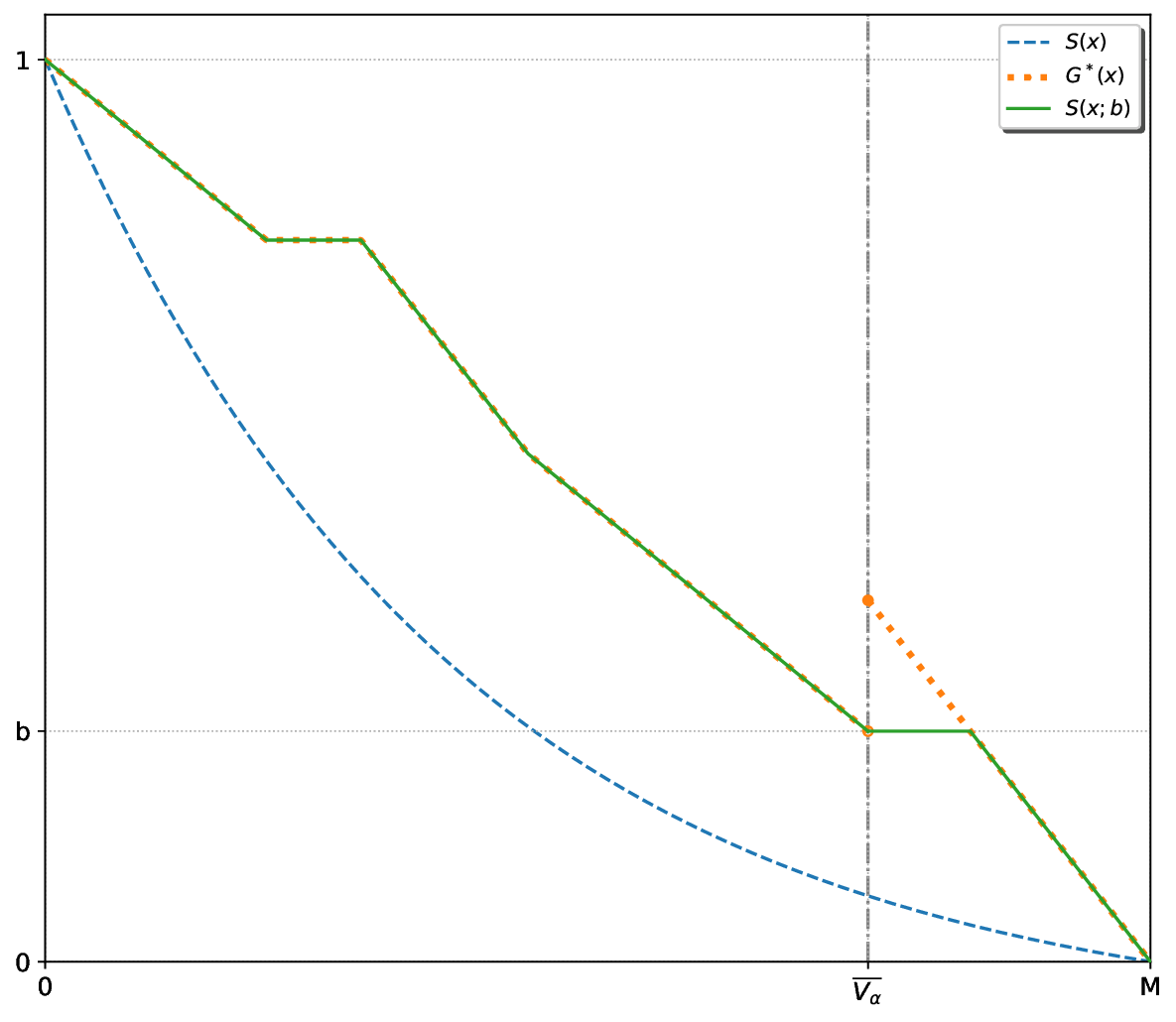}
\endminipage
\hspace{0.2cm}
\minipage{0.3\textwidth}
  \centering
  \includegraphics[width=\linewidth]{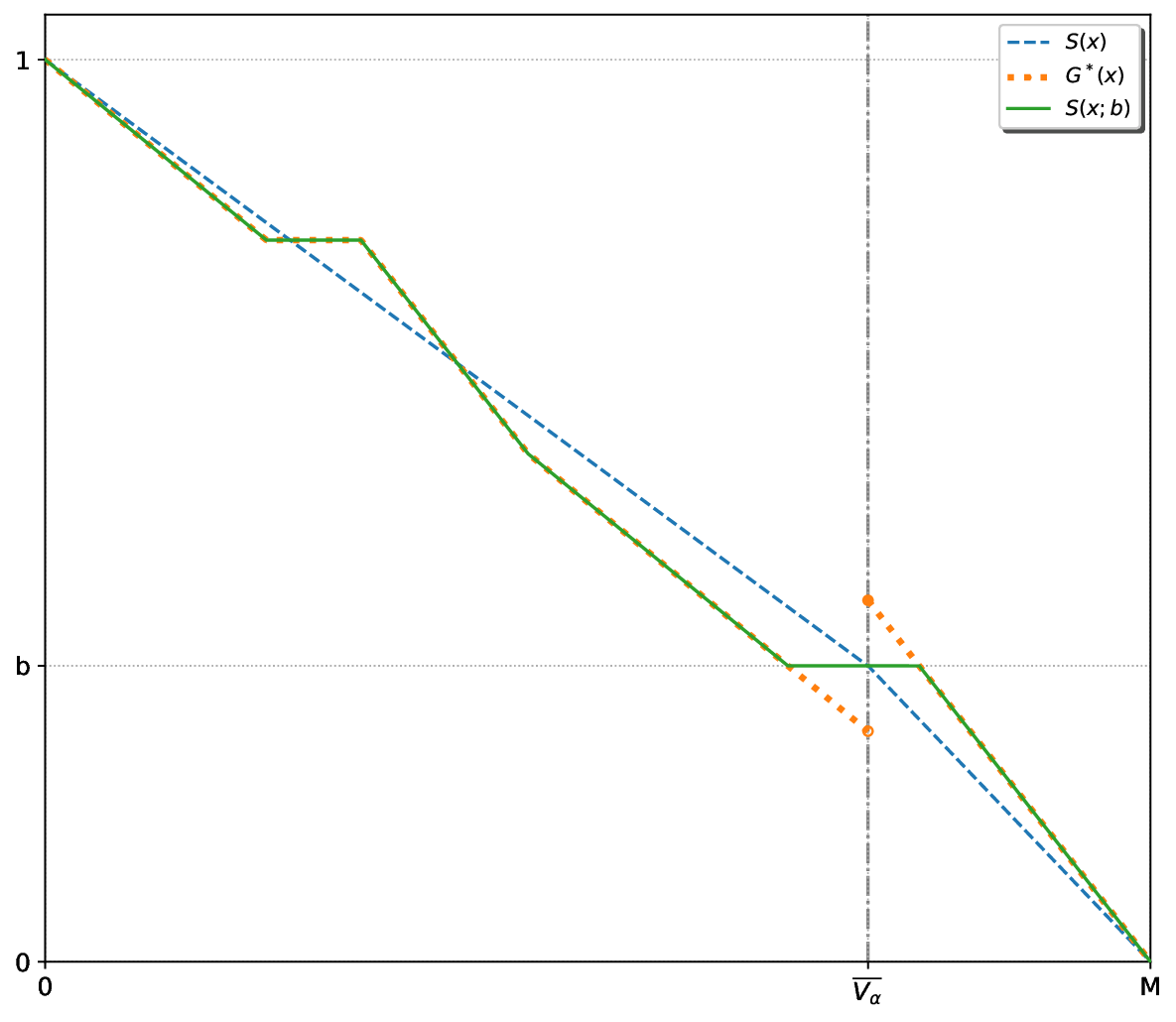}
\endminipage
\hspace{0.2cm}
\minipage{0.3\textwidth}
  \centering
  \includegraphics[width=\linewidth]{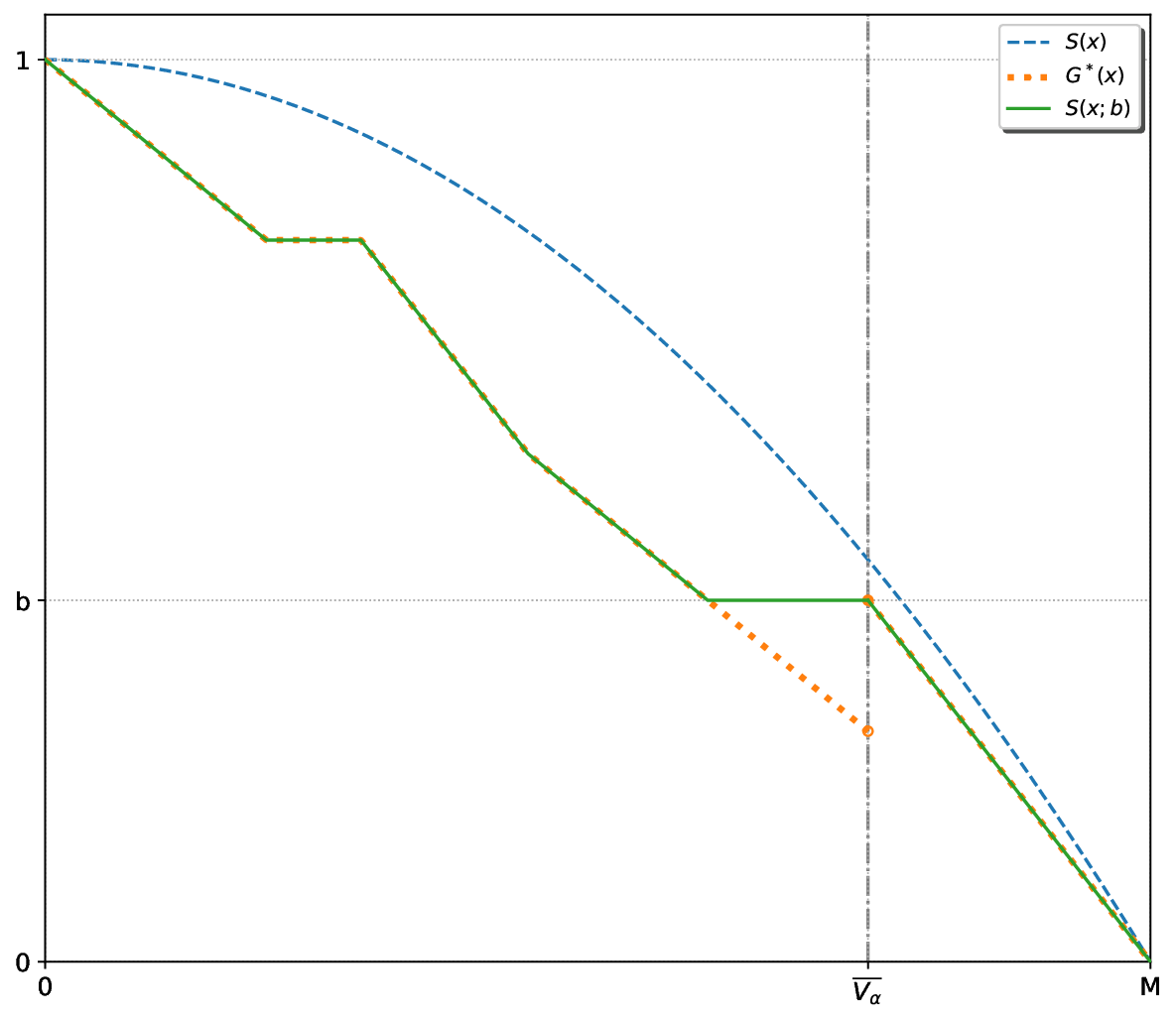}
\endminipage
\caption{Graphical illustration of Theorem~\ref{th:beta} when $\beta^*=0$: (Left) $S(\overline{V_\alpha})\in\big(G^*(\overline{V_\alpha}-;0),\,G^*(\overline{V_\alpha};0)\big)$; (Middle) $S(\overline{V_\alpha})\le G^*(\overline{V_\alpha}-;0)$; (Right) $S(\overline{V_\alpha})\ge G^*(\overline{V_\alpha};0)$.}
\label{fig:S_xb}
\end{figure}

Figure~\ref{fig:S_xb} provides a graphical illustration of Theorem~\ref{th:beta} when $\beta^*=0$. We are now prepared to present the interpretations of the results of Theorem \ref{th:beta}, with a focus on the case where $\beta^*=0$, along with findings regarding the effect of $\lambda$ on the worst-case distribution. The results for the case where $\beta^*>0$ can be interpreted similarly. Since the worst-case survival function is derived from the solution to \eqref{eq:totalLg-e}, key insights focus on the behavior and characteristics of $G^*(x; \beta)$.
\begin{itemize}
    \item[(a).] By \eqref{Prob:inner}, the DM has no incentive to exaggerate the loss if $g(S_0(x))=1$, resulting in the worst-case survival function $g^{-1}(1)\vee S_0(x)$ over $[0,x_1)\cup[\overline{V_\alpha},x_2)$.
    \item[(b).] As explained in Section \ref{sec:sol-prob2-indemnity}, the DM purchases full marginal insurance when the net premium $H(x;S,\lambda)$ is negative. Since $H(x;S,\lambda)\le H(x;S_0,\lambda)$ for $S\in\tilde{\mathcal S}(S_0,\epsilon)$, the DM has no incentive to consider the worst-case losses in $\mathscr{B}_1\cup\mathscr{B}_2$ as those losses are ceded to the insurer.
    \item[(c).] The net premium $H(x;S,\lambda)$ increases with $\lambda$ when $x\in[0,x_1)\cup[\overline{V_\alpha},M]$ but decreases with $\lambda$ when $x\in[x_1,\overline{V_\alpha})$. Consequently, the incentive to insure the loss is reduced over the former intervals and increased over the latter when $\lambda$ increases. This results in smaller $\mathscr{A}_1$ and $\mathscr{B}_2$ but larger $\mathscr{B}_1$ and $\mathscr{A}_2$. Thus, the DM tends to exaggerate large losses (beyond the worst-case VaR) while staying with the benchmark distribution for small ones under a tighter constraint. 
    \item[(d).] The finding of (c) explains the possible upward jump of $G^*(x;\beta)$ at $\overline{V_\alpha}$. Note that the upward jump occurs exclusively when $\lambda>0$, i.e., when the constraint becomes binding. A binding VaR constraint, which restricts the distribution only by the value of its $\alpha$-quantile and not by the allocation of probability mass around it, directly produces the flat segment observed in $S(x;b)$.
\end{itemize}

Note that Problem \eqref{Prob:inner} depends on $\lambda$. We now abuse the notation a bit and write the solution to Problem \eqref{Prob:inner} as $S_\lambda(x;b_\lambda^*(\beta_\lambda^*))$. We end this section by presenting the following Theorem that contains the solution to Problem \ref{Prob:main2}.

\begin{theorem}\label{thm:main-prob2}
    The optimal insurance indemnity function that solves Problem \ref{Prob:main2} is given by $I^*(x;S_{\lambda^*}(x;b_{\lambda^*}^*(\beta^*_{\lambda^*})),\lambda^*)$, as established in Theorem \ref{thm:op-indemnity}, and the corresponding worst-case survival function is given by $S_{\lambda^*}(x;b_{\lambda^*}^*(\beta^*_{\lambda^*}))$, as demonstrated in Theorem \ref{th:beta}. The function $\tilde{\eta}$ appearing in Theorem \ref{thm:op-indemnity} and the Lagrangian parameter $\lambda^*$ are chosen to satisfy the Karush-Kuhn-Tucker (KKT) conditions
    \begin{align*}
        \lambda^*(\pi(I^*(X;S_{\lambda^*}(x;b_{\lambda^*}^*(\beta^*_{\lambda^*})),\lambda^*))-I^*(\overline{V_\alpha};S_{\lambda^*}(x;b_{\lambda^*}^*(\beta^*_{\lambda^*})),\lambda^*)-A+\overline{V_\alpha})=0
    \end{align*}
    and
    \begin{align*}
        \pi(I^*(X;S_{\lambda^*}(x;b_{\lambda^*}^*(\beta^*_{\lambda^*})),\lambda^*))-I^*(\overline{V_\alpha};S_{\lambda^*}(x;b_{\lambda^*}^*(\beta^*_{\lambda^*})),\lambda^*)\le A-\overline{V_\alpha}.
    \end{align*}
\end{theorem}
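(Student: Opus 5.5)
The plan is to assemble the pieces already established. I would argue through saddle points and complementary slackness rather than by direct computation.

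\textbf{Step 1 (Lagrangian reduction).} By Theorem \ref{thm:sol-exist-prob2}, Problem \ref{Prob:main2} has a solution. Its objective is convex in $I$, and by Theorems \ref{th4-1}--\ref{th4-2} and \eqref{Prob2:constraint-reduced} its constraint is affine in $I$. Assumption \ref{Assum:non-trivial} gives Slater's condition. Hence convex duality yields a multiplier $\lambda^*\ge 0$ with two properties:
\begin{itemize}
\item[(i)] every solution $I^\star$ of Problem \ref{Prob:main2} minimizes the Lagrangian \eqref{Prob:main2-v2} at $\lambda=\lambda^*$;
\item[(ii)] the pair $(I^\star,\lambda^*)$ satisfies primal feasibility and complementary slackness.
\end{itemize}
Conversely, any $I\in\mathcal{I}$ that minimizes the $\lambda^*$-Lagrangian and satisfies the two displayed KKT conditions is optimal. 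This is the standard sufficiency argument: for any feasible $J$,
\[
\text{obj}(I)=\mathcal{L}(I,\lambda^*)\le \mathcal{L}(J,\lambda^*)\le \text{obj}(J).
\]

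\textbf{Step 2 (saddle point).} Fix $\lambda=\lambda^*$. I would apply Lemma \ref{thm:minimax} to the map
\[
(I,F)\mapsto \rho_g^F(X-I(X)+\pi(I(X)))+\lambda^*\bigl(\pi(I(X))-I(\overline{V_\alpha})-A+\overline{V_\alpha}\bigr),
\]
with $\mathcal{I}$ compact and convex. This gives the min-max equality \eqref{prob:minmax}. Theorem \ref{th:beta} (with Theorem \ref{thm:sol-exist-inner}) says the outer supremum \eqref{Prob:main2-v4} is attained at $S^*:=S_{\lambda^*}(\cdot;b^*_{\lambda^*}(\beta^*_{\lambda^*}))$. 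Theorem \ref{thm:op-indemnity} says the inner infimum for that $S^*$ is attained at $I^*(\cdot;S^*,\lambda^*)$.

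I would then show that this pair is a saddle point. Since $\mathcal{I}$ is compact and the objective is lower semicontinuous, some minimizer $\hat I$ of $\sup_F$ exists. Equality of the inf-sup and sup-inf values then makes $(\hat I,S^*)$ a saddle point, so $\hat I$ lies among the minimizers of the inner problem at $S^*$. By Theorem \ref{thm:op-indemnity}, the minimizers at $S^*$ are exactly the functions of the form $I^*(\cdot;S^*,\lambda^*)$ with some measurable $\tilde\eta$. So $\hat I$ has that form for an appropriate $\tilde\eta$.

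\textbf{Step 3 (selection of $\tilde\eta$ and $\lambda^*$).} Take the $\hat I$ from Step 1(i), namely the true optimum, written as $I^*(\cdot;S^*,\lambda^*)$ with its $\tilde\eta$. Step 1(ii) says it satisfies the two KKT relations. Sufficiency from Step 1 then confirms optimality, and $S^*$ is the worst-case survival function because it attains the inner supremum at the saddle point.

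\textbf{Main obstacle.} The delicate point is the non-uniqueness on $\{H(\cdot;S^*,\lambda^*)=0\}$. Not every choice of $\tilde\eta$ minimizes $\sup_F$ or satisfies complementary slackness, which is why the theorem states that $\tilde\eta$ is \emph{chosen} via KKT. A further subtlety is that a minimizer of the Lagrangian at the fixed $S^*$ need not be a minimizer of the worst-case Lagrangian.

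I would handle this with a two-part check. First, the KKT-selected $I$ attains $\sup_F$ at $S^*$. Second, the value $\inf_I\sup_F$ equals the saddle value. Combining the two, any $I$ of the stated form that also satisfies the KKT conditions and makes $S^*$ a best response attains the optimal value.

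If verifying that $S^*$ is a best response proves hard, I would instead argue by existence only. Pick the true optimizer $\hat I$ and read off its $\tilde\eta$ as in Step 2, which suffices for the theorem as stated.
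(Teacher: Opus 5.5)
Your proposal is correct and takes the same route as the paper. Slater's condition and Luenberger's multiplier theorem supply $\lambda^*$ together with complementary slackness; Fan's minimax theorem interchanges the order; and Theorems \ref{thm:op-indemnity} and \ref{th:beta} give the inner indemnity and the worst-case survival function. Your explicit saddle-point step fills in a link the paper leaves implicit: the true optimizer minimizes the Lagrangian at $S^*$, so it has the stated form for some $\tilde\eta$, and $S^*$ is a best response to it. Your existential reading in $\tilde\eta$ is also the safe one, because a minimizer at the fixed $S^*$ that satisfies the KKT conditions need not minimize the worst-case Lagrangian.
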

In view of Theorem \ref{thm:sol-exist-prob2}, together with the strong duality for Problem \ref{Prob:main2}, the existence of the Lagrange multiplier $\lambda^*$, as well as the associated KKT conditions, follows from Theorem 1 in Section 8.3 of \cite{luenberger1997optimization}.

\subsection{A concrete example under TVaR}\label{sec:concrete}
This section presents a specific example of Problem \eqref{Prob:inner} with $\rho_g(\cdot)=\mathrm{TVaR}_\alpha(\cdot)$. To ease our discussions, we assume that $S_0=S_\Q$, i.e., the DM applies the loss distribution under the insurer's subjective belief as the benchmark distribution, and that $S_\Q$ has continuous support over $[0,M]$. In practice, the confidence level $\alpha$ for $\mathrm{TVaR}_\alpha(\cdot)$ is set to be large (close to $1$). Hence, we focus on the case where $\frac{1}{1+\theta}>1-\alpha$ in this example. To derive the worst-case distribution, we first need to compute $G^*(x;\beta)$, which involves calculating $\hat{G}(x;\beta)$ and partitioning the domain $[0,M]$ into the sub-domains $[0,x_1)$, $\mathscr{A}_1$, $\mathscr{B}_1$, $[\overline{V_\alpha},x_2)$, $\mathscr{A}_2$ and $\mathscr{B}_2$. 

After simple calculations, the right-hand derivative of the function $K(t;x)$ is 
 \begin{equation*}
K'(t;x)=\left\{
\begin{aligned}
&-\beta[\varphi'(x)-\varphi'(F_\Q^{-1}(1-t))],&\quad& t\ge 1-\alpha, \\
&\frac{1}{1-\alpha}-\beta[\varphi'(x)-\varphi'(F_\Q^{-1}(1-t))],&\quad& t< 1-\alpha, 
\end{aligned}
\right.
 \end{equation*}
which yields 
 \begin{equation*}
\hat{G}(x;\beta)=S_\Q(x)\mathds{1}_{[0,\tilde{x}_\alpha^\Q)}(x)+(1-\alpha)\mathds{1}_{[\tilde{x}_\alpha^\Q,\hat{x})}(x)+\hat{S}_\Q(x)\mathds{1}_{[\hat{x},M]}(x),
 \end{equation*}
where $\tilde{x}_\alpha^\Q=\mathrm{VaR}_\alpha^{F_\Q}(X)$, $\hat{x}=[\varphi']^{-1}(\varphi'(\tilde{x}_\alpha^\Q)+\frac{1}{\beta(1-\alpha)})$ and $\hat{S}_\Q(x)=S_\Q([\varphi']^{-1}(\varphi'(x)-\frac{1}{\beta(1-\alpha)}))$.

In our setting, the net price with $S_0=S_\Q$ is $H(x;S_\Q,\lambda)=(1+\lambda)(1+\theta)S_\Q(x)-g(S_\Q(x))-\lambda\mathds{1}_{[0,\overline{V_\alpha}]}(x)$. Figure \ref{fig:TVaR_case123} summarizes the three possible cases, which are elaborated as follows:
\begin{itemize}
    \item \underline{Case (i): $(1+\lambda)(1+\theta)<\frac{1}{1-\alpha}$}.

    This case is illustrated in the first sub-figure of Figure \ref{fig:TVaR_case123}. In this case, we have $H(x;S_\Q,\lambda)<0$ for $x\in(x_1,M)$, which results in 
     \begin{equation*}
    G^*(x;\beta)=\hat{G}(x;\beta)\mathds{1}_{[0,x_1)}(x)+S_\Q(x)\mathds{1}_{[x_1,M]}(x).
     \end{equation*}
    By the definition of $x_1$, we have $S_\Q(x_1)=\frac{1}{1+\theta}>1-\alpha$. Therefore, $x_1<\tilde{x}_\alpha^\Q$. This leads to $G^*(x;\beta)=S_\Q(x)$ for $x\in[0,M]$.
\end{itemize}

\begin{figure}[!htbp]
\centering
\minipage{0.3\textwidth}
 \centering
  \includegraphics[width=\linewidth]{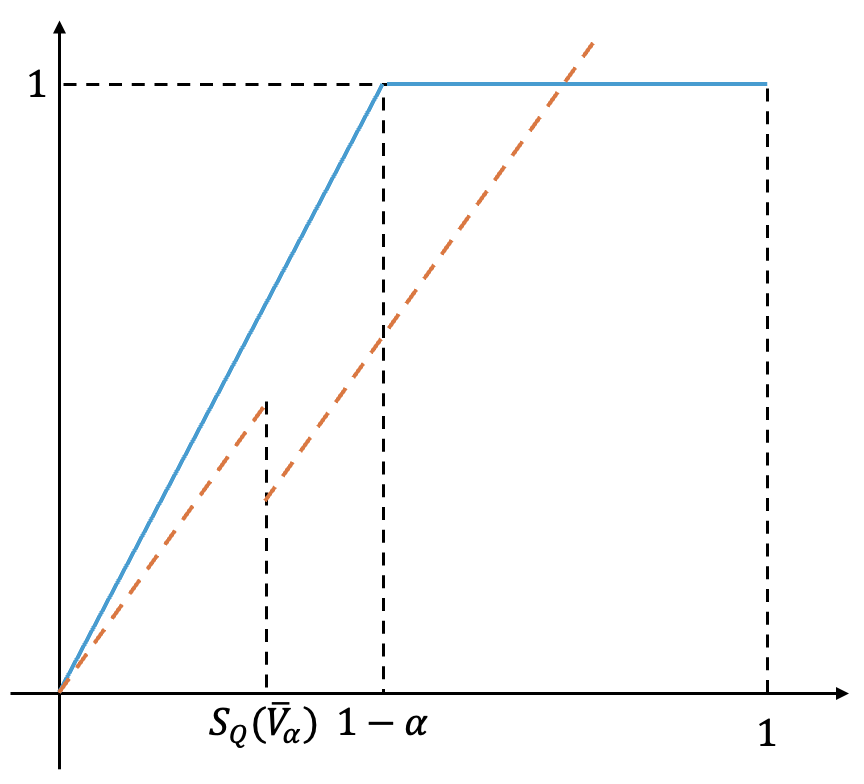}
\endminipage
\hspace{0.2cm}
\minipage{0.3\textwidth}
  \centering
  \includegraphics[width=\linewidth]{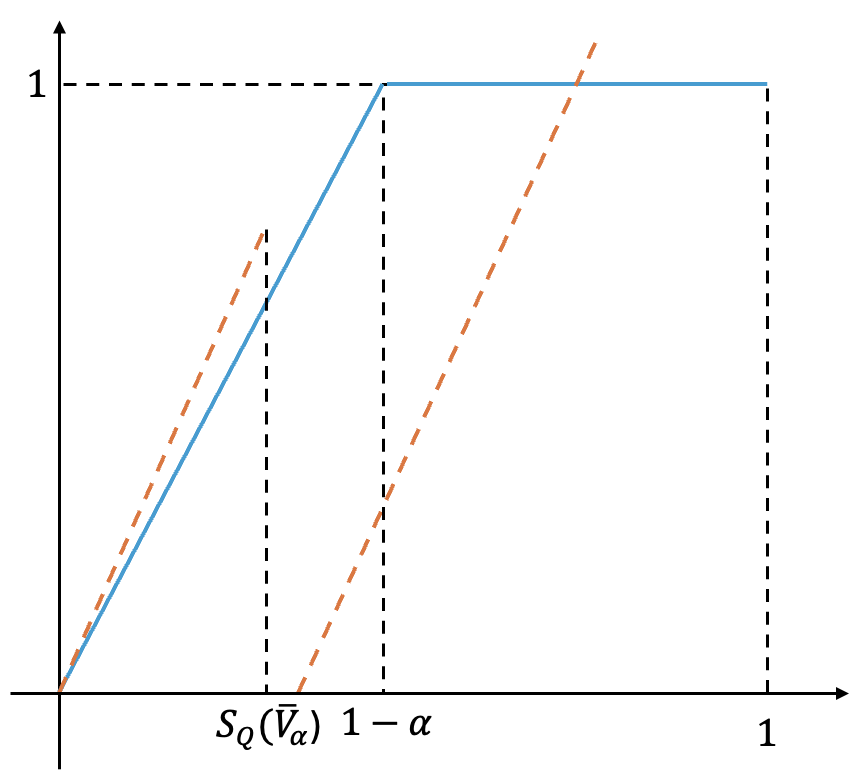}
\endminipage
\hspace{0.2cm}
\minipage{0.3\textwidth}
  \centering
  \includegraphics[width=\linewidth]{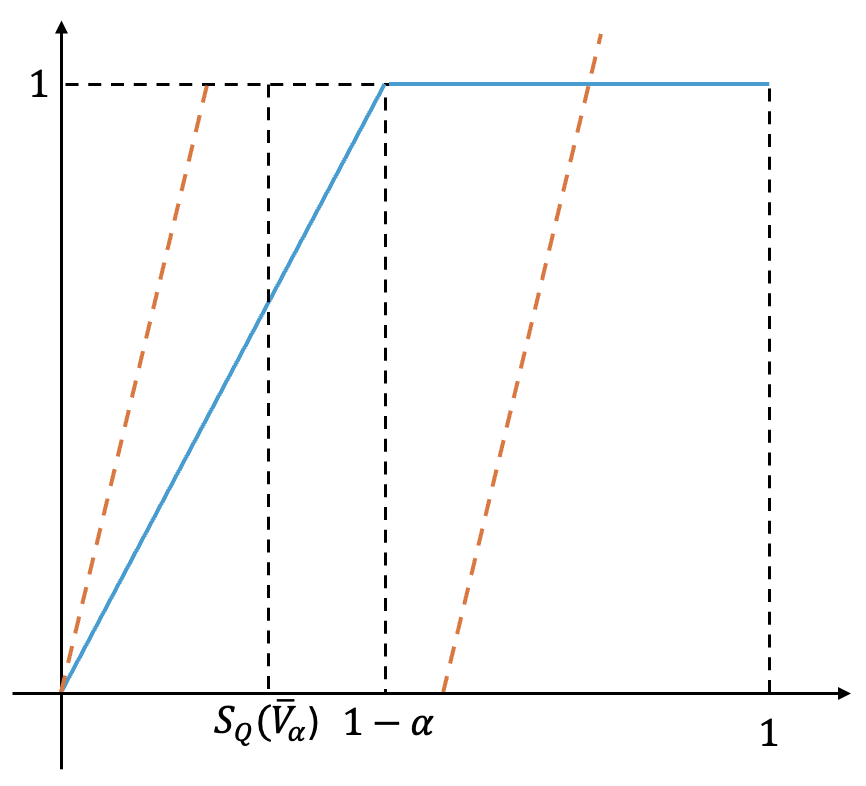}
\endminipage
\caption{The three cases of the function $\hat{H}(t):=(1+\lambda)(1+\theta)t-\lambda\mathds{1}_{[S_\Q(\overline{V_\alpha}),1]}(t)$: (Left) $(1+\lambda)(1+\theta)<\frac{1}{1-\alpha}$; (Middle) $\frac{1}{1-\alpha}\le(1+\lambda)(1+\theta)\le\frac{1}{S_\Q(\overline{V_\alpha})}$; (Right) $(1+\lambda)(1+\theta)>\frac{1}{S_\Q(\overline{V_\alpha})}$.}
\label{fig:TVaR_case123}
\end{figure}

\begin{itemize}
    \item \underline{Case (ii): $\frac{1}{1-\alpha}\le(1+\lambda)(1+\theta)\le\frac{1}{S_\Q(\overline{V_\alpha})}$}.

    This case is illustrated in the second sub-figure of Figure \ref{fig:TVaR_case123}. In this case, the net premium $H(x;S_\Q,\lambda)<0$ over $(x_1,\overline{V_\alpha})$ and $H(x;S_\Q,\lambda)>0$ over $(\overline{V_\alpha},M)$, yielding
     \begin{equation*}
    \begin{aligned}
        G^*(x;\beta)=&\hat{G}(x;\beta)\mathds{1}_{[0,x_1)}(x)+S_\Q(x)\mathds{1}_{[x_1,\overline{V_\alpha})}(x) \\
        &+\hat{G}(x;\beta)\wedge (1-\alpha)(1+\lambda)(1+\theta)S_\Q(x)\mathds{1}_{[\overline{V_\alpha},M]}(x).
    \end{aligned}
     \end{equation*}
    Note that $S_\Q(x_2)\ge S_\Q(\overline{V_\alpha})$ in this specific case, thus, by the definition of $x_2$, we have $(1+\lambda)(1+\theta)S_\Q(\overline{V_\alpha})\le 1$. This results in two possible sub-cases for $G^*(x;\beta)$. If $\hat{x}\le \overline{V_\alpha}$, then
     \begin{equation*}
    G^*(x;\beta)=S_\Q(x)\mathds{1}_{[0,\overline{V_\alpha})}(x)+\hat{S}_\Q(x)\wedge (1-\alpha)(1+\lambda)(1+\theta)S_\Q(x)\mathds{1}_{[\overline{V_\alpha},M]}(x).
     \end{equation*}
    If $\hat{x}>\overline{V_\alpha}$, then 
    \begin{align*}
        G^*(x;\beta)=&S_\Q(x)\mathds{1}_{[0,\overline{V_\alpha})}(x)+(1-\alpha)(1+\lambda)(1+\theta)S_\Q(x)\mathds{1}_{[\overline{V_\alpha},\hat{x})}(x) \\
        &+\hat{S}_\Q(x)\wedge (1-\alpha)(1+\lambda)(1+\theta)S_\Q(x)\mathds{1}_{[\hat{x},M]}(x).
    \end{align*}

    \item \underline{Case (iii): $(1+\lambda)(1+\theta)>\frac{1}{S_\Q(\overline{V_\alpha})}$}.

    This case is illustrated in the third sub-figure of Figure \ref{fig:TVaR_case123}. The only difference between this case and the case (\rmnum{2}) is that $x_2>\overline{V_\alpha}$ here, which yields
    \begin{align*}
        G^*(x;\beta)=&\hat{G}(x;\beta)\mathds{1}_{[0,x_1)}(x)+S_\Q(x)\mathds{1}_{[x_1,\overline{V_\alpha})}(x)+\hat{G}(x;\beta)\mathds{1}_{[\overline{V_\alpha},x_2)}(x) \\
        &+\hat{G}(x;\beta)\wedge (1-\alpha)(1+\lambda)(1+\theta)S_\Q(x)\mathds{1}_{[x_2,M]}(x).
    \end{align*}
    Depending on the location of $\hat{x}$, we have three sub-cases here. If $\hat{x}\le\overline{V_\alpha}$, then 
     \begin{equation*}
    G^*(x;\beta)=S_\Q(x)\mathds{1}_{[0,\overline{V_\alpha})}(x)+\hat{S}_\Q(x)\mathds{1}_{[\overline{V_\alpha},x_2)}(x)+\hat{S}_\Q(x)\wedge (1-\alpha)(1+\lambda)(1+\theta)S_\Q(x)\mathds{1}_{[x_2,M]}(x).
     \end{equation*}
    If $\overline{V_\alpha}<\hat{x}\le x_2$, then
    \begin{align*}
        G^*(x;\beta)=&S_\Q(x)\mathds{1}_{[0,\overline{V_\alpha})}(x)+(1-\alpha)\mathds{1}_{[\overline{V_\alpha},\hat{x})}(x)+\hat{S}_\Q(x)\mathds{1}_{[\hat{x},x_2)}(x) \\
        &+\hat{S}_\Q(x)\wedge (1-\alpha)(1+\lambda)(1+\theta)S_\Q(x)\mathds{1}_{[x_2,M]}(x).
    \end{align*}
    If $x_2<\hat{x}$, then
    \begin{align*}
        G^*(x;\beta)=&S_\Q(x)\mathds{1}_{[0,\overline{V_\alpha})}(x)+(1-\alpha)\mathds{1}_{[\overline{V_\alpha},x_2)}(x)+(1-\alpha)(1+\lambda)(1+\theta)S_\Q(x)\mathds{1}_{[x_2,\hat{x})}(x) \\
        &+\hat{S}_\Q(x)\wedge (1-\alpha)(1+\lambda)(1+\theta)S_\Q(x)\mathds{1}_{[\hat{x},M]}(x).
    \end{align*}
\end{itemize}

Note that the worst-case survival function for Case (\rmnum{1}) is exactly the benchmark survival function. In practice, as $\alpha$ is typically close to 1 (e.g., $99\%$ or $97.5\%$) but $\theta$ is relatively small, the DM would only change her choice of the indemnity function while staying with the benchmark distribution unless the worst-case VaR constraint is sufficiently restrictive. For Cases (\rmnum{2}) and (\rmnum{3}), there might be an upward jump for $G^*(x;\beta)$ at $\overline{V_\alpha}$. Thus, one needs to refer to Theorem \ref{th:beta} for the worst-case survival function. Given the complexity of the results in these theorems, it is nearly impossible to determine $b$ or $b(\beta)$ analytically. Therefore, we compute this parameter and the corresponding Lagrangian multipliers (i.e., $\beta^*_\lambda$ and $\lambda^*$) using numerical methods.  

Our analysis of the three cases reveals that the worst-case survival function tends to exceed the benchmark in the tail region (beyond $\overline{V_\alpha}$) when the constraint on the guaranteed VaR performance becomes effective. This complements the finding of \cite{boonen2024robust} where the DM always stays with the benchmark distribution in the absence of such a constraint. 

Near the end of this section, we present the following numerical example, exploring the effect of $A$ (the acceptable worst-case VaR) on the worst-case survival function as well as the corresponding net price.

\begin{example}
Aligning with Example \ref{example:alpha-maxmin-var}, we assume that the loss follows a truncated exponential distribution with mean 1 and truncation point 100. We use $\varphi(x)=(x+1)\ln(x+1)$ as the Bregman generator. To facilitate computation, this example concerns the effect of $A$ alone instead of the joint effect of $(A,\epsilon)$, where $\epsilon$ denotes the radius of BW ball. For the considered three levels of $A$, the left panel of Figure \ref{fig:TVaR_numerical} displays the worst-case survival functions, where $A=1.406$ corresponds to the case $\lambda=0$. As $A$ decreases, the resulting worst-case survival function becomes larger in the tail region, indicating that the DM would exaggerate more the tail risk. 

The right panel of Figure \ref{fig:TVaR_numerical} displays the net prices for the considered three cases. In the absence of the guaranteed VaR performance, the net price is positive for small and moderate losses and then becomes negative (or non-positive) for large ones, implying that the decision maker optimally selects a stop-loss indemnity. When a guaranteed VaR constraint is imposed, the net price becomes smaller for moderate losses, reflecting the DM's stronger desire to purchase insurance for those losses. In the meantime, the net price becomes zero for tail losses, which, as shown in Theorem \ref{thm:main-prob2}, implies that the coverage for tail losses is chosen to meet the guaranteed VaR constraint.
    
\begin{figure}[!htbp]
\centering
\minipage{0.5\textwidth}
 \centering
  \includegraphics[width=\linewidth]{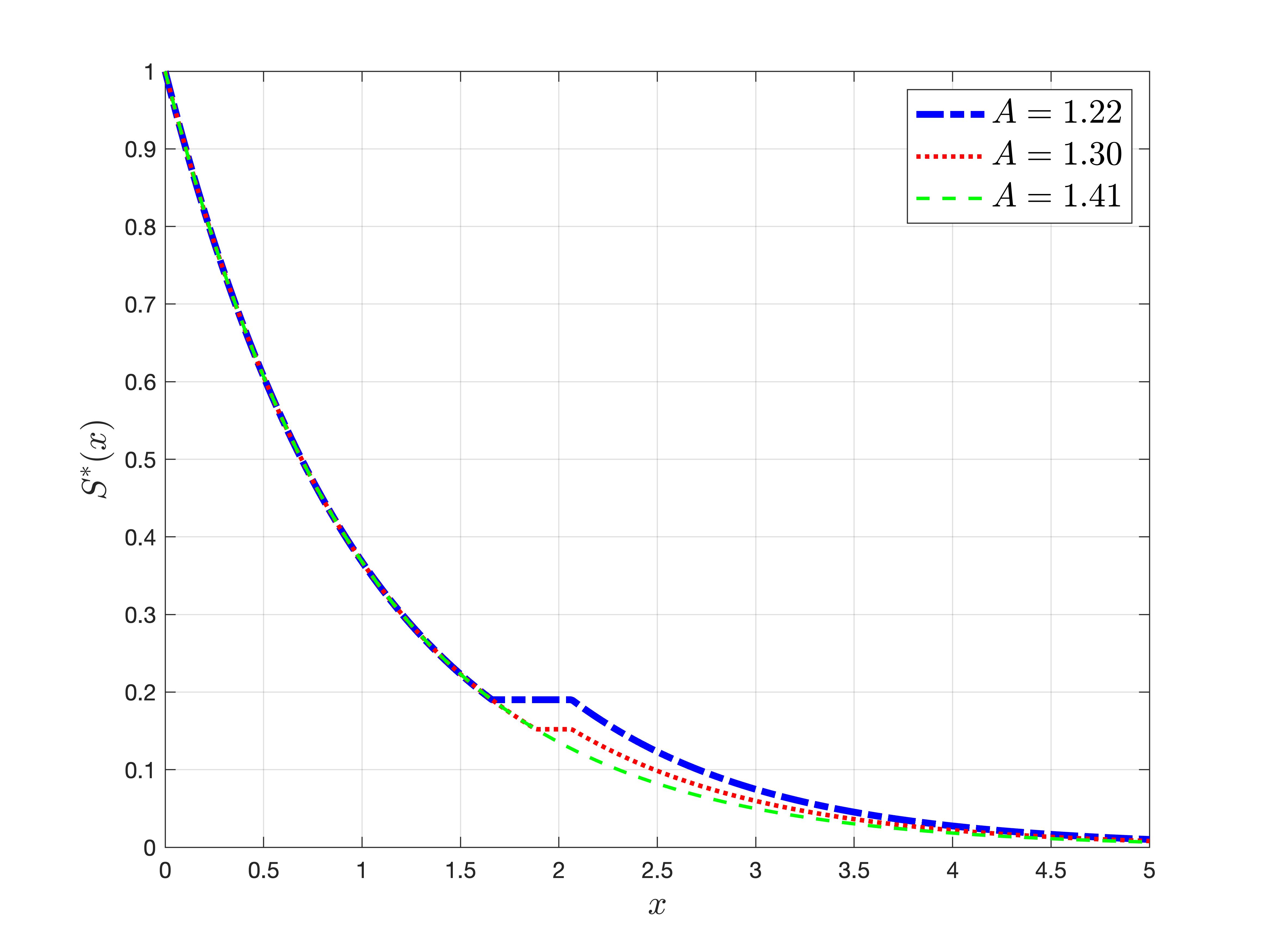}
\endminipage
\minipage{0.5\textwidth}
  \centering
  \includegraphics[width=\linewidth]{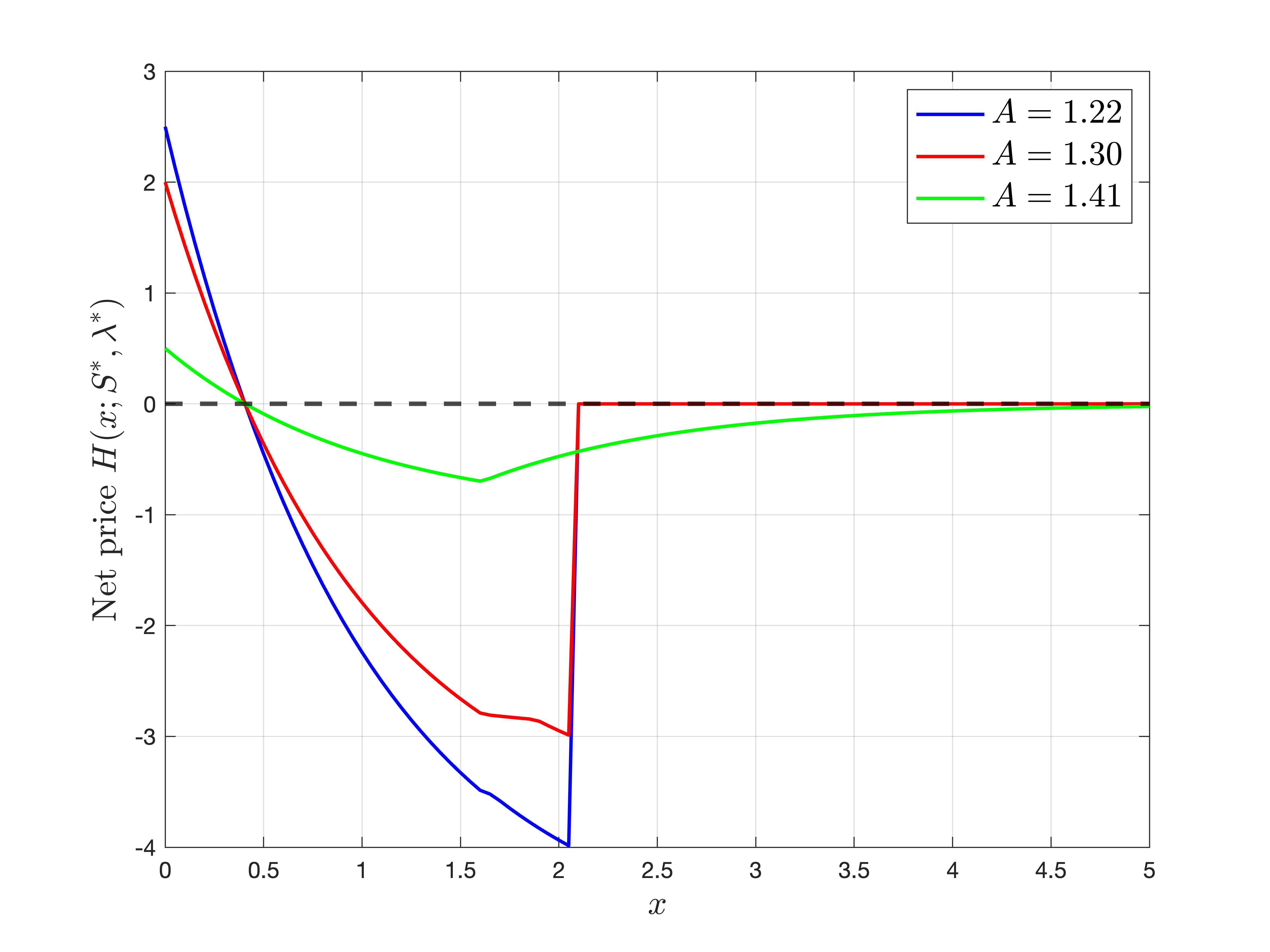}
\endminipage
\caption{The worst-case survival functions and the corresponding net prices.}
\label{fig:TVaR_numerical}
\end{figure}
\end{example}

\section{Concluding remarks}\label{sec:conclusion}
In this paper, we study two optimal insurance problems from a buyer's perspective under distributional uncertainty. Unlike the existing literature that focuses on distance-based ambiguity sets of loss distributions, the BW ball is employed in this paper, where the BW divergence features asymmetry on penalizing deviations from the benchmark distribution. In the first problem, we model the DM's ambiguity aversion using the conventional $\alpha$-maxmin model, with VaR serving as the risk measure that captures the DM's risk attitude. We derive the optimal indemnity function for this model in its closed form and analyze how the asymmetry of BW divergence shapes the structure of the indemnity function analytically and numerically. In the second problem, we revisit the worst-case-based insurance problem under the convex distortion risk measure, but subject to a constraint imposed on the worst-case VaR of the DM's terminal loss. We analytically derive the optimal indemnity function and the worst-case loss distribution, and examine their sensitivity to the acceptable worst-case VaR, offering economic interpretations of these relationships. Additional insights are illustrated through a concrete example involving TVaR.

Several promising avenues for future research emerge from this study. First, we note that the methodology for the guaranteed VaR performance model is still applicable when imposing multiple guaranteed VaR constraints, and there will be more parameters to optimize correspondingly. Developing an efficient algorithm for such double-layer optimization would make our model easier to implement in practice. Second, a natural theoretical extension to the guaranteed VaR performance model is to replace VaR with alternative risk measures in the constraint, which introduces significant complexity since the worst-case loss distribution for the constraint may become intertwined with the indemnity choice. Developing a methodology to resolve this issue could enable a broader framework that accommodates multiple risk measures and accounts for preference uncertainty when a finite set of preferences is available. These directions warrant further exploration.

\section*{Acknowledgments}
Wenjun Jiang acknowledges the financial support received from the Natural Sciences and Engineering Research Council of Canada (grant RGPIN-2020-04204), Alberta Innovates and Canadian Institute of Actuaries. Yiying Zhang acknowledges the financial support from the National Natural Science Foundation of China (No. 12571513), and Shenzhen Science and Technology Program (No. JCYJ20230807093312026, No. JCYJ20250604144210013). The authors are listed in alphabetical order.

\section*{Declarations of interest}
No potential competing or conflict interests were reported by the authors.

\setlength{\bibsep}{0.0pt}
\bibliographystyle{apalike2}
\bibliography{robust}

\appendix
\setcounter{equation}{0}
\renewcommand\theequation{A.\arabic{equation}}
\renewcommand\thedefinition{A.\arabic{definition}}
\renewcommand\thelemma{A.\arabic{lemma}}

\section{Proofs of the main results}
\subsection*{Proof of Lemma \ref{lem:survival-rep-BW}}
\noindent To prove Lemma \ref{lem:survival-rep-BW}, we need the following lemma.
\begin{lemma}\label{lem:EXY}
Let $X$ and $Y$ be real-valued random variables satisfying $\mathbb{E}[|X||Y|] < \infty$. Then
\begin{equation}\label{eq:XY-tail}
\begin{aligned}
\mathbb{E}[XY]
&= \int_0^\infty \int_0^\infty \PP(X>x,\, Y>y)\, dx\, dy
- \int_0^\infty \int_0^\infty \PP(X>x,\, Y<-y)\, dx\, dy\\
&\quad - \int_0^\infty \int_0^\infty \PP(X<-x,\, Y>y)\, dx\, dy
+ \int_0^\infty \int_0^\infty \PP(X<-x,\, Y<-y)\, dx\, dy.
\end{aligned}
\end{equation}
\end{lemma}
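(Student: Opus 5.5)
The plan is to reduce the statement to the univariate layer-cake formula. First decompose each variable into positive and negative parts, $X=X^+-X^-$ and $Y=Y^+-Y^-$, with $X^\pm,Y^\pm\ge 0$. Then
\begin{equation*}
XY=X^+Y^+-X^+Y^--X^-Y^++X^-Y^-.
\end{equation*}
Since $|X||Y|=(X^++X^-)(Y^++Y^-)$ dominates each of the four nonnegative products, the hypothesis $\mathbb{E}[|X||Y|]<\infty$ makes each of them integrable. By linearity, $\mathbb{E}[XY]$ is therefore the signed sum of four finite expectations, and it suffices to compute each one.

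For nonnegative $U,V$, write $U=\int_0^\infty \mathds{1}_{\{U>x\}}\,dx$ and $V=\int_0^\infty \mathds{1}_{\{V>y\}}\,dy$ pointwise on $\Omega$. Multiplying gives
\begin{equation*}
UV=\int_0^\infty\int_0^\infty \mathds{1}_{\{U>x,\,V>y\}}\,dx\,dy.
\end{equation*}
The integrand is nonnegative and jointly measurable in $(\omega,x,y)$, so Tonelli's theorem lets us interchange expectation with the double integral. This yields $\mathbb{E}[UV]=\int_0^\infty\int_0^\infty \PP(U>x,V>y)\,dx\,dy$.

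Apply this with $(U,V)=(X^+,Y^+),(X^+,Y^-),(X^-,Y^+),(X^-,Y^-)$. For $x,y\ge 0$ (indeed for $x,y>0$, which suffices up to a Lebesgue-null set) we have $\{X^+>x\}=\{X>x\}$ and $\{X^->x\}=\{X<-x\}$, and likewise for $Y$. Substituting these events and collecting the four terms with the signs from the decomposition gives exactly \eqref{eq:XY-tail}.

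The only delicate point is justifying the interchange and the subtraction of terms. Tonelli handles each nonnegative term separately, and finiteness from the integrability assumption avoids any $\infty-\infty$, so I expect no real obstacle.
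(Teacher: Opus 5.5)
Your proposal is correct and follows essentially the same route as the paper: split $X$ and $Y$ into positive and negative parts, write each part via the layer-cake formula, and apply Tonelli to each of the four nonnegative products. You are slightly more explicit than the paper in two places: you note that each product is dominated by $|X||Y|$, which rules out $\infty-\infty$, and you justify the identities $\{X^+>x\}=\{X>x\}$ and $\{X^->x\}=\{X<-x\}$ for $x\ge 0$.
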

\begin{proof}
The positive and negative parts of any random variable $Z$ can be defined as $Z^+=\max\{Z,0\}$ and $Z^-=\max\{-Z,0\}$. Thus, we have $XY=(X^+-X^-)(Y^+-Y^-)=X^+Y^+ - X^+Y^- - X^-Y^+ + X^-Y^-$.

The layer-cake representation gives $Z^+=\int_0^\infty \mathds{1}_{\{Z>z\}}dz$ and $Z^-=\int_0^\infty \mathds{1}_{\{Z<-z\}}dz$, where the equations hold almost surely. Hence, for the term $X^+Y^+$, we have
\begin{equation*}
X^+Y^+=\left(\int_0^\infty \mathds{1}_{\{X>x\}}\,dx\right)\left(\int_0^\infty \mathds{1}_{\{Y>y\}}\,dy\right)=\int_0^\infty \int_0^\infty \mathds{1}_{\{X>x,\, Y>y\}}\,dx\,dy.
\end{equation*}
Applying Tonelli's theorem then leads to
\begin{equation*}
\mathbb{E}[X^+Y^+]=\int_0^\infty \int_0^\infty \E[\mathds{1}_{\{X>x,\, Y>y\}}]\,dx\,dy=\int_0^\infty \int_0^\infty \PP(X>x,\, Y>y)\,dx\,dy.
\end{equation*}
Similarly, one obtains
\begin{align*}
\mathbb{E}[X^+Y^-]&=\int_0^\infty \int_0^\infty \PP(X>x,Y<-y)\,dx\,dy, \\
\mathbb{E}[X^-Y^+]&=\int_0^\infty \int_0^\infty \PP(X<-x,Y>y)\,dx\,dy, \\
\mathbb{E}[X^-Y^-]&=\int_0^\infty \int_0^\infty \PP(X<-x,Y<-y)\,dx\,dy.
\end{align*}
All integrals are well defined due to $\mathbb{E}[|X||Y|]<\infty$. This completes the proof.
\end{proof}

Now we are ready to prove Lemma \ref{lem:survival-rep-BW}. It can be easily verified that
\begin{align*}
\int_0^1 \varphi(F_i^{-1}(t))\,dt
&= \int_0^M \varphi(x)\,dF_i(x) 
= \int_0^M \Big[\varphi(0) + \int_0^x \varphi'(u)\,du \Big] dF_i(x)\\
&= \varphi(0) + \int_0^M \varphi'(u) S_i(u)\, du, \quad i \in \{1,2\},
\end{align*}
where the last equality holds due to Fubini's theorem. Similarly, we have
\begin{equation*}
\int_0^1 \varphi'(F_i^{-1}(t)) F_i^{-1}(t)\,dt= \int_0^M \varphi'(x)xdF_i(x)= \int_0^M \big(\varphi''(u) u + \varphi'(u)\big) S_i(u)\,du, \quad i \in \{1,2\}.
\end{equation*}
For the term $\int_0^1 \varphi'(F_2^{-1}(t)) F_1^{-1}(t)\,dt$, we note that $F_1^{-1}(U)\in [0,M]$ is a non-negative random variable, while $\varphi'(F_2^{-1}(U))$ may be negative. Thus, applying Lemma \ref{lem:EXY} yields
\begin{align} 
\int_0^1 \varphi'(F_2^{-1}(t))F_1^{-1}(t)\,dt=&\mathbb{E}\big[\varphi'(F_2^{-1}(U))F_1^{-1}(U)\big] \nonumber\\
=&\int_0^M\int_0^{\infty}\PP\big(\varphi'(F_2^{-1}(U))>x, F_1^{-1}(U)>y\big)\,dx\,dy \nonumber \\
&-\int_0^M\int_0^{\infty}\PP\big(\varphi'(F_2^{-1}(U))<-x, F_1^{-1}(U)>y\big)\,dx\,dy.  \label{eq:cross-term}
\end{align}

The following lemma gives the survival-function-based representation of $\int_0^1 \varphi'(F_2^{-1}(t)) F_1^{-1}(t)\,dt$.
\begin{lemma}
    Under the assumption of Lemma \ref{lem:survival-rep-BW}, we have 
    \begin{equation*}
  \int_0^1 \varphi'(F_2^{-1}(t))F_1^{-1}(t)\,dt=\varphi'(0)\int_0^M S_1(y)\,dy+\int_0^M\int_0^{M}\varphi''(s)S_2(s)\wedge S_1(y)\,ds\,dy.  
\end{equation*}
\end{lemma}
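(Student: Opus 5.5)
We need to prove: $\int_0^1 \varphi'(F_2^{-1}(t))F_1^{-1}(t)\,dt=\varphi'(0)\int_0^M S_1(y)\,dy+\int_0^M\int_0^M\varphi''(s)\,S_2(s)\wedge S_1(y)\,ds\,dy$.

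Rather than working directly from the four-term split \eqref{eq:cross-term}, we use a more direct layer-cake argument that sidesteps the sign issue of $\varphi'$. Set $U$ standard uniform, $Y=F_1^{-1}(U)\in[0,M]$ and $W=F_2^{-1}(U)\in[0,M]$. Because $\varphi$ is convex and $C^1$ with bounded derivative on $[0,M]$, and the formula contains $\varphi''$, we work under the implicit assumption that $\varphi'$ is absolutely continuous with $\varphi''\ge0$. Then we can write
\begin{equation*}
\varphi'(W)=\varphi'(0)+\int_0^M \varphi''(s)\mathds{1}_{\{W>s\}}\,ds,
\qquad
Y=\int_0^M \mathds{1}_{\{Y>y\}}\,dy .
\end{equation*}

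Multiplying these two expressions and taking expectations gives $\E[\varphi'(W)Y]=\varphi'(0)\E[Y]+\E\big[\int_0^M\int_0^M\varphi''(s)\mathds{1}_{\{W>s\}}\mathds{1}_{\{Y>y\}}\,ds\,dy\big]$. The integrand in the double integral is nonnegative, so Tonelli's theorem lets us exchange expectation and integration. The first term is $\varphi'(0)\E[Y]=\varphi'(0)\int_0^M S_1(y)\,dy$, which is finite since $Y$ is bounded. The second term becomes $\int_0^M\int_0^M\varphi''(s)\,\PP(W>s,\,Y>y)\,ds\,dy$.

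The key step is to evaluate the joint probability under the comonotonic coupling. Using the quantile equivalence $F^{-1}(u)>x \iff u>F(x)$, we get $\{W>s\}=\{U>F_2(s)\}$ and $\{Y>y\}=\{U>F_1(y)\}$. Hence
\begin{equation*}
\PP(W>s,\,Y>y)=\PP\big(U>F_2(s)\vee F_1(y)\big)=1-F_2(s)\vee F_1(y)=S_2(s)\wedge S_1(y).
\end{equation*}
Substituting this into the second term yields exactly the claimed identity.

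The main obstacle is one of regularity. The representation $\varphi'(W)=\varphi'(0)+\int_0^W\varphi''$ requires $\varphi'$ to be absolutely continuous, for instance $\varphi\in C^2$, or $\varphi'$ Lipschitz as in the piecewise quadratic example. The lemma statement tacitly assumes this, since $\varphi''$ appears in it. Once this is granted, the argument is consistent with the route through \eqref{eq:cross-term}: the positive and negative parts of $\varphi'(W)$ recombine into the same expression, so Lemma~\ref{lem:EXY} is not actually needed beyond the nonnegativity used for Tonelli's theorem.
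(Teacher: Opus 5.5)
Your proof is correct, and it takes a different route from the paper's.

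\textbf{The paper's route.} It starts from the four-term identity of Lemma~\ref{lem:EXY}, i.e.\ \eqref{eq:cross-term}, which splits $\varphi'(W)$ into positive and negative parts. Because of this split, it has to treat three cases according to the sign of $\varphi'$ on $[0,M]$: $\varphi'(0)>0$; $\varphi'(x_0)=0$ for some $x_0\in[0,M]$; or $\varphi'(M)<0$. In each case it substitutes $x=\varphi'(s)$ in the inner layer-cake integral. It then uses strict monotonicity of $\varphi'$ to identify $\{\varphi'(W)>\varphi'(s)\}=\{W>s\}$. Finally it recombines the pieces using $\int_0^{x_0}\varphi''(s)\,ds=-\varphi'(0)$ or $\varphi'(M)-\int_0^M\varphi''(s)\,ds=\varphi'(0)$.

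\textbf{Your route.} You center at $\varphi'(0)$ via
\[
\varphi'(W)=\varphi'(0)+\int_0^M\varphi''(s)\mathds{1}_{\{W>s\}}\,ds .
\]
This isolates the only sign-indefinite part as the constant $\varphi'(0)$, which passes through the expectation linearly. What remains has a nonnegative integrand, so one application of Tonelli's theorem suffices.

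\textbf{Comparison.} Your approach removes both Lemma~\ref{lem:EXY} and the case analysis. It also does not need strict monotonicity of $\varphi'$. It needs only absolute continuity of $\varphi'$, which the paper also assumes tacitly when it writes $d\varphi'(s)=\varphi''(s)\,ds$. Given boundedness, you could even replace Tonelli by Fubini and drop $\varphi''\ge 0$. The last step, $\PP(W>s,\,Y>y)=S_2(s)\wedge S_1(y)$ via $F^{-1}(u)>x\iff u>F(x)$, is the same in both proofs. The paper's version gains nothing here beyond reusing the general signed formula of Lemma~\ref{lem:EXY}. Yours is shorter and slightly more general.
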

\begin{proof}
Firstly, if $\varphi'(0)>0$, we have from \eqref{eq:cross-term} that
\begin{align*} 
&\int_0^1 \varphi'(F_2^{-1}(t))F_1^{-1}(t)\,dt\\=&\int_0^M\int_0^{\varphi'(M)}\PP\big(\varphi'(F_2^{-1}(U))>x, F_1^{-1}(U)>y\big)\,dx\,dy\\ =&\int_0^M\int_0^{\varphi'(0)}\PP\big(\varphi'(F_2^{-1}(U))>x, F_1^{-1}(U)>y\big)\,dx\,dy\\ &+\int_0^M\int_{\varphi'(0)}^{\varphi'(M)}\PP\big(\varphi'(F_2^{-1}(U))>x, F_1^{-1}(U)>y\big)\,dx\,dy\\ =&\varphi'(0)\int_0^M\PP\big(F_1^{-1}(U)>y\big)\,dy+\int_0^M\int_{0}^{M}\PP\big(\varphi'(F_2^{-1}(U))>\varphi'(s), F_1^{-1}(U)>y\big)\,d\varphi'(s)\,dy\\=&\varphi'(0)\int_0^M S_1(y)\,dy+\int_0^M\int_0^{M}\PP\big(F_2^{-1}(U)>s, F_1^{-1}(U)>y\big)\varphi''(s)\,ds\,dy\\ =&\varphi'(0)\int_0^M S_1(y)\,dy+\int_0^M\int_0^{M}\PP\big(U>F_2(s), U>F_1(y)\big)\varphi''(s)\,ds\,dy\\ =&\varphi'(0)\int_0^M S_1(y)\,dy+\int_0^M\int_0^{M}\varphi''(s)S_2(s)\wedge S_1(y)\,ds\,dy. 
\end{align*} 
If there exists a $x_0\in [0,M]$ such that $\varphi'(x_0)=0$, then 
\begin{align*}
	&\int_0^1 \varphi'(F_2^{-1}(t))F_1^{-1}(t)\,dt\\=&\int_0^M\int_{\varphi'(x_0)}^{\varphi'(M)}\PP\big(\varphi'(F_2^{-1}(U))>x, F_1^{-1}(U)>y\big)\,dx\,dy\\&-\int_0^M\int_{\varphi'(x_0)}^{-\varphi'(0)}\PP\big(\varphi'(F_2^{-1}(U))<-x, F_1^{-1}(U)>y\big)\,dx\,dy\\
    =&\int_0^M\int_{x_0}^M\PP\big(\varphi'(F_2^{-1}(U))>\varphi'(s), F_1^{-1}(U)>y\big)d\varphi'(s)\,dy\\
	&-\int_0^M\int_0^{x_0}\PP\big(\varphi'(F_2^{-1}(U))<\varphi'(s), F_1^{-1}(U)>y\big)d\varphi'(s)\,dy\\
    =&\int_0^M\int_{x_0}^M\PP\big(F_2^{-1}(U)>s, F_1^{-1}(U))>y\big)\varphi''(s)\,ds\,dy\\
	&-\int_0^M\int_0^{x_0}\PP\big(F_2^{-1}(U)<s, F_1^{-1}(U)>y\big)\varphi''(s)\,ds\,dy\\
    =&\int_0^M\int_{x_0}^M\varphi''(s)S_2(s)\wedge S_1(y)\,ds\,dy-\int_0^M\int_0^{x_0}\varphi''(s)[S_1(y)-S_2(s)\wedge S_1(y)]\,ds\,dy\\
    =&\int_0^M\int_0^M \varphi''(s)S_2(s)\wedge S_1(y)\,ds\,dy-\int_0^M\int_0^{x_0}\varphi''(s)S_1(y)\,ds\,dy\\
    =&\varphi'(0)\int_0^M S_1(y)\,dy+\int_0^M\int_0^{M}\varphi''(s)S_2(s)\wedge S_1(y)\,ds\,dy.  
\end{align*}
If $\varphi'(M)<0$, then
\begin{align*}
	&\int_0^1 \varphi'(F_2^{-1}(t))F_1^{-1}(t)\,dt\\=&-\int_0^M\int_0^{-\varphi'(0)}\PP\big(\varphi'(F_2^{-1}(U))<-x, F_1^{-1}(U)>y\big)\,dx\,dy\\
    =&-\int_0^M\int_0^{-\varphi'(M)}\PP\big(\varphi'(F_2^{-1}(U))<-x, F_1^{-1}(U)>y\big)\,dx\,dy\\
    &-\int_0^M\int_{-\varphi'(M)}^{-\varphi'(0)}\PP\big(\varphi'(F_2^{-1}(U))<-x, F_1^{-1}(U)>y\big)\,dx\,dy\\
    =&\varphi'(M)\int_0^M\PP\big(F_1^{-1}(U)>y\big)\,dy-\int_0^M\int_0^M\PP\big(\varphi'(F_2^{-1}(U))<\varphi'(s), F_1^{-1}(U)>y\big)d\varphi'(s)\,dy\\
    =&\varphi'(M)\int_0^M S_1(y)\,dy-\int_0^M\int_0^M\PP\big(F_2^{-1}(U)<s, F_1^{-1}(U)>y\big)\varphi''(s)\,ds\,dy\\
    =&\varphi'(M)\int_0^M S_1(y)\,dy-\int_0^M\int_0^M\varphi''(s)[S_1(y)-S_2(s)\wedge S_1(y)]\,ds\,dy\\
    =&\Big[\varphi'(M)-\int_0^M\varphi''(s)\,ds\Big]\int_0^M S_1(y)\,dy+\int_0^M\int_0^{M}\varphi''(s)S_2(s)\wedge S_1(y)\,ds\,dy\\
    =&\varphi'(0)\int_0^M S_1(y)\,dy+\int_0^M\int_0^{M}\varphi''(s)S_2(s)\wedge S_1(y)\,ds\,dy. 
    \end{align*}
    This ends the proof.
\end{proof}

The survival-function-based representation of $\mathscr{B}_\varphi[F_1,F_2]$ in Lemma \ref{lem:survival-rep-BW} can then be derived by collecting the above results. The proof is now complete. \qed

\subsection*{Proof of Theorem \ref{th4-1}}
Before proving Theorem \ref{th4-1}, we present the following lemma.
\begin{lemma}
For any $x,y\in \mathbb{R}$, it holds that $B_{\varphi}(x,y)\ge 0$. Moreover, for any fixed $y\in \mathbb{R}$, the mapping $x\mapsto B_{\varphi}(x,y)$ is decreasing on $(-\infty,y]$ and increasing on $[y,\infty)$.
\end{lemma}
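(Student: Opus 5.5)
The plan is to argue directly from the definition of $B_\varphi$, using only that $\varphi$ is strictly convex and continuously differentiable. There are two steps: nonnegativity, then monotonicity in $x$ for fixed $y$. Both follow from the monotonicity of $\varphi'$, which is nondecreasing (in fact strictly increasing) by strict convexity.

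For nonnegativity, fix $x,y\in\mathbb{R}$. Convexity of $\varphi$ gives the supporting-line inequality $\varphi(x)\ge \varphi(y)+\varphi'(y)(x-y)$, which is exactly $B_\varphi(x,y)\ge 0$. If one prefers not to quote this inequality, it follows from the mean value theorem. Write $\varphi(x)-\varphi(y)=\varphi'(\xi)(x-y)$ with $\xi$ between $x$ and $y$. Then $B_\varphi(x,y)=(\varphi'(\xi)-\varphi'(y))(x-y)$, and this is $\ge0$ because $\varphi'(\xi)-\varphi'(y)$ has the same sign as $\xi-y$, hence as $x-y$.

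For monotonicity, fix $y$ and set $h(x):=B_\varphi(x,y)$. Since $\varphi\in C^1$, $h$ is differentiable with $h'(x)=\varphi'(x)-\varphi'(y)$. As $\varphi'$ is nondecreasing, $h'(x)\le 0$ for $x\le y$ and $h'(x)\ge0$ for $x\ge y$. Therefore $h$ is decreasing on $(-\infty,y]$ and increasing on $[y,\infty)$; strict convexity even makes these monotonicities strict. Because $h$ is $C^1$, we can avoid differentiating twice and write, for $y\le x_1\le x_2$,
\[
h(x_2)-h(x_1)=\int_{x_1}^{x_2}(\varphi'(u)-\varphi'(y))\,du\ge 0,
\]
and the analogous integral handles $x_2\le x_1\le y$.

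None of these steps is a genuine obstacle; the lemma is elementary. The only point needing care is that no second derivative of $\varphi$ is assumed here, even though $\varphi''$ appears elsewhere in the paper. The integral form above avoids that issue.
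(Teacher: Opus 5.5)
Your proof is correct and follows essentially the same route as the paper: nonnegativity from the supporting-line inequality, and monotonicity from the sign of $\varphi'(u)-\varphi'(y)$ on either side of $y$. The only difference is that the paper applies the mean value theorem to $B_\varphi(x_1,y)-B_\varphi(x_2,y)$, while you integrate $h'(u)=\varphi'(u)-\varphi'(y)$; the two arguments are equivalent.
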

\begin{proof}
Since $\varphi$ is convex and continuously differentiable, we have
\[
\varphi(x) \ge \varphi(y) + \varphi'(y)(x - y),
\qquad \forall\, x,y\in\mathbb{R},
\]
which immediately implies
\[
B_{\varphi}(x,y)=\varphi(x)- \varphi(y) - \varphi'(y)(x - y)\ge 0.
\]

Next, for fixed $y\in\mathbb{R}$, we first show that the function $x\mapsto B_{\varphi}(x,y)$ is increasing on $[y,\infty)$.  
Let $x_1>x_2\ge y$. By the Mean Value Theorem, there exists some $x_0\in(x_2,x_1)$ such that
\begin{align*}
B_{\varphi}(x_1,y)-B_{\varphi}(x_2,y)
&=\varphi(x_1)-\varphi(x_2)-\varphi'(y)(x_1-x_2) \\
&=[\varphi'(x_0)-\varphi'(y)](x_1-x_2).
\end{align*}
Since $\varphi$ is strictly convex, its derivative $\varphi'$ is increasing, which implies $\varphi'(x_0)> \varphi'(y)$. Hence,
\[
B_{\varphi}(x_1,y)-B_{\varphi}(x_2,y)> 0.
\]
An analogous argument applies to the interval $(-\infty,y]$. This completes the proof.
\end{proof}

Now we are ready to prove Theorem \ref{th4-1}. First, if $\epsilon > \int_{\alpha}^{1} B_{\varphi}(M,F_0^{-1}(t)) \, dt$, we prove that $M=\sup_{F \in \mathcal{B}(F_0,\epsilon)} \mathrm{VaR}^F_{\alpha}(X)$.
Since $X$ is bounded above by~$M$, it implies
\[
\sup_{F\in\mathcal{B}(F_0,\epsilon)} \mathrm{VaR}^F_{\alpha}(X)\le M.
\]
We show that $M$ is the supremum.

Let $\delta_{1}\in(0, M-F_0^{-1}(\alpha)]$. By the monotonicity of $x\mapsto B_{\varphi}(x,y)$, we obtain
\begin{align*}
\int_{\alpha}^{F_0(M-\frac{\delta_{1}}{2})}
B_{\varphi}\!\left(M-\tfrac{\delta_{1}}{2},F_0^{-1}(t)\right)\! dt
&\le \int_{\alpha}^{F_0(M-\frac{\delta_{1}}{2})}B_{\varphi}(M,F_0^{-1}(t))\, dt \\
&\le \int_{\alpha}^{1}B_{\varphi}(M,F_0^{-1}(t))\, dt
<\epsilon.
\end{align*}
Define $\hat{F}$ via
\begin{equation}\label{eqhatF}
\hat{F}^{-1}(t)=
\begin{cases}
M-\frac{\delta_{1}}{2}, & t\in(\alpha-\xi_{1}, F_0(M-\frac{\delta_{1}}{2})], \\
F_0^{-1}(t), & t\notin(\alpha-\xi_{1}, F_0(M-\frac{\delta_{1}}{2})],
\end{cases}    
\end{equation}
where $\xi_{1}$ is chosen such that 
\begin{align*} &\int_{0}^{1}B_{\varphi}(\hat{F}^{-1}(t),F_0^{-1}(t))\,dt=\int_{\alpha-\xi_{1}}^{F_0(M-\frac{\delta_{1}}{2})}B_{\varphi}(M-\frac{\delta_{1}}{2},F_0^{-1}(t))\ dt \le\epsilon. \end{align*}
Then $\hat{F}\in\mathcal{B}(F_0,\epsilon)$ and
\(
\mathrm{VaR}_{\alpha}^{\hat{F}}(X)>M-\delta_{1}.
\)
Since $\delta_{1}>0$ is arbitrary, the supremum equals $M$.

Next, we prove Equation $\eqref{eqsup}$. The right-continuity of $F_0$ implies the right-continuity of
\[
D\mapsto \int_{\alpha}^{F_0(D)}B_{\varphi}(D,F_0^{-1}(t))\, dt,
\]
and thus by the definition of $\overline{V_\alpha}$, we have
\[
\int_{\alpha}^{F_0(\overline{V_\alpha})}B_{\varphi}(\overline{V_\alpha},F_0^{-1}(t))\, dt\ge \epsilon.
\]

We first verify that $\overline{V_\alpha}$ is an upper bound.  
If not, suppose $\tilde{F}\in\mathcal{B}(F_0,\epsilon)$ satisfies $\tilde{F}^{-1}(\alpha)>\overline{V_\alpha}$.  
Then $\tilde{F}^{-1}(t)>\overline{V_\alpha}$ for $t\in(\alpha,F_0(\overline{V_\alpha})]$, and hence
\begin{align*}
\int_{0}^{1}B_{\varphi}(\tilde{F}^{-1}(t),F_0^{-1}(t))\, dt
&\ge \int_{\alpha}^{F_0(\overline{V_\alpha})}B_{\varphi}(\tilde{F}^{-1}(t),F_0^{-1}(t))\, dt \\
&> \int_{\alpha}^{F_0(\overline{V_\alpha})}B_{\varphi}(\overline{V_\alpha},F_0^{-1}(t))\, dt 
\ge \epsilon,
\end{align*}
contradicting $\tilde{F}\in\mathcal{B}(F_0,\epsilon)$.

To see that the supremum cannot be smaller than~$\overline{V_\alpha}$, we take any $\delta_{2}\in(0,\overline{V_\alpha}-F_0^{-1}(\alpha)]$. By the definition of~$\overline{V_\alpha}$,
\[
\int_{\alpha}^{F_0(\overline{V_\alpha}-\frac{\delta_{2}}{2})}
B_{\varphi}\!\left(\overline{V_\alpha}-\tfrac{\delta_{2}}{2},F_0^{-1}(t)\right)\! dt 
<\epsilon.
\]
Constructing $\hat{F}$ analogously to $\eqref{eqhatF}$ yields $\hat{F}\in\mathcal{B}(F_0,\epsilon)$ and $\mathrm{VaR}_{\alpha}^{\hat{F}}(X)>\overline{V_\alpha}-\delta_{2}$, which completes the proof of~\eqref{eqsup}.

Finally, we demonstrate the nonexistence of $F^{*}\in\mathcal{B}(F_0,\epsilon)$ satisfying  \begin{equation*}
\mathrm{VaR}^{F^*}_{\alpha}(X)=\sup_{F\in\mathcal{B}(F_0,\epsilon)}\mathrm{VaR}^F_{\alpha}(X).
 \end{equation*}
 The proof is provided specifically for the case where $\epsilon\leq\int_{\alpha}^{1}B_{\varphi}(M,F_0^{-1}(t))\, dt$.

Assume that $F^{*}\in\mathcal{B}(F_0,\epsilon)$ satisfies
\(
F^{*-1}(\alpha)=\overline{V_\alpha}.
\)
For any $\delta_{3}\in(0,\overline{V_\alpha}-F_0^{-1}(\alpha))$ there exists $\xi_{3}\in(0,\alpha)$ such that
\(
F^{*-1}(t)\in(\overline{V_\alpha}-\delta_{3},\overline{V_\alpha}]
\)
for $t\in(\alpha-\xi_{3},\alpha]$.  
Then
\begin{align*}
\int_{0}^{1}B_{\varphi}(F^{*-1}(t),F_0^{-1}(t))\, dt 
&\ge \int_{\alpha-\xi_{3}}^{\alpha}
B_{\varphi}(\overline{V_\alpha}-\delta_{3},F_0^{-1}(t))\, dt \\
&\quad + \int_{\alpha}^{F_0(\overline{V_\alpha})}
B_{\varphi}(\overline{V_\alpha},F_0^{-1}(t))\, dt
> \epsilon,
\end{align*}
contradicting $F^{*}\in\mathcal{B}(F_0,\epsilon)$.  
Thus, the supremum is not attained. The proof is complete. \qed

\subsection*{Proof of Theorem \ref{th4-2}} 
We first show $\underline{V_\alpha}\ge 
\inf_{F \in \mathcal{B}(F_0,\epsilon)} \mathrm{VaR}^F_{\alpha}(X)$. Define the distribution function $\hat{F}$ via
\begin{equation}\label{eqhatF2}
\hat{F}^{-1}(t)=
\begin{cases}
\underline{V_\alpha}, & t\in(F_0(\underline{V_\alpha}),\alpha], \\
F_0^{-1}(t), & t\notin (F_0(\underline{V_\alpha}),\alpha].
\end{cases}    
\end{equation}
For $t\le F_0(\underline{V_\alpha})$, we have
$\hat{F}^{-1}(t)=F_0^{-1}(t)\le \underline{V_\alpha}$, while for 
$t\in(\alpha,1]$, 
$\hat{F}^{-1}(t)=F_0^{-1}(t)\ge F_0^{-1}(\alpha)\ge \underline{V_\alpha}$.  
Hence, $\hat{F}^{-1}$ is a valid quantile function, which induces a well-defined CDF $\hat{F}$.

By the definition of $\underline{V_\alpha}$, we have
\[
\int_{F_0(\underline{V_\alpha})}^{\alpha}
B_{\varphi}(\underline{V_\alpha},F_0^{-1}(t))\, dt\le \epsilon.
\]
Consequently,
\begin{align*}
\int_{0}^{1}B_{\varphi}(\hat{F}^{-1}(t),F_0^{-1}(t))\, dt
&= \int_{F_0(\underline{V_\alpha})}^{\alpha}
     B_{\varphi}(\underline{V_\alpha},F_0^{-1}(t))\, dt
   \le \epsilon,
\end{align*}
which shows that $\hat{F}\in\mathcal{B}(F_0,\epsilon)$.  
Therefore, we have $\underline{V_\alpha}\ge 
\inf_{F\in\mathcal{B}(F_0,\epsilon)}\mathrm{VaR}^F_\alpha(X)$.

To prove $\underline{V_\alpha}\le 
\inf_{F\in\mathcal{B}(F_0,\epsilon)}\mathrm{VaR}^F_\alpha(X)$, we show that for any $F\in\mathcal{B}(F_0,\epsilon)$, $\underline{V_\alpha}\le 
\mathrm{VaR}_\alpha^F(X)$. If $F^{-1}(\alpha)\ge F_0^{-1}(\alpha)$, it follows that $\underline{V_\alpha}\le F_0^{-1}(\alpha)\le F^{-1}(\alpha)=\mathrm{VaR}_\alpha^F(X)$. If $F^{-1}(\alpha)<F_0^{-1}(\alpha)$, we need to prove
\begin{equation*}
    \int_{F_0(F^{-1}(\alpha))}^{\alpha}
       B_{\varphi}(F^{-1}(\alpha),F_0^{-1}(t))\, dt\le \epsilon.
\end{equation*}
For any $t\in [F_0(F^{-1}(\alpha)),\alpha]$, we have $F^{-1}(t)\leq F^{-1}(\alpha)\le F_0^{-1}(t)$, so
\begin{align*}
  \int_{F_0(F^{-1}(\alpha))}^{\alpha}
       B_{\varphi}(F^{-1}(\alpha),F_0^{-1}(t))\, dt&\le \int_{F_0(F^{-1}(\alpha))}^{\alpha}
       B_{\varphi}(F^{-1}(t),F_0^{-1}(t))\, dt\\
       &\le\int_0^1
       B_{\varphi}(F^{-1}(t),F_0^{-1}(t))\, dt\le \epsilon. 
\end{align*}
This completes the proof.\qed

\subsection*{Proof of Theorem \ref{thm:sol-exist-prob2}}
Given the finite support $[0,M]$, the set $\mathcal{I}$ is compact under the $L^\infty$ metric. Let $\mathcal{T}(I)=\pi(I(X))-I(\overline{V_\alpha})$, it is apparent that $\mathcal{T}$ is continuous in $I$ under the $L^\infty$ metric. Under Assumption \ref{Assum:non-trivial}, the pre-image $\mathcal{T}^{-1}((-\infty, A-\overline{V_\alpha}]):=\left\{I\in\mathcal{I}: \mathcal{T}(I)\le A-\overline{V_\alpha}\right\}$ is closed. Hence, $\mathcal{T}^{-1}((-\infty,A-\overline{V_\alpha}])$ is also compact. 

Denote by
$$\mathcal{K}(I):=\sup_{F\in\mathcal{B}(F_0,\epsilon)}\rho_g^F(X-I(X)+\pi(I(X))).$$ 
Given $\{I_n\}\to I$ within $\mathcal{T}^{-1}((-\infty,A-\overline{V_\alpha}])$, for any $\widetilde{\epsilon}>0$, there exists a $F_{\widetilde{\epsilon}}\in\mathcal{B}(F_0,\epsilon)$ such that 
\begin{equation*}
 \mathcal{K}(I)-\widetilde{\epsilon}\le \rho_g^{F_{\widetilde{\epsilon}}}(X-I(X)+\pi(I(X))).   
\end{equation*}
By Theorem 2.1 of \cite{wang2020distortion} we have that $\rho_g^F(X-I(X)+\pi(I(X)))$ is continuous in $I$ under the $L^\infty$ metric. As such, 
\begin{equation*}
\rho_g^{F_{\widetilde{\epsilon}}}(X-I_n(X)+\pi(I_n(X)))\to \rho_g^{F_{\widetilde{\epsilon}}}(X-I(X)+\pi(I(X))).
\end{equation*}
Therefore,
\begin{align*}
    \liminf_{n\to\infty}\mathcal{K}(I_n)\ge&\ \liminf_{n\to\infty}\rho_g^{F_{\widetilde{\epsilon}}}(X-I_n(X)+\pi(I_n(X))) \\
    =&\ \rho_g^{F_{\widetilde{\epsilon}}}(X-I(X)+\pi(I(X))) \\
    \ge&\ \mathcal{K}(I)-\widetilde{\epsilon}.
\end{align*}
Given that $\widetilde{\epsilon}$ is arbitrary, we have $\liminf_{n\to\infty}\mathcal{K}(I_n)\ge\mathcal{K}(I)$. Hence, $\mathcal{K}(I)$ is lower-semicontinuous over $\mathcal{T}^{-1}((-\infty,A-\overline{V_\alpha}])$. Thus, the minimum of $\mathcal{K}(I)$ is attainable over the compact set $\mathcal{T}^{-1}((-\infty,A-\overline{V_\alpha}])$. The proof is complete.  \qed

\subsection*{Proof of Theorem \ref{thm:sol-exist-inner}}
Given the finite support $[0,M]$, by using the Prokhorov's theorem (see Theorem 5.1 of \cite{billingsley2013convergence}) we get that the set of CDFs on $[0,M]$ is compact under the topology of weak convergence. Denote the objective function of \eqref{Prob:inner} by $\mathcal{J}(S)$. Given that $g$ is a continuous concave distortion function, we have $g(S_n(x))\to g(S(x))$ if $S_n(x)\to S(x)$ at any continuity point $x\in[0,M]$. By using the Lebesgue Dominated Convergence Theorem, we get $\mathcal{J}(S_n)\to \mathcal{J}(S)$. Thus, $\mathcal{J}(S)$ is continuous under pointwise convergence, or weak convergence. Since $S_0$ is a strictly feasible solution, the constraint 
\begin{equation*}
\int_0^M\left([\varphi'(x)-\varphi'(0)]S(x)-\int_0^M\varphi''(y)S_0(y)\wedge S(x)\,dy\right)\,dx\le\zeta
\end{equation*}
generates a compact feasible domain of $S$, over which the maximum of $\mathcal{J}(S)$ is attainable. This ends the proof. \qed

\subsection*{Proof of Proposition \ref{prop:beta>0}}
By the definitions of $x_1$ and $x_2$, we have
\begin{align*} &g(G(x)) \le (1+\lambda)(1+\theta)S_{\mathbb Q}(x)-\lambda, \quad \forall x\in [0,x_1),\\ &g(G(x)) \le (1+\lambda)(1+\theta)S_{\mathbb Q}(x), \quad \forall x\in [\overline{V_\alpha},x_2).
\end{align*}
We study the cases $\beta=0$ and $\beta>0$ separately.

\medskip
\noindent\textbf{\underline{Case 1: $\beta=0$}}

In this case, the objective function of Problem~\eqref{eq:totalLg-e} over $\mathcal{G}$ can be written as
\begin{equation}\label{eq:th1prob} \begin{aligned} \sup_{G\in \mathcal{G}} \Bigg\{&\int_0^{x_1}\! g(G(x)) + \int_{x_1}^{\overline{V_\alpha}} \!g(G(x))\wedge\left((1+\lambda)(1+\theta)S_\Q(x)-\lambda\right)\\ &+\int_{\overline{V_\alpha}}^{x_2}\! g(G(x)) + \int_{x_2}^M \! g(G(x))\wedge(1+\lambda)(1+\theta)S_\Q(x)\Bigg\}. \end{aligned} \end{equation}
  We first solve Problem~\eqref{eq:th1prob} and denote its solution set by $\tilde{\mathcal{G}}$. Then, we select the one within $\tilde{\mathcal{G}}$ that minimizes
  \begin{equation}\label{eq:auxiliary}  
  \int_0^M \Big(\big[\varphi'(x)-\varphi'(0)\big]G(x)-\int_0^{M}\varphi''(y)S_0(y)\wedge G(x)\,dy\Big)\,dx,
  \end{equation}
  thereby identifying the smallest BW ball for which the constraint is non-binding.

\medskip
\noindent\textbf{\underline{Case 1(a): $x\in[0,x_1)\cup [\overline{V_\alpha},x_2)$}}

In this region, since $g$ is increasing and concave, any $t\in[g^{-1}(1),1]$ maximizes $g(t)$ over $[0,1]$. Hence, $\tilde{\mathcal{G}}=[g^{-1}(1),1]$. To make the BW ball as small as possible, $G^*(x;0)=g^{-1}(1)\vee S_0(x)$.

\medskip
\noindent\textbf{\underline{Case 1(b): $x\in[x_1,\overline{V_\alpha})\cup [x_2,M]$}}

For $x\in\mathscr{A}_1$, we have
\begin{equation*}
S_0(x)\le g^{-1}\big((1+\lambda)(1+\theta)S_{\mathbb Q}(x)-\lambda\big).
\end{equation*}
Since
\begin{equation*}
g(G(x))\wedge\big((1+\lambda)(1+\theta)S_{\mathbb Q}(x)-\lambda\big)
\le (1+\lambda)(1+\theta)S_{\mathbb Q}(x)-\lambda,
\end{equation*}
any maximizer must satisfy $G(x)\ge g^{-1}((1+\lambda)(1+\theta)S_{\mathbb Q}(x)-\lambda)$. To minimize \eqref{eq:auxiliary}, we have
\begin{equation*}
G^*(x;0)=g^{-1}\big((1+\lambda)(1+\theta)S_{\mathbb Q}(x)-\lambda\big).
\end{equation*}

For $x\in\mathscr{B}_1$, we note that
\begin{equation*}
g(G(x))\wedge\big((1+\lambda)(1+\theta)S_{\mathbb Q}(x)-\lambda\big)
\le g(G(x))\wedge g(S_0(x))\le g(S_0(x)),
\end{equation*}
which implies that the choice $G^*(x;0)=S_0(x)$ is optimal. 

By similar arguments, we can derive 
\begin{equation*}
G^*(x;0)=
\begin{cases}
g^{-1}\big((1+\lambda)(1+\theta)S_{\mathbb Q}(x)\big), & x\in \mathscr{A}_2,\\
S_0(x), & x\in \mathscr{B}_2.
\end{cases}
\end{equation*}
To this end, $G^*(x;0)$ is fully characterized.

\medskip
\noindent\textbf{\underline{Case 2: $\beta>0$}}

Based on the definitions of $x_1$ and  $x_2$, the objective function of $\eqref{eq:totalLg-e}$ can be written as
\begin{equation*}
\begin{aligned}
&\sup_{G\in \mathcal{G}} \Bigg\{
\int_0^{x_1}\!\Big\{ g(G(x)) - \beta\!\Big(\big[\varphi'(x)-\varphi'(0)\big]G(x)-\int_0^{M}\varphi''(y)S_0(y)\wedge  G(x)\,dy\Big)
\Big\}\,dx \\
&\quad + 
\int_{x_1}^{\overline{V_\alpha}} \!\Big\{ g(G(x))\wedge\left((1+\lambda)(1+\theta)S_\Q(x)-\lambda\right)-\beta
\Big(\big[\varphi'(x)-\varphi'(0)\big]G(x)\\
&\quad -\int_0^{M}\varphi''(y)S_0(y)\wedge  G(x)\,dy\Big)
\Big\}\,dx\\
&\quad+\int_{\overline{V_\alpha}}^{x_2}\!\Big\{ g(G(x)) - \beta\!\Big(\big[\varphi'(x)-\varphi'(0)\big]G(x)-\int_0^{M}\varphi''(y)S_0(y)\wedge  G(x)\,dy\Big)
\Big\}\,dx \\
&\quad + 
\int_{x_2}^M \!\Big\{ g(G(x))\wedge(1+\lambda)(1+\theta)S_\Q(x)-\beta
\Big(\big[\varphi'(x)-\varphi'(0)\big]G(x)\\
&\quad -\int_0^{M}\varphi''(y)S_0(y)\wedge  G(x)\,dy\Big)
\Big\}\,dx \Bigg\}.
\end{aligned}
\end{equation*}
The following functions are defined to facilitate later discussions:
\begin{equation*}
\begin{aligned}
&K_1(t;x) = g(t)-\beta\!\Big(\big[\varphi'(x)-\varphi'(0)\big]t-\int_0^{M}\varphi''(y)S_0(y)\wedge  t\,dy\Big),\\
&K_2(t;x) = (1+\lambda)(1+\theta)S_\Q(x)-\lambda- \beta\!\Big(\big[\varphi'(x)-\varphi'(0)\big]t-\int_0^{M}\varphi''(y)S_0(y)\wedge  t\,dy\Big),\\
&K_3(t;x) = g(t)\wedge\left((1+\lambda)(1+\theta)S_\Q(x)-\lambda\right)-\beta\!\Big(\big[\varphi'(x)-\varphi'(0)\big]t-\int_0^{M}\varphi''(y)S_0(y)\wedge  t\,dy\Big).
\end{aligned}
\end{equation*}

\medskip
\noindent\textbf{\underline{Case 2(a): $x\in[0,x_1)\cup [\overline{V_\alpha},x_2)$}}

In this case, since $g(G(x))\le (1+\lambda)(1+\theta)S_{\mathbb Q}(x)-\lambda$, the maximization problem reduces to
\begin{equation}\label{eqsupK1}
\sup_{t\in[S_0(x),1]} K_1(t;x),
\end{equation}
where $K_1(t;x)$ is concave due to the concavity of $g$. Since $\{S_0(y)\le t\}=\{y\geq F_0^{-1}(1-t)\}$, $K_1(t;x)$ can be further written as 
\begin{equation*}
K_1(t;x) = g(t) -\beta\big[\varphi'(x)-\varphi'(0)\big]t+\beta\Big(\int_{F_0^{-1}(1-t)}^{M}\varphi''(y)S_0(y)\,dy+\int_0^{F_0^{-1}(1-t)}\varphi''(y) t \,dy\Big),
\end{equation*}
whose derivative exists almost everywhere and decreases over $[S_0(x),1]$. Therefore, the maximizer of $\eqref{eqsupK1}$ is 
\begin{equation*}
\hat G(x;\beta) = \inf\{t\in[S_0(x),1]: K_1'(t;x)\le0\}.
\end{equation*}
Moreover, for any $t\in [0,1]$, we have $F_0(F_0^{-1}(1-t))\geq 1-t$ and $(F_0^{-1})'(1-t)=0$ when 
$F_0(F_0^{-1}(1-t))> 1-t$, it follows that for any $t_0\in [S_0(x),1]$,
\begin{equation*}
K_1'(t_0^+;x)=\lim_{t\to t_0^+}K_1'(t;x) = g'(t_0^+) -\beta\big[\varphi'(x)-\varphi'(F_0^{-1}(1-t_0^+))\big],
\end{equation*}
which is decreasing with $x$ since $\varphi(x)$ is strictly convex. For any $0\leq x\leq y<x_1$ or $\overline{V_\alpha}\leq x\leq y<x_2$, we have $K_1'(t_0^+;x)\geq K_1'(t_0^+;y)$, and thus $\hat G(x;\beta)\geq \hat G(y;\beta)$. Therefore, we deduce $G^*(x;\beta)=\hat G(x;\beta)$.

\medskip
\noindent\textbf{\underline{Case 2(b): $x\in[x_1,\overline{V_\alpha})\cup[x_2,M]$}}

In this case, we define
\begin{equation*}
t_1^*(x) = g^{-1}\left((1+\lambda)(1+\theta)S_\Q(x)-\lambda\right).
\end{equation*}
Then the function $K_3(t;x)$ can be written in pieces as
\begin{equation*}
K_3(t;x) = 
\begin{cases}
K_1(t;x), & t\le t_1^*(x),\\[0.4em]
K_2(t;x), & t>t_1^*(x).
\end{cases}
\end{equation*}
It follows that
	\begin{align*}
		&K_3'({t_1^*(x)}^-;x)=g'({t_1^*(x)}^-)-\beta\big[\varphi'(x) -\varphi'(F_0^{-1}(1-{t_1^*(x)}^-))\big], \\
		&K_3'({t_1^*(x)}^+;x)=-\beta\big[\varphi'(x) -\varphi'(F_0^{-1}(1-{t_1^*(x)}^+))\big].
	\end{align*}
	On $\mathscr{A}_1$, we have 
	\begin{align*}
		&g(t_1^*(x))=(1+\lambda)(1+\theta)S_\Q(x)-\lambda\geq g(S_0(x))\Longrightarrow\ t_1^*(x)\geq S_0(x)\\
	&\Longrightarrow\ F_0(x)\geq 1-t_1^*(x) \Longrightarrow\ x\geq F^{-1}_0(1-t_1^*(x)).
	\end{align*}
    Since $\varphi'(x)$ is an increasing function, we have
	\begin{align*}
		K_3'({t_1^*(x)}^+;x)=&-\beta\big[\varphi'(x) -\varphi'(F_0^{-1}(1-{t_1^*(x)}^+))\big]\le 0,
	\end{align*}
	Hence, on $\mathscr{A}_1$, the maximum of $K_3(t;x)$ over $[S_0(x),1]$ is attained within $[S_0(x),t_1^*(x)]$. Note that the maximum of $K_1(t;x)$ over $[S_0(x),1]$ is attained at $t=\hat{G}(x;\beta)$, so the worst-case function $G^*(x;\beta)=\hat{G}(x;\beta)\wedge t_1^*(x)$ for $x\in\mathscr{A}_1$. 
	
	Similarly, on $\mathscr{B}_1$, we have $t_1^*(x)<S_0(x)$, which leads to $K_3(t;x)=K_2(t;x)$ for $t\in[S_0(x),1]$. Since
    \begin{equation*}
      K_2'(t^+;x)= -\beta\big[\varphi'(x) -\varphi'(F_0^{-1}(1-t^+))\big]\le 0  
    \end{equation*}
	for any $t\in[S_0(x),1]$, the maximum of $K_3(t;x)$ is attained at $S_0(x)$. An analogous argument applies on $[x_2,M]$, leading to
\begin{equation*}
G^*(x;\beta)=
\begin{cases}
\hat G(x;\beta)\wedge g^{-1}\big((1+\lambda)(1+\theta)S_{\mathbb Q}(x)\big), & x\in \mathscr{A}_2,\\
S_0(x), & x\in \mathscr{B}_2.
\end{cases}
\end{equation*}
Combining all cases yields the stated expression for $G^*(x;\beta)$.\qed

\subsection*{Proof of Theorem \ref{th:beta}}
Recall that $G^*$ is the pointwise maximizer of Problem~\eqref{eq:totalLg} over the relaxed admissible class $\mathcal{G}$. Intuitively, if there exists a $b(\beta)$, such that $S(x;b(\beta))$ is pointwise closer to $G^*(x)$ than $
S(x)$, then replacing $S(x)$ by $S(x;b(\beta))$ improves the objective. More precisely, it suffices to verify that for all $x\in[0,M]$ either
\begin{equation*}
 S(x)\le S(x;b(\beta))\le G^*(x)
\quad\text{or}\quad
S(x)\ge S(x;b(\beta))\ge G^*(x).   
\end{equation*}
This property is immediate for  $x\in[0,a_1(\beta))\cup[a_2(\beta),M]$. For $x\in[a_1(\beta),a_2(\beta))$, we look into the following cases. 

Suppose that $G^*(\overline{V_\alpha}-;\beta)<S(\overline{V_\alpha})< G^*(\overline{V_\alpha};\beta)$, let $b(\beta)=S(\overline{V_\alpha})$, then $a_1(\beta)<\overline{V_\alpha}<a_2(\beta)$. Since $S$ is decreasing on $[0,M]$, $G^*$ is decreasing and right-continuous on $[0,\overline{V_\alpha})$ and $[\overline{V_\alpha},M]$, it follows that
\begin{equation*}
\begin{aligned}
 &S(x)\ge S(\overline{V_\alpha})=b(\beta)\ge G^*(a_1;\beta)\ge G^*(x;\beta),\quad \forall x\in[a_1(\beta),\overline{V_\alpha}),\\
 &S(x)\le S(\overline{V_\alpha})=b(\beta)\le G^*(a_2-;\beta)\le G^*(x;\beta),\quad \forall x\in[\overline{V_\alpha},a_2(\beta)).     
\end{aligned}  
\end{equation*}
Since both $S$ and $G^*$ are greater than $S_0$, by the definition of $S(\cdot;b(\beta))$, we have $S(x;b(\beta))\ge S_0(x)$ and $S(\cdot;b(\beta))\in \tilde{\mathcal S}(S_0,\epsilon)$. 
For the case of $S(\overline{V_\alpha})\ge G^*(\overline{V_\alpha})$, let $b(\beta)=G^*(\overline{V_\alpha};\beta)$. If  $S(\overline{V_\alpha})\le G^*(\overline{V_\alpha}-;\beta)$, let $b(\beta)=G^*(\overline{V_\alpha}-;\beta)$. The remainder of the proof follows analogously.

Since $\mathcal{O}_{b(\beta)}$ is compact and $\mathscr{L}(S(x;b(\beta));\beta)$ is continuous in $b$ (by the Lebesgue Dominated Convergence Theorem), the optimal $b^*(\beta)$ is attainable. In view of Theorem \ref{thm:sol-exist-inner} and the strong duality for the problem \eqref{Prob:inner}, the existence of $\beta^*$ follows from Theorem 1 in Section 8.3 of \cite{luenberger1997optimization}.\qed

\end{document}